\title{%
  Hop-Spanners for Geometric Intersection Graphs
  \thanks{Research supported in part by NSF award DMS-0701280 and the Summer Scholars Program at Tufts University.}
}
\author{%
Jonathan B. Conroy%
  \thanks{Department of Computer Science, Dartmouth College, Hanover, NH, USA. 
          Email: \texttt{Jonathan.Conroy.GR@dartmouth.edu}}
\and
Csaba D. T\'oth%
  \thanks{Department of Mathematics, California State University Northridge, Los Angeles, CA; and Department of Computer Science, Tufts University, Medford, MA, USA.
    Email: \texttt{csaba.toth@csun.edu}}
}
\date{}
\newtheorem{theorem}{Theorem}
\newtheorem{lemma}{Lemma}
\newtheorem{observation}{Observation}
\newtheorem{remark}{Remark}
\newtheorem{corollary}{Corollary}
\newcommand\R{\ensuremath{\mathbb{R}}}
\newcommand\N{\ensuremath{\mathbb{N}}}
\newcommand\eps{\ensuremath{\varepsilon}}
\newcommand\diam{\ensuremath{\mathrm{diam}}}
\newcommand\hull{\ensuremath{\mathrm{hull}}}
\newcommand\across{\ensuremath{A}}
\newcommand\inside{\ensuremath{\mathrm{In}}}
\newcommand\centerset{\inside}
\newcommand\bottomset{\ensuremath{B}}
\newcommand\topset{\ensuremath{T}}
\newcommand\slab{\ensuremath{\mathrm{slab}}}
\newcommand\dist{\ensuremath{\mathrm{dist}}}
\newcommand\rank{\ensuremath{\mathrm{rank}}}
\newcommand\rep{\ensuremath{\mathrm{Rep}}}
\begin{document}
\maketitle

\begin{abstract}
A $t$-spanner of a graph $G=(V,E)$ is a subgraph $H=(V,E')$ that contains a $uv$-path of length at most $t$ for every $uv\in E$. It is known that every $n$-vertex graph admits a $(2k-1)$-spanner with $O(n^{1+1/k})$ edges for $k\geq 1$. This bound is the best possible for $1\leq k\leq 9$ and is conjectured to be optimal due to Erd\H{o}s' girth conjecture.

We study $t$-spanners for $t\in \{2,3\}$ for geometric intersection graphs in the plane. These spanners are also known as \emph{$t$-hop spanners} to emphasize the use of graph-theoretic distances (as opposed to Euclidean distances between the geometric objects or their centers).  We obtain the following results: 
(1) Every $n$-vertex unit disk graph (UDG) admits a 2-hop spanner with $O(n)$ edges; improving upon the previous bound of $O(n\log n)$. 
(2) The intersection graph of $n$ axis-aligned fat rectangles admits a 2-hop spanner with $O(n\log n)$ edges, and this bound is tight up to a factor of $\log \log n$.
(3) The intersection graph of $n$ fat convex bodies in the plane admits a 3-hop spanner with $O(n\log n)$ edges.
(4) The intersection graph of $n$ axis-aligned rectangles admits a 3-hop spanner with $O(n\log^2 n)$ edges.
\end{abstract}

\section{Introduction}
\label{sec:intro}

Graph spanners were introduced by Awerbuch~\cite{Awerbuch85} and by Peleg and Sch\"affer~\cite{peleg1989graph}. A spanner of a graph $G$ is a spanning subgraph $H$ with bounded distortion between graph distances in $G$ and $H$. For an edge-weighted graph $G=(V,E)$, a spanning subgraph $H$ is a $t$-spanner if $d_H(u,v)\leq t\cdot d_G(u,v)$ for all $u,v\in V$, where $d_H$ and $d_G$ are the shortest-path distances in $H$ and $G$, respectively. The parameter $t\geq 1$ is the \emph{stretch factor} of the spanner. A long line of research is devoted to finding spanners with desirable features, which minimize the number of edges, the weight, or the diameter; refer to a recent survey by Ahmed et al.~\cite{AhmedBSHJKS20}. 

In abstract graphs, all edges have unit weight. In a graph $G$ of girth $g$, any proper subgraph $H$ has stretch at least $g-1$. In particular, a complete  bipartite graph does not have any subquadratic size $t$-spanner for $t<3$. The celebrated greedy spanner by Alth\"ofer et al.~\cite{althofer1993sparse} finds, for every $n$-vertex graph and parameter $t=2k-1$, a $t$-spanner with $O(n^{1+\frac{1}{k}})$ edges; and this bound matches the lower bound from the Erd\H{o}s girth conjecture~\cite{Erdos64extremalproblems}. 

\paragraph{Geometric Setting: Euclidean and Metric Spanners.}
Given a set $P$ of $n$ points in a metric space $(M,\delta)$, consider the complete graph $G$ on $P$ where the weight of an edge $uv$ is the distance $\delta(u,v)$. If $M$ has doubling dimension $d$ (e.g.,  Euclidean spaces of constant dimension) the greedy algorithm by Alth\"ofer et al.~\cite{althofer1993sparse} constructs an $(1+\eps)$-spanner with $\eps^{-O(d)}n$ edges~\cite{le2019truly}.
Specifically, every set of $n$ points in $\R^d$ admits a $(1+\eps)$-spanner with $O(\eps^{-d}n)$ edges, and this bound is the best possible~\cite{le2019truly}.

Gao and Zhang~\cite{GaoZ05} considered data structures for approximating the \emph{weighted} distances in \emph{unit disk graphs} (\emph{UDG}), which are intersection graphs of unit disks in $\R^2$. Importantly, the \emph{weight} of an edge is the Euclidean distance between the centers. They designed a well-separated pair-decomposition (WSPD) of size $O(n\log n)$ for an $n$-vertex UDG. 
For the unit ball graphs in doubling dimensions, Eppstein and Khodabandeh~\cite{abs-2106-15234} construct $(1+\eps)$-spanners which also have bounded degree and total weight $O(w(MST))$, generalizing earlier work in $\R^d$ by Damian et al.~\cite{DamianPP06}; see also~\cite{LS-unified1}.
F{\"{u}}rer and Kasiviswanathan~\cite{FurerK12} construct a $(1+\eps)$-spanner with $O(\eps^{-2}n)$ edges for the intersection graph of $n$ disks of arbitrary radii in $\R^2$. 
%

\paragraph{Hop-Spanners for Geometric Intersection Graphs.} 
Unit disk graphs (UDG) were the first geometric intersection graphs for which the hop distance was studied (i.e., the unweighted version), motivated by applications in wireless communication. Spanners in this setting are often called \emph{hop-spanners} to emphasize the use of  graph-theoretic distance (i.e., hop distance), as opposed to the Euclidean distance between centers.


For an $n$-vertex UDG $G$, Yan et al.~\cite{YanXD12} constructed 
a subgraph $H$ with $O(n\log n)$ edges and  $d_H(u,v)\leq 3d_G(u,v)+12$, which is a $15$-hop spanner. 
Catusse et al.~\cite{CatusseCV10} showed that every $n$-vertex UDG admits a 5-hop spanner with at most $10n$ edges (as well as a noncrossing $O(1)$-spanner with $O(n)$ edges). Biniaz~\cite{biniaz2020plane} improved this bound to $9n$. Dumitrescu et al.~\cite{dumitrescu2021sparse} recently showed that every $n$-vertex UDG admits a 5-hop spanner with at most $5.5n$ edges, a 3-hop spanner with at most $11n$ edges, and a 2-hop spanner with $O(n\log n)$ edges. In this paper, we improve the bound on the size of 2-hop spanners to $O(n)$, and initiate the study of minimum 2-hop spanners of other classes of geometric intersection graphs. 

\paragraph{Our Contributions.}
\begin{enumerate}
    \item Every unit disk graph on $n$ vertices admits a 2-hop spanner with $O(n)$ edges (Theorem~\ref{thm:udg} in Section~\ref{sec:udg}). This bound is the best possible, and it generalizes to intersection graphs of translates of a convex body in the plane (Theorem~\ref{thm:translates} in Section~\ref{ssec:translates}).
    \item The intersection graph of $n$ axis-aligned fat rectangles in $\R^2$  admits a 2-hop spanner with $O(n\log n)$ edges (Theorem~\ref{thm:sq} in Section~\ref{sec:sq}). This bound is almost tight: We establish a lower bound of $\Omega(n\log n / \log \log n)$ for the size of 2-hop spanners in the intersection graph of $n$ homothets of any convex body in the plane (Theorem~\ref{thm:lb} in Section~\ref{sec:lb}).
    \item The intersection graph of $n$ fat convex bodies in $\R^2$ admits a 3-hop spanner with $O(n\log n)$ edges    (Theorem~\ref{thm:fat} in Section~\ref{sec:fat}). 
    \item The intersection graph of $n$ axis-aligned rectangles admits a 3-hop spanner with $O(n\log^2 n)$ edges (Theorem~\ref{thm:rectangles} in Section~\ref{sec:rectangles}).
\end{enumerate}

\paragraph{Related Previous Work.}
While our upper bounds are constructive, we do not attempt to minimize the number of edges in a $k$-spanner for a given graph. The \emph{minimum $k$-spanner} problem is to find a $k$-spanner $H$ of a given graph $G$ with the minimum number of edges. This problem is NP-hard~\cite{Cai94,peleg1989graph} for all $2\leq k \leq o(\log n)$;
already for planar graphs~\cite{BrandesH97,Kobayashi18a}. 
It is also hard to approximate up to a factor of $2^{(\log^{1-\eps}n)/k}$, 
for $3\leq \log^{1-2\eps}n$ and $\eps>0$, assuming $NP \not\subseteq BPTIME(2^{\mathrm{polylog}(n)})$~\cite{DinitzKR16}; see also \cite{DodisK99,ElkinP07,Kortsarz01}. 
On the positive side, Peleg and Krtsarz~\cite{KortsarzP98} gave an $O(\log(m/n))$-approximation for the minimum $2$-spanner problem for graphs $G$ with $n$ vertices and $m$ edges; see also~\cite{Censor-HillelD21}. There is an $O(n)$-time algorithm for the minimum 2-spanner problem over graphs of maximum degree at most four~\cite{CaiK94}.

Classical graph optimization problems (which are often hard and hard to approximate) typically admit better approximation ratios or are fixed-parameter tractable (FPT) for geometric intersection graphs. Three main strategies have been developed to take advantage of geometry: (i) Divide-and-conquer strategies using separators and dynamic programming~\cite{An021,BasteT22,BergBKMZ20,CardinalIK21,FominLPSZ19,FominLS12,FoxP14,Lee17};
(ii) Local search algorithms~\cite{BusGMR17,ChanH12,JartouxM22,MustafaR10}; and
(iii) Bounded VC-dimension and the $\eps$-net theory~\cite{AgarwalP20,AronovES10,BusGMR16,PT13,MustafaDG18,mustafa2017epsilon}. It is unclear whether separators and local search help find small $k$-hop spanners. Small hitting sets and $\eps$-nets help finding large cliques in geometric intersection graphs, and this is a tool also used by Dumitrescu et al.~\cite{dumitrescu2021sparse} to construct 2-hop spanners with $O(n \log n)$ edges for UDGs. 

\paragraph{Relation to Edge Clique and Biclique Covers.} 
A 2-hop spanner $H$ of a graph $G=(V,E)$ is union of stars $\mathcal{S}$ such that every edge in $E$ is induced by a star in $\mathcal{S}$. Thus the minimum 2-spanner problem is equivalent to minimizing the sum of sizes of stars in $\mathcal{S}$. As such, the 2-spanner problem is similar to the \emph{minimum dominating set} and \emph{minimum edge-clique cover} problems~\cite{Erdos66EdgeCliqueCover,MichaelQ06}. 
In particular, the size of a 2-hop spanner is bounded above by the minimum \emph{weighted} edge clique cover, where the weight of a clique $K_t$ is $t-1$ (i.e., the size of a spanning star). 
Recently, de Berg et al.~\cite{BergBKMZ20} proposed a divide-and-conquer framework for optimization problems on geometric intersection graphs. Their main technical tool is a weighted separator theorem, where the weight of a separator is $W=\sum_i w(t_i)$ for a decomposition of the subgraph induced by the separator into cliques $K_{t_i}$, and sublinear weights $w(t)=o(t)$. For 2-hop spanners, however, each clique $K_t$ requires a star with $t-1$ edges, so the weight function would be linear $w(t)=t-1$.

Every biclique (i.e., complete bipartite graph) $K_{s,t}$ admits a 3-hop spanner with $s+t-1$ edges (as a union of two stars). Hence an \emph{edge biclique cover}, with total weight $W$ and weight function $w(K_{s,t})=s+t$, yields a 3-hop spanners with at most $W$ edges. Every $n$-vertex graph has an edge biclique cover of weight $O(n^2/\log n)$, and this bound is tight~\cite{ErdosP97,Tuza84}. 
(In contrast, every $n$-vertex graph has a 3-hop spanner with $O(n^{3/2})$ edges~\cite{althofer1993sparse}.) 
Better bounds are known for semi-algebraic graphs, where the edges are defined in terms of semi-algebraic relations of bounded degree. For instance, an incidence graph between $n$ points and $m$ hyperplanes in $\R^d$ admits an edge biclique cover of weight $O((mn)^{1-1/d}+m+n)$~\cite{ApfelbaumS07,BrassK03,SharirS17}.
Recently, Do~\cite{Do19} proved that a semi-algebraic bipartite graph on $m+n$ vertices, where the vertices are points in $\R^{d_1}$ and $\R^{d_2}$, resp., has an edge biclique cover of weight $O_\eps(m^{\frac{d_1d_2-d_2}{d_1d_2-1}+\eps}n^{\frac{d_1d_2-d_1}{d_1d_2-1}+\eps}+m^{1+\eps}+n^{1+\eps})$ for any $\eps>0$. For $d_1+d_2\leq 4$, this result yields nontrivial 3-hop spanners. For a UDG with $m=n$ unit disks,  $d_1=d_2=2$ gives a 3-hop spanner with $W\leq O_\eps(n^{4/3+\eps})$ edges. But for the intersection graph of arbitrary disks in $\R^2$,  $d_1=d_2=3$ gives $O_\eps(n^{3/2+\eps})$, which is worse than the default $O(n^{3/2})$ guaranteed by the greedy algorithm~\cite{althofer1993sparse}.

\paragraph{Representation.} 
Our algorithms assume a geometric representation of a given  intersection graphs (it is NP-hard to recognize UDGs~\cite{BreuK98}, disk graphs~\cite{HlinenyK01,McDiarmidM13}, or box graphs~\cite{Kratochvil94}). Given a set of geometric objects of bounded description complexity, the intersection graph and the hop distances can easily be computed in polynomial time. Chan and Skrepetos~\cite{ChanS19} designed near-quadratic time algorithms to compute all pairwise hop-distances in the intersection graph of $n$ geometric objects (e.g., balls or hyperrectangles in $\R^d$). In a UDG, the hop-distance between a given pair of disks can be computed in optimal $O(n\log n)$ time~\cite{CabelloJ15}.
%
 
 \paragraph{Definitions.} 
 A \emph{body} in Euclidean plane is a compact set with nonempty interior. A set $S\subset \mathbb{R}^2$ is \emph{convex} if for every $a,b\in S$, the line segment $ab$ is contained in $S$. 
The \emph{fatness} of a set $s\subset \R^2$ is the ratio $\varrho_{\mathrm{out}}/\varrho_{\mathrm{in}}$ between the radii of a minimum enclosing disk and a maximum inscribed disk of $s$. A collection $S$ of geometric objects is $\alpha$-fat if the fatness of every $s \in S$ is at most $\alpha$; and it is \emph{fat}, for short, if it is $\alpha$-fat for some $\alpha\in O(1)$. 
An \emph{axis-aliened rectangle} in the plane is the Cartesian product of two closed intervals $R=[a,b]\times [c,d]$, where the \emph{width} of $R$ is $b-a$ and its \emph{height} is $d-c$. In general, the \emph{width} (resp., \emph{height}) of a bounded set $S\subset \mathbb{R}^2$ is the width (resp., height) of its minimum enclosing axis-aligned rectangle. 

\section{Two-Hop Spanners for Unit Disk Graphs}
\label{sec:udg}

In this section, we prove that every $n$-vertex UDG has a 2-hop spanner with $O(n)$ edges. The proof hinges on a key lemma, Lemma~\ref{lem:bipartite}, in a bipartite setting.
A unit disk is a closed disk of unit diameter in $\R^2$; two unit disks intersect if and only if their centers are at distance at most 1 apart. 
%
%
For finite sets $A,B\subset \R^2$, let $U(A,B)$ denote the  unit disk graph on $A \cup B$, and let $G(A,B)$ denote the bipartite subgraph of $U(A,B)$ of all edges between $A$ and $B$.

\begin{restatable}{lemma}{bipartite}
\label{lem:bipartite}
Let $P = A \cup B$ be a set of $n$ points in the plane such that $\diam(A) \le 1$, $\diam(B) \le 1$, and $A$ (resp., $B$) is above (resp., below) the $x$-axis. Then there is a subgraph $H$ of $U(A,B)$ with at most $5n$ edges such that for every edge $ab$ of $G(A,B)$, $H$ contains a path of length at most 2 between $a$ and $b$.
\end{restatable}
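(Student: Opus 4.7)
The plan is to construct $H$ as a union of a constant number of sparse substructures, leveraging the fact that $\diam(A),\diam(B)\le 1$ makes both $A$ and $B$ cliques in $U(A,B)$, so spanning stars (or chains) inside each of $A$ and $B$ are essentially free and we only need linearly many bipartite ``shortcut'' edges so that every bipartite edge has a 2-hop witness. Let $a^\star\in A$ be the point with smallest $y$-coordinate and $b^\star\in B$ the point with largest $y$-coordinate; for each $b\in B$, let $a_L(b), a_R(b)\in A$ be the leftmost and rightmost $A$-neighbors of $b$ (by $x$-coordinate), and symmetrically define $b_L(a),b_R(a)\in B$ for each $a\in A$.

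I would put into $H$: (i) the spanning stars at $a^\star$ inside $A$ and at $b^\star$ inside $B$, contributing $|A|+|B|-2$ edges; (ii) the primary bipartite edges $(a^\star, b)$ for every $b$ with $|a^\star b|\le 1$ and $(a, b^\star)$ for every $a$ with $|ab^\star|\le 1$, at most $|A|+|B|$ edges; (iii) for each $b\in B$ the two edges $(a_L(b),b),(a_R(b),b)$ and, symmetrically, $(a,b_L(a)),(a,b_R(a))$ for each $a\in A$, at most $2(|A|+|B|)$ edges. The running total is $\le 4n-2$, leaving slack for an additional structural layer (for example, an $x$-ordered chain inside $A$ and inside $B$, adding $|A|+|B|-2$ further edges) to handle residual cases while staying within the $5n$ bound. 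To verify the 2-hop property I case-split on $(a,b)\in G(A,B)$: if $a\in\{a_L(b),a_R(b)\}$ or $b\in\{b_L(a),b_R(a)\}$, the edge itself is in $H$ by (iii); if $|a^\star b|\le 1$ (resp.\ $|ab^\star|\le 1$), layers (i)+(ii) yield the 2-hop path $a-a^\star-b$ (resp.\ $a-b^\star-b$).

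The main obstacle is the \emph{residual} case: $|a^\star b|>1$, $|ab^\star|>1$, with $a$ strictly middle in $N(b)\cap A$ (and possibly $b$ strictly middle in $N(a)\cap B$). The key geometric lemma I would establish is that in every residual configuration, $a^\star_x\notin[a_L(b)_x, a_R(b)_x]$: if $a^\star_x$ lay in that interval, then because $a^\star$ has the smallest $y$-coordinate in $A$ and $b$ lies below the $x$-axis, a pointwise-domination argument on squared $x$- and $y$-differences gives $|a^\star b|^2\le\max(|a_L(b)b|^2,|a_R(b)b|^2)\le 1$, contradicting the residual hypothesis; a symmetric statement holds for $b^\star$ relative to $[b_L(a)_x, b_R(a)_x]$. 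Using these outside-position facts together with the diameter bound $|aa^\star|\le 1$ and the edge inequality $|ab|\le 1$, I would argue that the immediate $x$-predecessor or $x$-successor of $a$ inside $A$ must itself belong to $N(b)$, so that a chain edge inside $A$ combined with layer (iii) produces the 2-hop path $a-\text{neighbor}-b$; a residual edge that also has $b$ doubly middle is routed analogously through the $B$-side chain. Carrying out this final case analysis rigorously — ensuring that no ``doubly middle'' residual edge escapes both the $A$-side and $B$-side routes — is the technically most involved step and the main obstacle of the proof.
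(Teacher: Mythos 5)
Your approach is genuinely different from the paper's, and the key geometric observation is sound: if $a^\star$ (the lowest point of $A$) has $a^\star_x\in[a_L(b)_x,a_R(b)_x]$, then comparing $x$- and $y$-components coordinate-wise against $a_L(b)$ or $a_R(b)$ (whichever lies on the same side of $b_x$) indeed yields $|a^\star b|\le 1$. However, there is a real gap in the residual case, which you yourself flag as unfinished. The proposed repair---``the chain neighbor $a'$ of $a$ is in $N(b)$, then use layer~(iii)''---does not close it. In the residual case, the only $b$-incident bipartite edges in $H$ are $(a_L(b),b)$ and $(a_R(b),b)$ (the $a^\star$ edge is absent, and $(a',b_L(a'))$, $(a',b_R(a'))$ only help if $b$ happens to be extremal in $N(a')\cap B$, which is not forced). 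So a 2-hop path $a\,\text{--}\,a'\,\text{--}\,b$ via a chain edge exists in $H$ only if $a'\in\{a_L(b),a_R(b)\}$. But the chain is the $x$-order on \emph{all} of $A$, not on $N(b)\cap A$: between $a_L(b)$ and $a_R(b)$ there can be many $A$-points outside $N(b)$ (they can sit high above the circular cap $D_b\cap\{y\ge 0\}$), so the immediate chain predecessor/successor of $a$ need neither equal $a_L(b)$ or $a_R(b)$, nor even belong to $N(b)$. No constraint from $\diam(A)\le 1$, $|ab|\le 1$, or $a^\star$'s outside position prevents this.

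For contrast, the paper avoids this difficulty entirely with a \emph{recursive} construction rather than a fixed constant number of layers. It introduces the ``unit-disk hull'' $\hull(B)$, picks a near-maximum-degree point $p\in P$, defines a set $I(p)$ of ``interior'' second-neighbors, and proves the containment $N(I(p))\subset N(p)$ (Lemma~\ref{lem:2ndneighbor}); this lets it remove $W=N(p)\cup I(p)\cup\{p\}$ with a single star of $\le 5|W|$ edges that covers all edges between $W$ and $N(W)$, then recurse on $P\setminus W$. That containment result is the piece your plan is missing: some analogue of it is needed to certify that \emph{every} middle point's bipartite neighborhood is covered by a small number of designated hubs. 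Without it, your layers~(i)--(iv) provide $\le 5n$ edges but the residual edges may have no 2-hop witness in $H$.
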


We construct the graph $H$ in Lemma~\ref{lem:bipartite} incrementally: In each step, we find a subset $W\subset A\cup B$, together with a subgraph $H(W)$ of at most $5|W|$ edges that contains a $uv$-path of length at most 2 for every edge $uv$ between $u\in W$ and $v\in N(W)$ (cf.\ Lemma~\ref{lem:sparse}); and then recurse on $P\setminus W$. We show that $\bigcup_W H(W)$ is a 2-hop spanner for $U(A,B)$. 

Section~\ref{ssec:hulls} establishes a technical lemma about the interaction pattern of disks in the bipartite setting, which may be of independent interest. One step of the recursion is presented in Section~\ref{ssec:OneStep}. 
The proof of Lemma~\ref{lem:bipartite} is in Section~\ref{ssec:bipartite}. 
Lemma~\ref{lem:bipartite}, combined with previous work~\cite{biniaz2020plane,CatusseCV10,dumitrescu2021sparse} that reduced the problem to a bipartite setting, implies the main result of this section.

\begin{theorem}
\label{thm:udg}
Every $n$-vertex unit disk graph has a $2$-hop spanner with $O(n)$ edges.
\end{theorem}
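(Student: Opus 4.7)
My plan is to deduce the theorem from Lemma~\ref{lem:bipartite} via the standard grid reduction used in prior work~\cite{biniaz2020plane,CatusseCV10,dumitrescu2021sparse}. First, I would overlay an axis-aligned grid with cells of side length $1/2$ on the centers of the unit disks, assigning each center to a unique cell (breaking ties on cell boundaries arbitrarily). Since any cell has diameter $\sqrt{2}/2<1$, all unit disks whose centers lie in a common cell $C$ form a clique in the UDG; I would span this clique by a single star on $C\cap P$ rooted at an arbitrary vertex, contributing $|C\cap P|-1$ edges. Summing over all cells yields at most $n$ within-cell edges, and any within-cell UDG edge $uv$ is trivially covered by a path of length at most $2$ through the star center.

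Second, I would handle between-cell edges. Two unit disks with centers in distinct cells $C_1,C_2$ can intersect only if $C_1$ and $C_2$ are within $L_\infty$-distance at most $1$, so each cell interacts with only $O(1)$ other cells. For each such ordered pair $(C_1,C_2)$, I would choose a line that separates $C_1$ from $C_2$ (the common edge for side-adjacent cells; a line through the shared corner for diagonally adjacent cells), and apply an isometry bringing this line to the $x$-axis with $A := C_1\cap P$ above and $B := C_2\cap P$ below. Both sets have diameter at most the cell diameter $< 1$, so Lemma~\ref{lem:bipartite} applies and returns a subgraph on $A\cup B$ with at most $5(|A|+|B|)$ edges covering every UDG edge between $A$ and $B$ by a path of length at most $2$.

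Finally, I would aggregate. Since each cell participates in $O(1)$ ordered neighbor pairs, summing $5(|C_1\cap P|+|C_2\cap P|)$ over all such pairs telescopes to $O(n)$. The union $H$ of the within-cell stars and the bipartite spanners is a subgraph of the UDG in which every edge $uv$ (either within a cell or between neighboring cells) has a length-$2$ substitute path, establishing the $O(n)$-edge $2$-hop spanner. The matching lower bound of $\Omega(n)$ is immediate since the graph itself may already have $\Omega(n)$ edges.

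The only step that requires care beyond routine bookkeeping is the treatment of diagonally adjacent cells: one must verify that the two center sets really do lie on opposite closed half-planes of some line (easy, since the open cells are disjoint convex sets and can be separated through their unique shared corner), and consistently assign boundary points to satisfy the \emph{above}/\emph{below} hypothesis of Lemma~\ref{lem:bipartite}. Everything else reduces to collecting the constants delivered by the grid argument and the bipartite lemma.
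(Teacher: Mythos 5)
Your argument follows the same route as the paper's: decompose the plane into cells of diameter less than~$1$ (the paper uses a hexagonal tiling of diameter~$1$, you use a square grid of side $1/2$), span each nonempty cell with a star for intra-cell edges, and invoke Lemma~\ref{lem:bipartite} for each pair of cells close enough to host a UDG edge. One small gap: with side-$1/2$ cells, two cells whose disks can intersect need not be edge- or corner-adjacent (two centers within Euclidean distance~$1$ may be separated by up to two cell widths), so your case analysis for choosing the separating line is incomplete as stated; the fix is immediate, since any two disjoint cells are separated by a line strictly between them, and the paper sidesteps the issue entirely by just counting the $18$ hexagonal tiles within unit distance without distinguishing adjacency types.
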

\begin{proof}
Let $P$ be a set of centers of $n$ unit disks in the plane, and let $G=(P,E)$ be the UDG on $P$. Consider a tiling of the plane with regular hexagons of diameter 1, where each point in $P$ lies in the interior of a tile. A tile $\tau$ is \emph{nonempty} if $\tau\cap P\neq \emptyset$. Clearly $\diam(P\cap \tau)\leq \diam(\tau)= 1$. For each nonempty tile $\tau$, let $S_\tau$ be a spanning star on $P\cap \tau$. 

For each pair of tiles, $\sigma$ and $\tau$, at distance at most $1$ apart, 
Lemma~\ref{lem:bipartite} yields a graph $H_{\sigma,\tau}:=G(A,B)\subset G$ for $A=P\cap \sigma$ and $B=P\cap \tau$ with $5(|P\cap \sigma|+|P\cap \tau|)$ edges. Let $H$ be the union of all stars $S_\tau$ and all graphs $H_{\sigma,\tau}$. It is easily checked that $H$ is a 2-hop spanner of $G$: Indeed, let $uv\in E$. If $u$ and $v$ are in the same tile $\tau$, then $S_\tau$ contains $uv$ or a $uv$-path of length 2. Otherwise $u$ and $v$ are in different tiles, say $\sigma$ and $\tau$, at distance at most 1, and $H_{\sigma,\tau}$ contains $uv$ or a $uv$-path of length 2.

It remains to bound the number of edges in $H$. The union of all stars $S_\tau$ is a spanning forest on $P$, which has at most $n-1$ edges. Every tile $\sigma$ is within unit distance from 18 other tiles~\cite{biniaz2020plane}. The total number of edges in $H_{\sigma,\tau}$ over all pairs of tiles is 
$\sum_{\sigma,\tau} 5(|P\cap \sigma|+|P\cap \tau|) \leq 18 \sum_\sigma 5(|P\cap \sigma|) =90n$. Overall, $H$ has less than $91n$ edges, as required. 
\end{proof}

\subsection{Properties of Unit-Disk Hulls}
\label{ssec:hulls}

Let $A \subset \R^2$ be a finite set of points above the $x$-axis. Let $\mathcal{D}$ be the set of all unit disks with centers on or below the $x$-axis. Let $M(A)$ be the union of all unit disks $D \in \mathcal{D}$ such that $A \cap \mathrm{int}(D) = \emptyset$, and let $\hull(A) = \R^2 \setminus \mathrm{int}(M(A))$; see Fig.~\ref{fig:hull}. The hull of unit disks $\hull(A)$ was introduced in \cite[Section~3]{dumitrescu2021sparse}, as an analogue of  of $\alpha$-shapes~\cite{EdelsbrunnerKS83} and the convex hull of a set $S\subset \R^2$. 

Let us elaborate on the analogy to convex hulls. Recall that the convex hull of a point set $A\subset \R^2$ is the complement of the union of all halfplans that are disjoint from $A$; and it can be constructed by continuously rotating a line tangent to $A$ (e.g., using the gift wrapping algorithm~\cite{Jarvis73}). We can replace halfplanes with unit disks centered on or below the $x$-axis: Imagine that a unit disk $D$ with center on the $x$-axis moves continuously from left to right, but whenever $D$ hits a point $a\in A$, then the center of $D$ rotates counterclockwise around $a$ (and dips below the $x$-axis) until it hits another point in $A$ or returns to the $x$-axis. Such a moving disk $D$ would roll along $\partial \hull(A)$.

\begin{figure}[htbp]
 \centering
 \includegraphics[width=.95\textwidth]{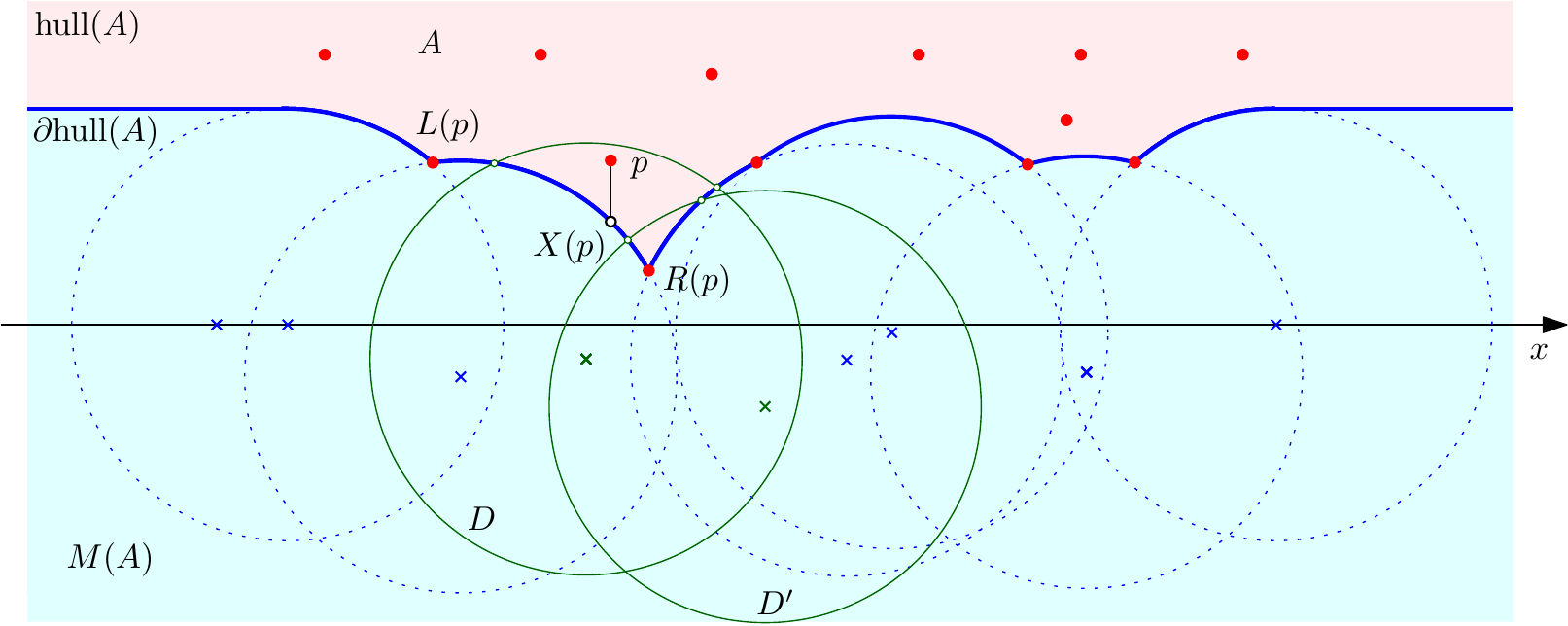}
 \caption{A point set $A$ (red), region $M(A)$ (light blue), and $\hull(A)$ (pink). A point $p\in A$ in a disk $D\in \mathcal{D}$, its vertical projection $X(p)\in \partial \mathrm{hull}(A)$, and the two adjacent points $L(p),R(p)\in A$.}
    \label{fig:hull}
\end{figure}

We show below (cf.\ Lemma~\ref{lem:properties}\eqref{p:i}) that $\partial \hull(A)$ is an $x$-monotone curve. For every $p \in \R^2$ above the $x$-axis, let $X(p)$ denote its vertical projection onto $\partial \hull(A)$; this is well defined because $\partial \hull(A)$ is $x$-monotone. Let $L(p)$ and $R(p)$ denote the points in $A \cap \partial \hull(A)$ immediately to the left and right of $X(p)$ if such a point exists; that is, $L(p)$ (resp., $R(p)$) is the point in $A \cap \partial \hull(A)$ with the largest (resp., smallest) $x$-coordinate that still satisfies $L(p)_x \le X(p)_x$ (resp., $R(p)_x \ge X(p)_x$).

\begin{lemma}
\label{lem:properties}
For every finite set $A \subset \R^2$ above the $x$-axis, the following holds:
\begin{enumerate}
\item\label{p:i} $\partial \hull(A)$ is an $x$-monotone curve.
\item\label{p:ii} For every $D \in \mathcal{D}$, the intersection $D \cap\, \partial \hull(A)$ is connected (possibly empty).
\item\label{p:iii} For every $D \in \mathcal{D}$ and every $p\in A$, if $p\in D$, then $D$ contains $X(p)$. Further, $L(p)$ or $R(p)$ exists, and $D$ contains $L(p)$ or $R(p)$ (possibly both).
\item\label{p:iv} Let $D, D' \in \mathcal{D}$. Suppose that $\partial D$ intersects $\partial \hull(A)$ at points with $x$-coordinates $x_1$ and $x_2$, and $\partial D'$ intersects $\partial \hull(A)$ at points with $x$-coordinates $x_1'$ and $x_2'$. If $x_1 \le x_1' \le x_2' \le x_2$, then $D' \cap \hull(A) \subset D \cap \hull(A)$.
\end{enumerate}
\end{lemma}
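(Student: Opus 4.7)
The plan is to prove (i), (ii), (iv), (iii) in that order, each building on the previous. Two elementary observations about $\mathcal{D}$ drive the entire argument. First (\emph{downward-$y$ closedness}), every $D \in \mathcal{D}$ with center $(c_x, c_y)$ satisfies: if $(x_0, y_1) \in D$ with $y_1 \ge 0$, then $(x_0, y') \in D$ for all $0 \le y' \le y_1$, because $|y' - c_y| \le |y_1 - c_y|$ when $c_y \le 0 \le y' \le y_1$. Second (\emph{at most two crossings}), the upper-semicircle graphs $y_+^D(x) = c_y + \sqrt{1/4 - (x - c_x)^2}$ of any two disks in $\mathcal{D}$ coincide in at most two points, since two unit circles meet in at most two points.

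For (i), set $\mathcal{F} = \{D'' \in \mathcal{D} : A \cap \mathrm{int}(D'') = \emptyset\}$, so $\mathrm{int}(M(A)) = \bigcup_{D'' \in \mathcal{F}} \mathrm{int}(D'')$. Applying the first observation to any $D'' \in \mathcal{F}$ witnessing $(x_0, y_1) \in \mathrm{int}(M(A))$ gives $(x_0, y) \in \mathrm{int}(M(A))$ for every $0 \le y \le y_1$. Together with the fact that every $(x_0, y)$ with $y \le -1/2$ lies in $\mathrm{int}(M(A))$ (via a disk entirely below the $x$-axis), every vertical slice $\hull(A) \cap \{x = x_0\}$ is an upward-closed ray $[f(x_0), +\infty)$, so $\partial \hull(A)$ is the graph of the continuous function $f$ and hence $x$-monotone.

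For (ii), by (i) the bijection $x \mapsto (x, f(x))$ identifies $D \cap \partial \hull(A)$ with $S := \{x : f(x) \le y_+^D(x)\}$ (the lower-bound condition $f(x) \ge y_-^D(x)$ is automatic since $y_-^D(x) \le c_y \le 0 \le f(x)$). To show $S$ is connected, suppose for contradiction that it has a gap: $\xi_1 < \xi_2$ with $y_+^D(\xi_i) = f(\xi_i)$ and $f > y_+^D$ strictly on $(\xi_1, \xi_2)$. Picking a supporting disk $D_i^* \in \mathcal{F}$ at each $(\xi_i, f(\xi_i))$ makes the graphs of $y_+^{D_i^*}$ and $y_+^D$ coincide at $\xi_i$; together with the envelope inequality $y_+^{D_i^*} \le f$ everywhere and $f > y_+^D$ on $(\xi_1, \xi_2)$, a careful local sign analysis around each $\xi_i$ forces a third coincidence of the relevant semicircle graphs, contradicting the two-crossings bound. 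This sign analysis is the main obstacle.

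Granting (ii), (iv) reduces to showing $y_+^{D'}(x) \le y_+^D(x)$ on $[x_1', x_2']$: on the flanks $(x_1, x_1')$ and $(x_2', x_2)$ one has $y_+^D > f \ge y_+^{D'}$ strictly, so an assumed violation $y_+^{D'}(x^*) > y_+^D(x^*)$ at some interior $x^*$ would, via the intermediate value theorem combined with the strict flank behavior, force more than two coincidences between $y_+^D$ and $y_+^{D'}$, contradicting the two-crossings bound. Finally, for (iii), the first observation immediately gives $X(p) \in D$ (since $p \in A \subseteq \hull(A)$ forces $0 \le f(p_x) \le p_y$, so downward-$y$ closedness applied to $p \in D$ yields $X(p) \in D$). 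If $p = X(p)$ then $L(p) = R(p) = p \in D$; otherwise $X(p)$ lies interior to an arc $\beta \subset \partial D^*$ of $\partial \hull(A)$ with endpoints $L(p), R(p) \in A$. Assuming $L(p), R(p) \notin D$ puts the $x$-range of $D \cap \partial \hull(A)$ strictly inside $(L(p)_x, R(p)_x)$; applying (iv) with the outer disk $D^*$ and inner disk $D$ then yields $D \cap \hull(A) \subseteq D^* \cap \hull(A) = \beta$, forcing $p \in D \cap A \subseteq \beta \cap A = \{L(p), R(p)\}$, a contradiction.
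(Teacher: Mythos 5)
Your proposal pursues a genuinely different route from the paper: where the paper offloads items \eqref{p:i}, \eqref{p:ii}, and the key auxiliary fact for \eqref{p:iii} to [Lemma~4, dumitrescu2021sparse], you attempt a self-contained argument built on the function description $\partial\hull(A)=\{(x,f(x))\}$ and comparisons of upper-semicircle graphs. That is a worthwhile attempt, but as written it has three genuine gaps, one of which you flag yourself.

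\textbf{Item \eqref{p:ii} is unproven.} You reduce connectedness of $D\cap\partial\hull(A)$ to showing $\{x:f(x)\leq y_+^D(x)\}$ is an interval, and sketch picking supporting disks $D_i^*$ at the gap endpoints $\xi_i$ and ``a careful local sign analysis'' forcing a third coincidence. You explicitly call this the main obstacle, and I agree it is one: the envelope inequality $y_+^{D_i^*}\leq f$ says nothing about how $y_+^{D_i^*}$ compares to $y_+^D$ inside $(\xi_1,\xi_2)$, and tangency of $D_i^*$ to $\partial\hull(A)$ at $\xi_i$ does not constrain the sign of $y_+^{D_i^*}-y_+^D$ near $\xi_i$ because $D$ need not be tangent to $\partial\hull(A)$ there. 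No contradiction falls out from the $\leq 2$-coincidence bound alone. This item is essential downstream, so the whole chain is currently unfinished.

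\textbf{The count in \eqref{p:iv} doesn't close.} Your IVT argument produces exactly two coincidences of $y_+^D$ and $y_+^{D'}$, one in $(x_1',x^*)$ and one in $(x^*,x_2')$, yet you claim ``more than two'' to contradict your stated two-crossings bound. With the fact as you stated it (two unit circles meet in at most two points) there is no contradiction. What actually closes the argument --- and what the paper invokes from [Lemma~3, dumitrescu2021sparse] --- is the sharper statement that two circles from $\mathcal{D}$ meet \emph{at most once above the $x$-axis}: since the midpoint of their centers has $y\leq 0$ and the two intersection points are symmetric about it, at most one can have $y>0$. Your two coincidences occur at heights $y_+^D\geq f>0$, so with the sharper fact two suffice. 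You should replace your ``at most two crossings'' observation with this one.

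\textbf{Item \eqref{p:iii} leans on an unproven structural claim.} You posit a single supporting disk $D^*\in\mathcal{F}$ whose boundary realizes the entire arc $\beta$ of $\partial\hull(A)$ between $L(p)$ and $R(p)$, so that $D^*\cap\hull(A)=\beta$. This amounts to the assertion that between consecutive points of $A\cap\partial\hull(A)$ the boundary is a single circular arc of one $\mathcal{D}$-disk. That is true, but it requires an argument (smoothness of $\partial M(A)$ away from $A$, and that tangent unit circles coincide); it is precisely the kind of structural fact the paper gets from the cited lemma (``every disk in $\mathcal{D}$ containing a point of $A$ also contains a point of $A\cap\partial\hull(A)$'') and then combines with connectedness. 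Finally, a small fix for \eqref{p:i}: you should extend the ``below $-1/2$'' observation to all of $\{y\leq 0\}$ (e.g.\ using disks centered just above $y=-1/2$, which miss $A$ because $\min_{a\in A}a_y>0$), otherwise the vertical slice of $\mathrm{int}(M(A))$ could a priori have a gap in $(-1/2,0)$.
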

%
\begin{proof}
Items \textbf{(1)} and \textbf{(2)} are proven in \cite[Lemma~4]{dumitrescu2021sparse}.

\noindent
\textbf{(3)} As the center of $D$ is below the $x$-axis, if $D$ contains a point $p = (p_x, p_y)$ above the $x$-axis, then $D$ also contains all points $(p_x, p_y')$ where $0 \le p_y' \le p_y$. In particular, if $p \in \hull(A)$, then $p$ is on or above $\partial \hull(A)$, and so $D$ contains $X(p)$.

Let $A_{p}^- \subset A \cap \partial \hull(A)$ be the finite set of points to the left of $X(p)$, and let $A_{p}^+ \subset A \cap \partial \hull(A)$ be the finite set of points to the right of $X(p)$. It was shown in \cite[Lemma~4]{dumitrescu2021sparse} that every disk in $\mathcal{D}$ that contains a point in $A$ also contains some point in $A \cap \partial \hull(A)$. As $D$ contains $p \in A$ by assumption, the intersection $A \cap \partial \hull(A)$ is not empty, so $A_{p}^-$ or $A_{p}^+$ is not empty. Thus, $L(p)$ or $R(p)$ exists.

The intersection $D \cap \partial \hull(A)$ is connected, so if $D$ contains a point in $A_{p-}$ (resp., $A_{p+}$) then it contains $L(p)$ (resp., $R(p)$). As $D$ contains a point in $A \cap \partial \hull(A)$, it also contains a point in $A_{p}^-$ or $A_{p}^+$, and so it contains $L(p)$ or $R(p)$.

\noindent
\textbf{(4)} Let $D, D' \in \mathcal {D}$, and let $C=\partial D$ and $C'=\partial D'$ denote the boundary circles of these disks. As $x_1 \le x_1' \le x_2' \le x_2$, $C$ and $C'$ cross each other an even number of times in $\hull(A)$.
Because $C$ and $C'$ are centered below the $x$-axis, they cross each other at most once above the $x$-axis (cf.~\cite[Lemma~3]{dumitrescu2021sparse}), and so this even number must be zero. Since $C$ and $C'$ do not cross in $\hull(A)$, then $D' \cap \hull(A) \subset D \cap \hull(A)$.
\end{proof}

\subsection{One Incremental Step}
\label{ssec:OneStep}

Let $A$ and $B$ be finite point sets above and below the $x$-axis, respectively, and let $P = A \cup B$. 
For every point $p \in \R^2$, let $N(p) \subset P$ denote the points in $P$ on the opposite side of the $x$-axis within unit distance from $p$; refer to Fig.~\ref{fig:neighbor}. For a point set $S \subset \R^2$, let $N(S) = \bigcup_{p \in S} N(p)$.
Suppose that a unit circle centered at $p\in A$  
intersects $\partial\hull(B)$ at points $p_1, p_2 \in \R^2$; or a unit circle centered at $p\in B$ intersects $\partial\hull(A)$ at points $p_1, p_2 \in \R^2$. Define 
$
I(p) = N(N(p)) \setminus (N(p_1) \cup N(p_2))$;
see Fig.~\ref{fig:neighbor} for an example.

\begin{figure}[htbp]
 \centering
 \includegraphics[width=.95\textwidth]{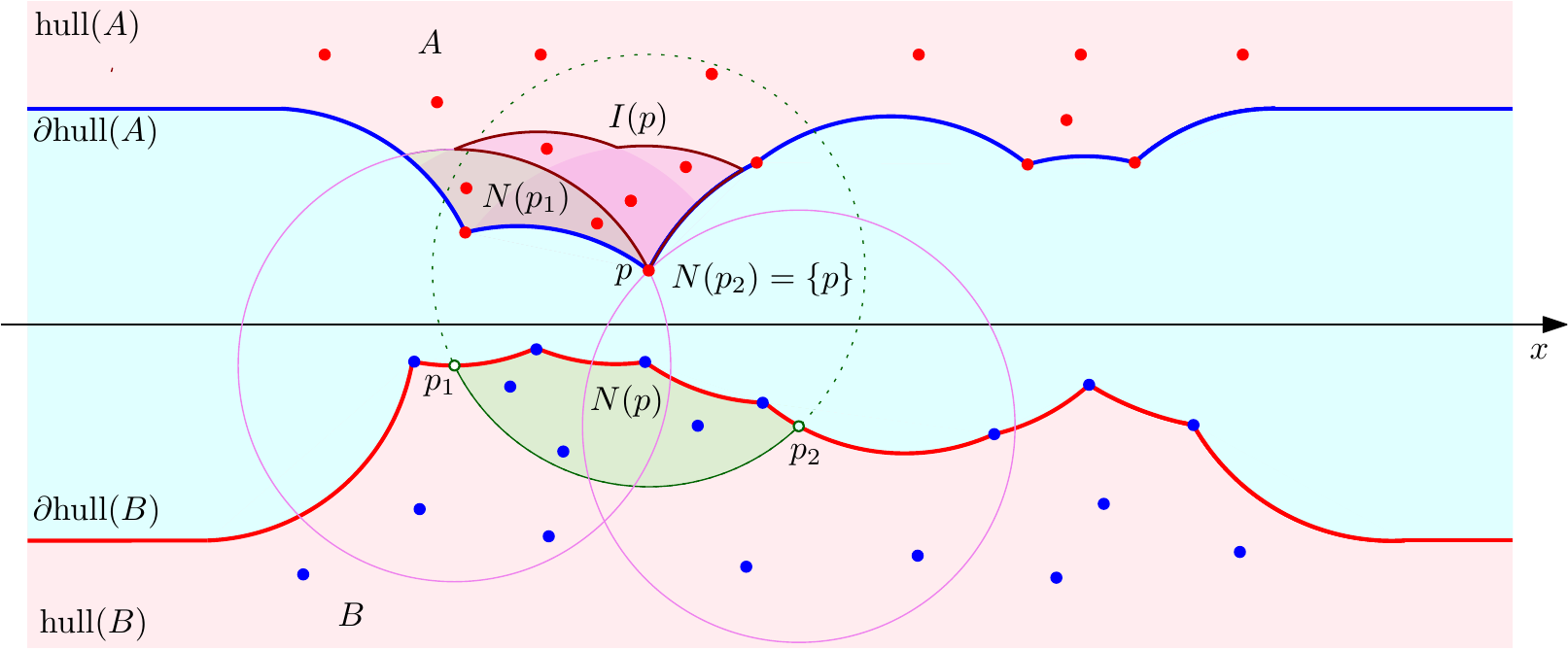}
 \caption{A point $p\in A$ and its neighbors $N(p)\subset B$. 
 The unit circle centered at $p$ intersecting $\partial \hull(B)$ at $p_1$ and $p_2$. The sets $N(p_1)$, $N(p_2)$, and $I(p)$.}
    \label{fig:neighbor}
\end{figure}

\begin{lemma}\label{lem:2ndneighbor}
Let $P = A \cup B$ be a finite set of points in the plane such that $A$ (resp., $B$) is above (resp., below) the $x$-axis. For every $p \in P$, $N(I(p)) \subset N(p)$.
\end{lemma}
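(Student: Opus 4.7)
The plan is to prove the claim by a containment argument between neighborhood disks intersected with $\hull(B)$. By reflecting across the $x$-axis and swapping the roles of $A$ and $B$, I may assume $p\in A$; then $p_1,p_2\in\partial\hull(B)$ are defined, and since $B$ lies below the $x$-axis the curve $\partial\hull(B)$ (and hence $p_1,p_2$) does as well. Let $D_u$ denote the closed disk of radius $1$ centered at $u$, so $N(u)=D_u\cap B$ for $u$ above the $x$-axis. Fix $q\in I(p)$ and $r\in N(q)$; since $r\in B\subset\hull(B)$ and $r\in D_q$, the lemma reduces to the containment
\[
D_q\cap\hull(B)\;\subset\;D_p\cap\hull(B),
\]
because then $r\in D_p$ and $r$ is on the side of the $x$-axis opposite to $p$, giving $r\in N(p)$.

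This containment will be derived from the horizontally mirrored form of part~(4) of Lemma~\ref{lem:properties}, applied to $\hull(B)$ with disks centered on or above the $x$-axis (both $D_p$ and $D_q$ qualify). The hypothesis to verify is that the $x$-range of $\partial D_q\cap\partial\hull(B)$ is nested inside $[(p_1)_x,(p_2)_x]$. To see that $D_q$ even meets $\partial\hull(B)$, I use a witness $s\in N(p)\cap B$ with $q\in N(s)$ (guaranteed by $q\in N(N(p))$): then $s\in D_p\cap D_q\cap B$, and the mirrored form of part~(3) of Lemma~\ref{lem:properties} places the vertical projection $X(s)\in\partial\hull(B)$ simultaneously in $D_p$ and in $D_q$. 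Thus both arcs $D_p\cap\partial\hull(B)$ and $D_q\cap\partial\hull(B)$ (each connected by part~(2) of Lemma~\ref{lem:properties}) share at least the point $X(s)$, and in particular the arc $D_q\cap\partial\hull(B)$ is nonempty and meets the interval $[(p_1)_x,(p_2)_x]$.

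The main obstacle is to show that $D_q\cap\partial\hull(B)$ does not escape this interval. Here the defining exclusion $q\notin N(p_1)\cup N(p_2)$ of $I(p)$ becomes critical: since $p_1,p_2$ are below the $x$-axis while $q\in A$ is above, this translates to $|q-p_i|>1$, so $p_1,p_2\notin D_q$. If the connected arc $D_q\cap\partial\hull(B)$ extended to an $x$-coordinate strictly less than $(p_1)_x$, then, because it already contains $X(s)$ with $(X(s))_x\ge (p_1)_x$, connectedness along the $x$-monotone curve $\partial\hull(B)$ would force the point $p_1$ itself into $D_q$, contradicting $p_1\notin D_q$; the symmetric argument rules out crossing $p_2$. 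Part~(4) of Lemma~\ref{lem:properties} then applies and yields the required containment, completing the proof.
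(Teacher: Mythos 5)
Your proof is correct and follows essentially the same route as the paper's: choose a witness $s\in N(p)$ with $s\in D_p\cap D_q$, use Lemma~\ref{lem:properties}\eqref{p:iii} to place $X(s)$ in both disks, use connectedness from Lemma~\ref{lem:properties}\eqref{p:ii} together with the exclusion $q\notin N(p_1)\cup N(p_2)$ to nest the $x$-range of $D_q\cap\partial\hull(B)$ inside that of $D_p$, and then invoke Lemma~\ref{lem:properties}\eqref{p:iv}. You spell out the connectedness step and the final deduction more explicitly than the paper, but the argument is the same.
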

\begin{proof}
We may assume w.l.o.g.\ that $p \in A$. Let $v \in I(p)$, and let $D_v$ (resp., $D_p$) denote the unit disk centered on $v$ (resp., $p$). As $v \in N(N(p))$, $D_v$ contains some point $u \in N(p)$. Clearly, $D_p$ contains $u$.
By Lemma~\ref{lem:properties}\eqref{p:iii}, $D_v$ and $D_p$ contain $X(u)$. As $D_p \cap \hull(B)$ has endpoints $p_1$ and $p_2$, Lemma~\ref{lem:properties}\eqref{p:i}--\eqref{p:ii} implies that $X(u)$ has $x$-coordinate between $p_1$ and $p_2$. By definition of $I(P)$, $D_v \cap \hull(B)$ does not contain either $p_1$ or $p_2$, so it only contains points between $p_1$ and $p_2$. By Lemma~\ref{lem:properties}\eqref{p:iv}, $N(v) \subset N(p)$.
\end{proof}

We construct a spanner by repeatedly applying the following lemma:
\begin{lemma}\label{lem:sparse}
Let $P = A \cup B$ be a set of $n$ points in the plane such that $\diam(A) \le 1$, $\diam(B) \le 1$, and $A$ (resp., $B$) is above (resp., below) the $x$-axis. Then there exists a nonempty subset $W \subset P$ and a graph $H(W)$ with the following properties:
\begin{enumerate}
\item $H(W)$ is a subgraph of $U(A, B)$;
\item $H(W)$ contains at most $5 | W |$ edges;
\item for every edge $ab$ with $a \in W$ and $b \in N(W)$, the graph $H(W)$ contains an $ab$-path of length at most 2.
\end{enumerate}
\end{lemma}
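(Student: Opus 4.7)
The plan is to pick a ``hub'' point $p$ and use Lemma~\ref{lem:2ndneighbor} to collapse its second neighborhood, then take $W$ to be $p$ together with the collapsed set. Concretely, setting $W=\{p\}\cup I(p)$ gives $N(W)=N(p)$ (by Lemma~\ref{lem:2ndneighbor}), so every edge to be covered has the form $\{a,b\}$ with $a\in W$ and $b\in N(p)$; a star centered at $p$ whose leaves are $I(p)\cup N(p)$ then supplies the necessary $2$-hop paths.

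Without loss of generality take $p\in A$ (the case $p\in B$ is symmetric). I would pick $p$ \emph{extremally}---for instance, as the point of $A$ for which the left intersection $p_1$ of $\partial D_p$ with $\partial\hull(B)$ has the smallest $x$-coordinate among all $a\in A$ whose unit disk meets $\partial\hull(B)$. (If no such $a\in A$ exists then no vertex of $A$ has a neighbor in $B$, and we may take $W$ to be a singleton with $H(W)=\emptyset$.) Set $W=\{p\}\cup I(p)$ and define
\[
E(H(W)) \;=\; \{\{p,v\}: v\in I(p)\}\cup\{\{p,b\}: b\in N(p)\}.
\]
Edges of the first type lie in $U(A,B)$ since $\diam(A)\le 1$; edges of the second type are cross-axis edges of $G(A,B)\subset U(A,B)$; this gives property~(1).

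Property~(3) is then immediate. For an edge $\{a,b\}$ with $a\in W$ and $b\in N(W)=N(p)$, either $a=p$ and $\{p,b\}\in H(W)$, or $a\in I(p)$, in which case $b\in N(a)\subset N(p)$ by Lemma~\ref{lem:2ndneighbor} and the two-edge path $a\to p\to b$ lies in $H(W)$.

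Property~(2) is the real obstacle. The star has $|I(p)|+|N(p)|$ edges while $|W|=|I(p)|+1$, so the bound $|E(H(W))|\le 5|W|$ amounts to $|N(p)|\le 4|I(p)|+5$, which a naive star does not guarantee when $|N(p)|$ is much larger than $|I(p)|$. My plan to close the gap is to enlarge $W$: a point $b\in N(p)$ is ``uncovered'' (not within unit distance of any $v\in I(p)$) precisely when $N(b)\subset N(p_1)\cup N(p_2)$, by the definition of $I(p)$. For each such uncovered $b$ one can charge the extra edge $\{p,b\}$ to a fresh vertex of $A$ that appears in $N(p_1)\cup N(p_2)$ and is added to $W$ as an auxiliary hub; Lemma~\ref{lem:properties}\eqref{p:iv} is the key tool showing that these auxiliary hubs together cover the remaining $b$ with a bounded number of edges per newly added vertex. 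The extremal choice of $p_1$ (and a symmetric choice for $p_2$) ensures that the number of hubs added is small compared to the number of edges they absorb. The most delicate step will be tuning exactly which auxiliary vertices and edges to include so that the final ratio of edges to vertices stays at most $5$; verifying this constant is the crux of the argument.
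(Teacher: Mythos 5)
Your plan correctly identifies the bottleneck --- certifying the $5|W|$ edge bound --- but the fix you sketch is not developed enough to close the gap, and it diverges from what actually makes the count work. The paper's proof makes two moves you are missing, and they are both essential.

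First, the paper takes $W = N(p)\cup I(p)\cup\{p\}$, not $\{p\}\cup I(p)$. Adding $N(p)$ to $W$ is what ``pays'' for the $|N(p)|$ edges of the star, and it costs nothing for the stretch: since $N(I(p))\subset N(p)$, one still has $N(W)=N(p)\cup N(N(p))$, and a star from $p$ to all of $N(N(p))\cup N(p)$ covers every $a\in W$, $b\in N(W)$ pair in $\le 2$ hops (if $a\in N(p)$ then $b\in N(N(p))$; if $a\in I(p)\cup\{p\}$ then $b\in N(p)$ by Lemma~\ref{lem:2ndneighbor}). Note the star must also reach all of $N(N(p))$, not just $I(p)$, to cover edges of the form $a\in N(p)$, $b\in N(N(p))$ --- your star omits these leaves.

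Second, the paper chooses $p$ to \emph{maximize $|N(p)|$ over $P$}, not extremally by the position of $p_1$. This choice is what produces the numeric bound. Letting $k=\max_{m\in\R^2}|N(m)|$ (the max is over the whole plane), Lemma~\ref{lem:properties}(3) shows some $v\in\{L(m),R(m)\}\subset P$ has $|N(v)|\ge k/2$, hence $|N(p)|\ge k/2$. Then $|N(p_1)|,|N(p_2)|\le k$, so
\[
|N(N(p))|\le |I(p)|+|N(p_1)|+|N(p_2)|\le |I(p)|+2k\le 4\bigl(|I(p)|+k/2\bigr)\le 4|W|,
\]
and the star has at most $|N(N(p))|+|N(p)|-1\le 4|W|+|W|=5|W|$ edges. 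Your extremal choice of $p$ by $x$-coordinate of $p_1$ gives no lower bound on $|N(p)|$ and therefore no handle on $|N(p_1)|+|N(p_2)|$ relative to $|W|$; the ``auxiliary hubs'' plan would enlarge $N(W)$ as you enlarge $W$, and you have not argued that this process terminates with a bounded edge-to-vertex ratio. The missing idea, concretely, is to enlarge $W$ by the whole first neighborhood $N(p)$ and to pick $p$ by maximizing degree so that $|N(p)|$ is provably within a factor $2$ of the global maximum $k$.
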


\begin{proof}
Let $m \in \R^2$ be a point that maximizes $| N(m) |$ (breaking ties arbitrarily) and let $k = | N(m) |$. Notice that $m$ might not be in $P$. By Lemma~\ref{lem:properties}(\ref{p:iii}), every point in $N(m)$ is within unit distance of $L(m)$ or $R(m)$. Thus, there is some point $v \in \{L(m), R(m)\}$ such that $|N(v)| \ge k/2$. Note that $v \in P$.

Now let $p \in P$ be the point that maximizes $|N(p)|$; and note that $|N(p)|\geq k/2$. Let $W = N(p) \cup I(p) \cup \{p\}$. Let $H(W)$ be the spanning star centered at $p$ connected to all points in $N(N(p))$ and to all points in $N(p)$. We verify that $H(W)$ has the required properties:
\begin{enumerate}
\item Every point in $N(p)$ is within unit distance of $p$. As $p \in A$ and $N(N(p)) \subset A$, every point in $N(N(p))$ is within unit distance of $p$. Thus $H(W)$ is a subgraph of $U(A, B)$. 

\item By definition of $k$, $| N(p_1) | \le k$ and $| N(p_2) | \le k$. Thus, $| N(N(p)) | \le 2k + | I(p) |$. Further, $| W | = | N(p) | + | I(p) | \ge k/2 + | I(p) |$. Thus $| N(N(p)) | \le 4 | W |$. The spanning star $H(W)$ has $| N(N(p)) | + | N(p) | - 1$ edges, so it has at most $5 | W |$ edges.

\item Let $a$ be a point in $W$, and let $b$ be a point in $N(W)$. If $a \in N(p)$, then $b \in N(N(p))$ by the definition of $N(.)$, and so the spanning star contains an $ab$-path of length at most 2. Alternatively, if $a \in I(p)\cup \{p\}$, then $b \in N(p)$ by Lemma~\ref{lem:2ndneighbor}, and so the spanning star contains an $ab$-path of length at most 2.\qedhere
\end{enumerate}
\end{proof}

\subsection{A Bipartite Spanner: Proof of Lemma~\ref{lem:bipartite}}
\label{ssec:bipartite}

We now prove Lemma~\ref{lem:bipartite}, which we restate here for convenience.
\bipartite*
\begin{proof}
Apply Lemma~\ref{lem:sparse} to find a subset $W \subset P$ and a subgraph $H(W)$. Let $H$ be the union of $H(W)$ and the spanner constructed by recursing on $P \setminus W$. Since $H$ is the union of subgraphs of $U(A,B)$, it is itself a subgraph of $U(A, B)$.

\smallskip\noindent\textbf{Stretch analysis.} Suppose $a \in A$ and $b \in B$ are neighbors in $G(A, B)$. We assume w.l.o.g.\ that $a$ was removed before or at the same time as $b$ during the construction of $H$ as part of some subset $W$. Then $H$ includes a subgraph $H(W)$ that, by construction, connects $a$ to all neighbors that have not yet been removed (including $b$) by paths of length at most 2.

\smallskip\noindent\textbf{Sparsity analysis.} Each subgraph $H(W)$ in $H$ is responsible for removing some set of points $W$ and has at most $5 |W|$ edges. Charge 5 edges to each of the $|W|$ points removed. As each point is removed exactly once, $H$ contains at most $5n$ edges.
\end{proof}

\section{Generalization to Translates of Convex Bodies in the Plane}
\label{ssec:translates}

Recently, Aamand et al.~\cite{AamandAKR21} proved that the class of UDGs equals the class of intersection graphs of translates of any strictly convex body in the plane. Hence Theorem~\ref{thm:udg} carries over to this setting. We prove here that Theorem~\ref{thm:udg} generalizes to the intersection graph of a collection of translates of \emph{any} convex body $C\subset \R^2$. Let $G=G(\mathcal{S})$ be the intersection graph of the set $\mathcal{S}=\{C+\mathbf{v}_i: i=1,\ldots , n\}$ of $n$ translates of $C$. We start with a few simplifying assumptions.

\noindent\textbf{(1)} We may assume that $C$ is centrally symmetric. Indeed, assume that $C$ is a convex body in the plane. Two translates of $C$ intersect, say, $C+\mathbf{v}_i \cap C+\mathbf{v}_j\neq \emptyset$, if and only if $\frac12(C-C)+\mathbf{v}_i\cap\frac12(C-C)+\mathbf{v}_j\neq \emptyset$,  where $\frac12(C-C)$ is a centrally symmetric convex body. 
Consequently, the intersection graph of $\mathcal{S}=\{C+\mathbf{v}_i: i=1,\ldots , n\}$ is the same as that of  $\mathcal{S}'=\{\frac12(C-C)+\mathbf{v}_i: i=1,\ldots , n\}$.

\noindent\textbf{(2)} We may assume that $C$ is fat. Indeed, intersections are invariant under nondegenerate affine transformations. By John's ellipsoid theorem~\cite{HarPeled11,John44}, there exists an ellipse $\mathcal{E}\subset \R^2$ such that $\mathcal{E}\subset C\subset 2\, \mathcal{E}$. 
A linear transformation that maps $\mathcal{E}$ to a disk will map $C$ into a $2$-fat centrally symmetric convex body.

In particular, we may assume that $C$ contains a disk of radius $\frac12$ centered at the origin, and is contained in a concentric disk of radius and 1. Thus, if two translates of $C$ intersect, then their centers are within distance 2 apart. The proof of Theorem~\ref{thm:udg} carries over with the only difference that we need to consider pairs of tiles within distance $2$ apart: Every tile is at distance at most 2 from 30 other tiles.

\begin{figure}[htbp]
 \centering
 \includegraphics[width=\textwidth]{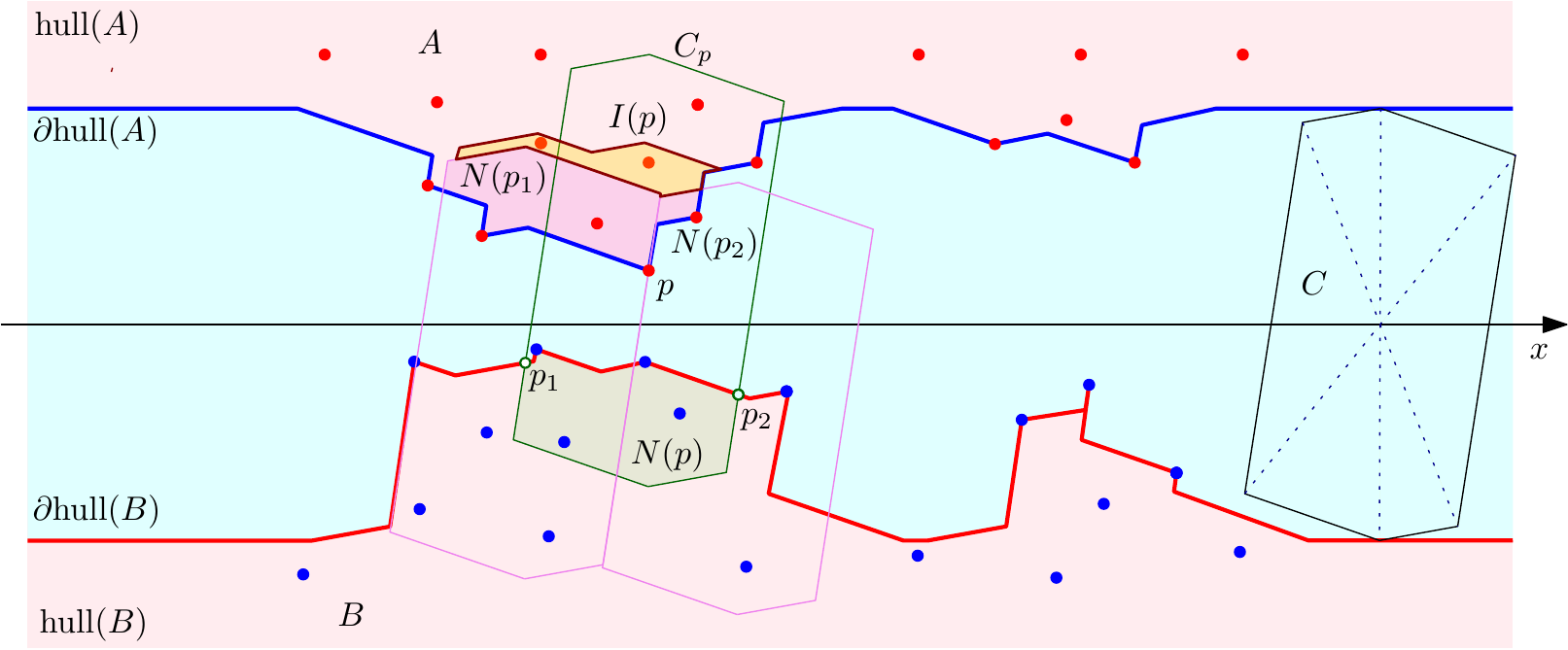}
 \caption{$\hull(A)$ and $\hull(B)$ with respect to a centrally symmetric hexagon $C$. Point $p\in A$ and its neighbors $N(p)\subset B$. The translate $C_p$ of $C$ centered at $p$ intersects $\partial \hull(B)$ at $p_1$ and $p_2$. Sets $N(p_1)$, $N(p_2)$, and $I(p)$.}
    \label{fig:hexneighbor}
\end{figure}

The Lemmata in Section~\ref{ssec:hulls}--\ref{ssec:OneStep} carry over with a few minor adjustments: The set $\hull(A)$ is defined in  terms of translates of $C$ centered on or below the $x$-axis (i.e., unit disk in the Minkowski norm of $C$). Lemma~\ref{lem:properties}\eqref{p:i} holds in a weaker form: $\partial \hull(A)$ is only \emph{weakly $C$-monotone}, that is, it has a connected intersection with every line parallel to $L$, where $L$ is a line tangent to $C$ at an intersection point of $\partial C$ and the $x$-axis (in particular, the $x$-axis partitions $\partial C$ into two $C$-monotone arcs). This allows us to cover $\partial \hull(A)$ with a weakly $C$-monotone curve in $\R^2$. 

In the definition of $I(p)$, the intersections between $\partial \hull(B)$ and the unit circle $\partial C_p$ (in Minkowski-norm) may be line segments. We define $p_1$ and $p_2$ to be the first and last intersection points between the $\partial \hull(B)$ and $\partial C_p$ along the weakly $C$-monotone curve $\partial \hull(B)$. 
With these adjustments, the proof of Lemma~\ref{lem:bipartite} carries over, and we can conclude the main result of this section.

\begin{theorem}\label{thm:translates}
The intersection graph of $n$ translates of a convex body in the plane admits a $2$-hop spanner with $O(n)$ edges.
\end{theorem}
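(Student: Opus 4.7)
The plan is to follow the roadmap sketched above: first reduce to a $2$-fat centrally symmetric body $C$, then prove a bipartite analog of Lemma~\ref{lem:bipartite} in the Minkowski norm induced by $C$, and finally assemble a linear-size spanner via a hexagonal tiling. The two preliminary reductions are already recorded: replacing $C$ by $\tfrac{1}{2}(C-C)$ centralizes $C$ without changing the intersection graph, and the affine map furnished by John's ellipsoid normalizes $C$ so that it contains a disk of radius $\tfrac12$ and lies in a concentric disk of radius $1$. Hence two intersecting translates of $C$ have centers within Euclidean distance $2$, so a tiling of $\R^2$ with regular hexagons of diameter $1$ has the property that each tile has at most $30$ other tiles within distance $2$.

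For each pair of close tiles $\sigma,\tau$, set $A = P \cap \sigma$ and $B = P \cap \tau$ and, after a rotation, assume $A$ lies above and $B$ below the $x$-axis. I would redefine $\hull(A)$ using translates of $C$ centered on or below the $x$-axis, and verify the four items of Lemma~\ref{lem:properties} in this setting. The nontrivial adaptations are: (i) $\partial\hull(A)$ is only \emph{weakly $C$-monotone}, i.e., meets every line parallel to a line $L$ tangent to $C$ at its intersection with the $x$-axis in a connected set; and (iv) two translates of $C$ centered on or below the $x$-axis cross at most once above the $x$-axis. For (i) I would cover $\partial\hull(A)$ by a weakly $C$-monotone curve in $\R^2$, and reinterpret $X(p), L(p), R(p)$ as projections along the $C$-monotone direction.

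The main obstacle is (iv). For unit disks it reduces to the elementary fact that two circles cross at most twice. For a centrally symmetric $C$, I would show that $\partial(C + \mathbf{v}_1) \cap \partial(C + \mathbf{v}_2)$ is point-symmetric about the midpoint $m = \tfrac{1}{2}(\mathbf{v}_1 + \mathbf{v}_2)$: if $p$ lies in the intersection then $p - \mathbf{v}_i \in \partial C$, hence $\mathbf{v}_i - p \in \partial C$ by central symmetry, which yields $\mathbf{v}_1 + \mathbf{v}_2 - p \in \partial(C + \mathbf{v}_j)$ for both $j$. When $\mathbf{v}_1,\mathbf{v}_2$ lie on or below the $x$-axis, $m$ does too, so every crossing above is paired with a mirrored crossing below. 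Combined with the standard fact that two translates of a convex curve cross in at most two arcs, this pairing forces at most one crossing above the $x$-axis.

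With Lemma~\ref{lem:properties} adapted, I would define $p_1, p_2$ as the first and last intersection points of $\partial C_p$ with $\partial\hull(B)$ along the weakly $C$-monotone curve (possibly line segments, but only the extremal parameters matter). Lemma~\ref{lem:2ndneighbor} and Lemma~\ref{lem:sparse} then carry through verbatim, yielding a star $H(W)$ with at most $5|W|$ edges that contains a $2$-hop path for every edge from $W$ to its neighbors; the bipartite lemma follows by the same recursive argument as Lemma~\ref{lem:bipartite}. Finally, taking a spanning star in every nonempty tile together with the bipartite spanner for each of the at most $30$ close tile pairs yields at most $(n - 1) + 30 \cdot 5n = O(n)$ edges, as required.
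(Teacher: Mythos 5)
Your proposal follows the paper's proof essentially step by step: the same two reductions (replacing $C$ by $\tfrac12(C-C)$, then normalizing via John's ellipsoid), the same hexagonal tiling with $30$ neighboring tile pairs, and the same adaptation of the hull lemmas to the Minkowski norm of $C$ (weak $C$-monotonicity of $\partial\hull(A)$, taking $p_1,p_2$ to be the extremal intersection points along the weakly monotone curve, and then invoking Lemmas~\ref{lem:2ndneighbor} and \ref{lem:sparse} and the recursion of Lemma~\ref{lem:bipartite}). Your central-symmetry argument for why two translates of $C$ centered on or below the $x$-axis cross at most once above it is a sound and welcome elaboration of a detail the paper leaves implicit by pointing to the unit-disk analogue.
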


\section{Two-Hop Spanners for Axis-Aligned Fat Rectangles}
\label{sec:sq}

For intersection graphs of $n$ unit disks, we found 2-hop spanners with $O(n)$ edges in Section~\ref{sec:udg}. This bound does not generalize to intersection graphs of disks of arbitrary radii, as we establish a lower bound of $\Omega(n\log n)$ in Section~\ref{sec:lb}. Here, we construct 2-hop spanners with $O(n\log n)$ edges for such graphs under the $L_\infty$ norm (where unit disks are really unit squares). The result also holds for axis-aligned fat rectangles.

We prove a linear upper bound for the 1-dimensional version of the problem (Section~\ref{ssec:interval}), and then address axis-aligned fat rectangles in the plane (Section~\ref{ssec:squares}). 
The \emph{fatness} of a set $s\subset \R^2$ is the ratio $\varrho_{\mathrm{out}}/\varrho_{\mathrm{in}}$ between the radii of a minimum enclosing disk and a maximum inscribed disk of $s$. A collection $S$ of geometric objects is $\alpha$-fat if the fatness of every $s \in S$ is at most $\alpha$; and it is \emph{fat}, for short, if it is $\alpha$-fat for some $\alpha\in O(1)$. 

\subsection{Two-Hop Spanners for Interval Graphs}
\label{ssec:interval}

Let $G(S)$ be the intersection graph of a set $S$ of $n$ closed segments in $\R$. Assume w.l.o.g.\ that $G(S)$ is connected: otherwise, we can apply this construction to each connected component.

We partition $\bigcup{S}$ into a collection of disjoint intervals $\mathcal{I} = \{I_1, \ldots, I_m\}$ as follows.
Let $I_0 = \{p_0\}$ be the interval containing only the leftmost point in $\bigcup S$, and let $k := 1$. While $p_{k-1}$ lies to the left of the rightmost point in $\bigcup S$, let $p_{k}$ be the rightmost point of any segment in $S$ that intersects $p_{k-1}$; let $I_{k} = (p_{k-1}, p_{k}]$; and set $k:=k+1$. As $G(S)$ is connected, this process terminates.
%
For every $k \in \{1, \ldots, m\}$, define the \emph{covering segment} $c_k$ to be some segment that intersects $p_{k-1}$ and has right endpoint $p_k$; see Fig.~\ref{fig:intervals}. Notice that by construction of $I_k$, $c_k$ is guaranteed to exist, and $I_k \subset c_k$.

\begin{figure}[htbp]
 \centering
 \includegraphics[width=.95\textwidth]{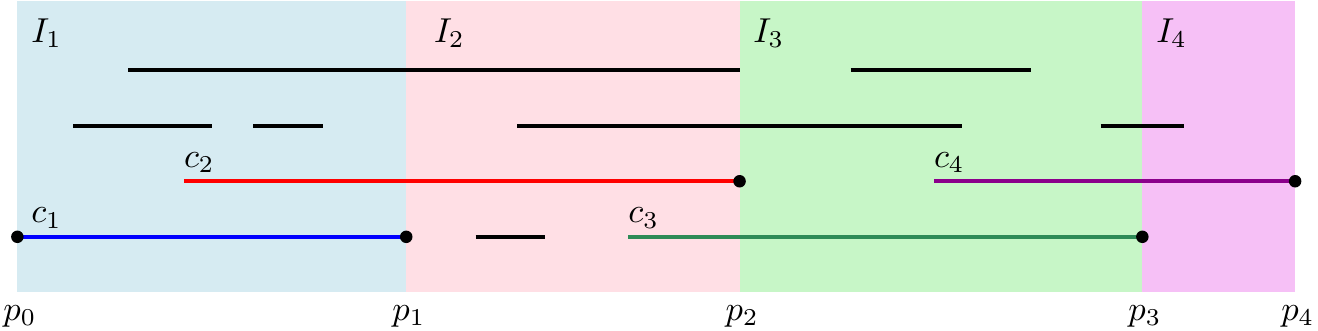}
 \caption{A set of segments $S$, with $\bigcup S$ partitioned into intervals $\mathcal{I} = \{I_1, \ldots, I_4\}$. Each $I_k \in \mathcal{I}$ is contained in some covering segment $c_k \in S$.}
    \label{fig:intervals}
\end{figure}

\begin{lemma}
\label{lem:interval_props}
The set of intervals $\mathcal{I}$ defined above has the following properties:

\begin{enumerate}
\item $\mathcal{I}$ is a partition of $\bigcup{S}$;

\item every segment $s \in S$ intersects at most 2 intervals in $I$;

\item if two segments $a, b \subset \bigcup S$ intersect (with $a, b$ not necessarily elements of $S$), then there is some interval in $I$ that intersects both segments.
\end{enumerate}
\end{lemma}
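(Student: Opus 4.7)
The plan is to handle the three parts of Lemma~\ref{lem:interval_props} in order. Parts~(1) and~(3) are almost immediate from the construction together with the hypothesis that $G(S)$ is connected; the only substantive step is~(2), which exploits the greedy maximality built into the choice of each $p_k$.

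For~(1), I would first use connectedness of $G(S)$ to conclude that $\bigcup S$ is a single closed interval $[p_0,p_\ell]$, where $p_0$ is its leftmost point and $p_\ell$ is its rightmost. The while loop then produces a strictly increasing sequence $p_0<p_1<\cdots<p_m$ (strict because, as long as $p_{k-1}<p_\ell$, connectedness of $G(S)$ forces some segment in $S$ to contain $p_{k-1}$ and extend strictly past it) and terminates at $p_m=p_\ell$. The intervals $\{p_0\},(p_0,p_1],\dots,(p_{m-1},p_m]$ are then pairwise disjoint and cover $[p_0,p_m]=\bigcup S$, giving the claimed partition. Item~(3) follows immediately from~(1): any $q\in a\cap b\subseteq \bigcup S$ lies in a unique interval of the partition, and that interval then meets both $a$ and $b$.

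For~(2) I would argue by contradiction. Suppose $s\in S$ meets three distinct intervals $I_i,I_j,I_k$ with $i<j<k$, so in particular $k\ge i+2$. Choose $q_1\in s\cap I_i\subseteq (p_{i-1},p_i]$ and $q_2\in s\cap I_k\subseteq (p_{k-1},p_k]$. Since $s$ is a closed segment it contains the whole interval $[q_1,q_2]$, and because
\[
q_1\le p_i<p_{i+1}\le p_{k-1}<q_2,
\]
the point $p_i$ lies in $s$. But by construction $p_{i+1}$ is the rightmost endpoint over all segments in $S$ that contain $p_i$, so the right endpoint of $s$ is at most $p_{i+1}$. This contradicts $q_2\in s$ together with $q_2>p_{k-1}\ge p_{i+1}$.

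I do not anticipate any real obstacle. The only technical points to watch are the half-open convention, which is what makes the $I_k$ simultaneously disjoint and covering of $\bigcup S$, and the observation that the greedy rule $p_{i+1}=\max\{\text{right endpoint of }s\in S:p_i\in s\}$ is precisely the upper bound violated in the contradiction for~(2).
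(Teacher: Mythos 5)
Your proposal is correct and follows essentially the same approach as the paper's own proof: part~(1) and part~(3) are handled identically, and for part~(2) both you and the paper take the smallest index $i$ of an interval hit by $s$, use connectedness of $s$ to force $p_i\in s$, and then invoke the greedy maximality of $p_{i+1}$ to cap the right endpoint of $s$, contradicting the assumption that $s$ reaches a third interval. The only cosmetic difference is that you argue from an arbitrary third interval $I_k$ via the inequality chain $q_1\le p_i<p_{i+1}\le p_{k-1}<q_2$, whereas the paper phrases it directly as ``$s$ contains a point to the right of $p_{i+1}$''; the two are the same observation.
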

\begin{proof}
\textbf{(1)} 
By construction, $\bigcup\mathcal{I} = [p_0, p_m]$ where $p_0$ is the leftmost point in $\bigcup S$ and $p_m$ is the rightmost point in $\bigcup S$. As $G(S)$ is connected, $\bigcup S = [p_0, p_m]$. The intervals in $\mathcal{I}$ are pairwise disjoint, so $I$ is a partition of $\bigcup S$.

\textbf{(2)} 
Let $k \in \mathbb{Z}$ be the smallest integer such that $I_k \in \mathcal{I}$ intersects $s$. While $s$ may intersect $I_{k+1} = (p_{k}, p_{k+1}]$, we prove it intersects no other intervals in $I$. Suppose for the sake of contradiction that $s$ intersects some $I_j \in I$ with $j > k + 1$. By construction, all points in $I_j$ lie to the right of $I_{k+1}$, so $s$ contains a point to the right of $p_{k+1}$. Further, because $s$ is connected and $s$ contains a point in $I_k = (p_{k-1}, p_k]$, we have $p_k \in s$. This contradicts the greedy construction for $p_{k+1}$: $p_{k+1}$ is by definition the rightmost point out of all segments in $S$ that intersect $p_{k}$.

\textbf{(3)} As $a$ and $b$ intersect, there exists some point $p$ such that $p \in a \cap b$. By property (1), there is some interval that contains $p$. This interval intersects both $a$ and $b$ (at $p$).
\end{proof}

\begin{theorem}\label{thm:intervals}
Every $n$-vertex interval graph admits a 2-hop spanner with at most $2n$ edges.
\end{theorem}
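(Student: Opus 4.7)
The plan is to build the 2-hop spanner directly from the interval decomposition $\mathcal{I}=\{I_1,\ldots,I_m\}$ and its covering segments $c_1,\ldots,c_m$ constructed in Section~\ref{ssec:interval}. For each interval $I_k$, I will form a star $T_k$ whose center is the covering segment $c_k$ and whose leaves are all segments in $S\setminus\{c_k\}$ that intersect $I_k$. The spanner is $H=\bigcup_{k=1}^m T_k$; the construction is applied to each connected component of $G(S)$ separately, and we may assume $G(S)$ is connected since the bound $2n$ sums across components.

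For the stretch analysis, let $ab$ be any edge of $G(S)$, so $a,b\in S$ with $a\cap b\neq\O$. By Lemma~\ref{lem:interval_props}(3) there exists $I_k\in\mathcal{I}$ intersecting both $a$ and $b$. Because $I_k\subset c_k$, both $a$ and $b$ intersect $c_k$, so both are leaves (or the center) of $T_k$. Thus $H$ contains the path $a$--$c_k$--$b$ of length at most $2$, verifying the 2-hop property. Note that $H\subseteq G(S)$ since every edge $(c_k,s)$ in $T_k$ corresponds to $s\cap c_k\supseteq s\cap I_k\neq\O$.

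For the sparsity analysis, the number of edges in $T_k$ equals $|\{s\in S:s\cap I_k\neq\O\}|-1$ (subtracting the center $c_k$ itself, which intersects $I_k$). Summing,
\[
|E(H)| \;=\; \sum_{k=1}^m \bigl(|\{s\in S:s\cap I_k\neq\O\}|-1\bigr) \;=\; \Bigl(\sum_{s\in S}|\{k:s\cap I_k\neq \O\}|\Bigr)-m.
\]
By Lemma~\ref{lem:interval_props}(2) each segment $s\in S$ intersects at most $2$ intervals of $\mathcal{I}$, so the double sum is at most $2n$, giving $|E(H)|\le 2n-m\le 2n$.

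There is essentially no obstacle; the only care needed is to use the containment $I_k\subset c_k$ in the right direction (from ``$a$ meets $I_k$'' conclude ``$a$ meets $c_k$'', so that the star edge $(c_k,a)$ is indeed an edge of $G(S)$), and to remember that $c_k$ itself is counted among the segments meeting $I_k$ so it should not appear as a leaf of $T_k$. The handling of disconnected $G(S)$ is automatic since both the edge count and the ``2-hop'' guarantee are additive across components.
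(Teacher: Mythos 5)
Your proposal is correct and follows essentially the same approach as the paper: decompose into the greedy intervals $\mathcal{I}$ with covering segments $c_k$, build one star per $I_k$ centered at $c_k$, use Lemma~\ref{lem:interval_props}(3) for the 2-hop property and Lemma~\ref{lem:interval_props}(2) for the $2n$ edge bound. The only cosmetic remark is that $|E(H)|$ should be $\le\sum_k|E(T_k)|$ rather than $=$ (distinct stars can repeat an edge of the form $(c_j,c_k)$), but the inequality goes the right direction and the bound still holds.
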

\begin{proof}
We construct the 2-hop spanner $H$ as the union of stars. For every interval $I_k \in I$, construct a star $H_k$ centered on the covering segment $c_k$ with an edge to every segment that intersects $I_k$. As $I_k \subset c_k$, every segment that intersects $I_k$ also intersects $c_k$, so there is an edge between the two segments in $G(S)$. Define $H = \bigcup_{k = 1}^m H_k$.

\smallskip\noindent\textbf{Stretch analysis.}
Suppose $s_1, s_2 \in S$ intersect. By Lemma~\ref{lem:interval_props}(3),  $s_1\cap s_2$ intersects some interval $I_k$. Thus, the star $H_k\subset H$  connects $s_1$ and $s_2$ by a path of length at most 2.

\smallskip\noindent\textbf{Sparsity analysis.}
Suppose the star $H_k \subset H$ has $j$ edges. The corresponding interval $I_k \in \mathcal{I}$ intersects $j + 1$ segments in $S$. Charge 1 edge to each of the segments intersecting $I_k$. By Lemma~\ref{lem:interval_props}(2), each of the $n$ segments in $S$ is charged at most twice.
\end{proof}

\begin{corollary}
\label{cor_2d_line}
The intersection graph of a set of $n$ axis-aligned rectangles in $\R^2$ that all intersect a fixed horizontal or vertical line admits a 2-hop spanner with at most $2n$ edges.
\end{corollary}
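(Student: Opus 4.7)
The plan is to reduce the problem to the one-dimensional case already handled by Theorem~\ref{thm:intervals}. Assume without loss of generality that $\ell$ is horizontal (the vertical case is symmetric by a $90^\circ$ rotation). For each axis-aligned rectangle $R_i$ in the given collection, define $I_i := R_i \cap \ell$, which is a nonempty closed interval since $R_i$ is assumed to meet $\ell$.

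The key observation is that for any two rectangles $R_i, R_j$ in the collection, $R_i \cap R_j \neq \emptyset$ if and only if $I_i \cap I_j \neq \emptyset$. The forward direction is immediate from the fact that $\ell$ is horizontal and projection onto $\ell$ preserves nonempty intersection in the $x$-coordinate; more directly, if $R_i \cap R_j \neq \emptyset$, then the horizontal extents of $R_i$ and $R_j$ overlap, which means $I_i$ and $I_j$ overlap. For the reverse direction, each of $R_i$ and $R_j$ has vertical extent containing the $y$-coordinate $y_\ell$ of $\ell$, so their vertical extents necessarily overlap; combined with $I_i \cap I_j \neq \emptyset$ giving overlapping horizontal extents, the two axis-aligned rectangles must share a point.

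Consequently, the intersection graph of $\{R_1, \ldots, R_n\}$ is isomorphic (via the identification $R_i \leftrightarrow I_i$) to the interval graph of $\{I_1, \ldots, I_n\}$ on $\ell$. Applying Theorem~\ref{thm:intervals} to this $n$-vertex interval graph produces a $2$-hop spanner with at most $2n$ edges, and lifting each selected edge back to the corresponding pair of rectangles yields a $2$-hop spanner of the rectangle intersection graph with at most $2n$ edges. There is no substantive obstacle here; the only step needing care is verifying the biconditional above, which hinges precisely on the hypothesis that every rectangle meets the common line $\ell$ so that vertical extents are guaranteed to overlap.
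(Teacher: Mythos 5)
Your proof is correct and matches the reduction the paper implicitly intends (the paper states the corollary without proof): intersecting each rectangle with the line $\ell$ yields an interval, and the hypothesis that every rectangle meets $\ell$ guarantees vertical extents always overlap, so rectangle intersection is equivalent to interval overlap. Applying Theorem~\ref{thm:intervals} to this interval graph then gives the $2n$-edge 2-hop spanner.
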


\subsection{Extension to Axis-Aligned Fat Rectangles}
\label{ssec:squares}

Let $G(S)$ be the intersection graph of a set $S$ of $n$ axis-aligned $\alpha$-fat closed rectangles in the plane. For every pair of intersecting rectangles $a, b\in S$, select some representative point in $a \cap b$. Let $\rep(S)$ denote the set comprising the representatives for all intersections.

\paragraph{Setup for a Divide \& Conquer Strategy.}
We recursively partition the plane into slabs by splitting along horizontal lines. The recursion tree $\mathcal{P}$ is a binary tree, where each node $P \in \mathcal{P}$ stores a slab, denoted $\slab(P)$, that is bounded by horizontal lines $b_P$ and $t_P$ on the bottom and top, respectively. The node $P$ also stores a subset $S(P) \subset S$ of (not necessarily all) rectangles in $S$ that intersect $\slab(P)$. 

Let the \emph{inside set} $\inside(P) \subset S(P)$ be the set of rectangles contained in $\mathrm{int}(\slab(P))$. Let the \emph{bottom set} $\bottomset(P) \subset S(P)$ be rectangles that intersect the line $b_P$, the \emph{top set} $\topset(P) \subset S(P)$ be the rectangles that intersect $t_P$, and the \emph{across set} $\across(P) = \bottomset(P) \cap \topset(P)$; see Fig.~\ref{fig:slab}.

\begin{figure}[htbp]
 \centering
 \includegraphics[width=.95\textwidth]{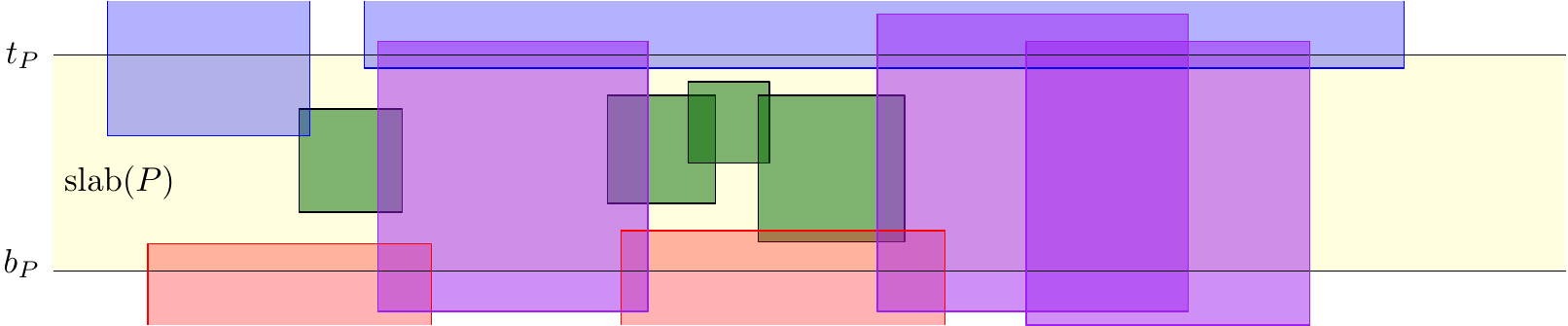}
 \caption{A horizontal $\slab(P)$ is bounded by $b_P$ and $t_P$. 
 Rectangles in the inside set $\inside(P)$ (green), bottom set $\bottomset (P)$ (red and purple), top set $\topset(P)$ (blue and purple), and across set $\across(P)=\bottomset(P)\cap \topset(P)$ (purple). 
 Some red and blue fat rectangles are shown only partially, in a small neighborhood of $\slab(P)$.}
    \label{fig:slab}
\end{figure}

We define the root node $P_r$ to have a slab large enough to contain all rectangles in $S$, and define $S(P_r) = S$. We define the rest of the space partition tree recursively. Let $P \in \mathcal{P}$. Define $\rep(P) \subset \rep(S)$ to be the set $\rep(S) \cap \mathrm{int}(\slab(P))$. If $\rep(P)=\emptyset$, then $P$ is a leaf and has no children. Otherwise, $P$ has two children $P_1$ and $P_2$. Let $c_P$ be a horizontal line with at most half the points in $\rep(P)$ on either side. Let $\slab(P_1)$ (resp., $\slab(P_2)$) be the slab bounded by $b_P$ and $c_P$ (resp., $c_P$ and $t_P$); and let $S(P_1) \subset S(P) \setminus \across(P)$ (resp., $S(P_2)\subset S(P) \setminus \across(P)$) be the set of rectangles that intersect this slab, excluding rectangles in $\across(P)$. Notice that no rectangles in $\across(P)$ appear in the children of $P$, whereas rectangles in the sets $\inside(P)$, $\bottomset(P) \setminus \across(P)$, and $\topset(P) \setminus \across(P)$ appear in one or both of the children.

\paragraph{Spanner Construction.}
We construct a spanner $H(S)$ for $G(S)$ as the union of subgraphs $H(P)$ for each node $P$ in the space partition tree.


We construct $H(P)$ such that there is a path of length at most 2 between every rectangle $s \in \across(P)$ and every rectangle in $S(P)$ that $s$ intersects. Every edge in $G(S(P))$ requiring such a path involves a rectangle in $\bottomset(P)$, a rectangle in $\topset(P)$, or a rectangle in $\inside(P)$. We construct three subgraphs to deal with these three categories of edges.

By Corollary~\ref{cor_2d_line}, we can construct a subgraph $H_{\bottomset}(P)$ of $G(\bottomset(P))$ with at most $2\, |\bottomset(P)|$ edges that is a 2-hop spanner for $\bottomset(P)$. Similarly, we can construct a 2-hop spanner $H_{\topset}(P)$ for $G(\topset(P))$ with $2\, |\topset(P)|$ edges. As all rectangles in $\across(S)$ intersect $b_P$, we can apply Corollary~\ref{cor_2d_line} to construct a 2-hop spanner $H_{\centerset}'(P)$ with at most $2\, |\across(P)|$ edges for $G(\across(P))$. To construct $H_{\inside}(P)$, we partition $\bigcup \big(\across(P) \cap \slab(P)\big)$ analogously to the 1-dimensional case.

Recall that by Lemma~\ref{lem:interval_props}(1), the line segment $\bigcup \big(\across(P) \cap b_P\big)$ can be partitioned into intervals $I_k$, each of which is contained in some covering segment $c_k \in \across(P)$. As every $s \in \across(P)$ is an axis-aligned rectangle that spans two horizontal lines $b_P$ and $t_P$, the segments in $I_k$ can be extended upward to form axis-aligned rectangles $\widehat{I}_k$, each with an associated \emph{covering rectangle} $\widehat{c}_k \in \across(P)$ corresponding to the covering segment $c_k$ in the 1-dimensional case. Let $\widehat{\mathcal{I}}$ denote the set of all 2-dimensional intervals $\widehat{I_k}$.

We construct $H_{\inside}(P)$ from $H_{\inside}'(P)$ using these intervals. For every $s \in \inside(P)$, if $s$ intersects some $\widehat{I}_k \in \widehat{\mathcal{I}}$, add an edge between $s$ and $\widehat{c}_k$ to $H_{\inside}'(P)$. Let $H(P) = H_{\bottomset}(P) \cup H_{\topset}(P) \cup H_{\inside}(P)$.

\paragraph{Stretch and Weight Analysis.}
We start with a technical lemma (Lemma~\ref{lem:sequencewidth}), which is used in the stretch and weight analysis for the graph $H(P)$ of a single node $P\in \mathcal{P}$ (Lemma~\ref{lem:subproblemspanner}).
Notice that the intervals in $\widehat{\mathcal{I}}$ act similarly to the 1-dimensional intervals in $\mathcal{I}$: in particular, Lemma~\ref{lem:interval_props} carries over, with $I_k$ replaced by $\widehat{I}_k$, and with the line segment $\bigcup S$ replaced by the region $\bigcup \big(A(P) \cap \slab(P)\big)$.

\begin{lemma}
\label{lem:sequencewidth}
Let $w$ denote the minimum width of any rectangle in $A(P)$. Then for any $k \in \mathbb{N}$, the union of any $2k$ contiguous intervals in $\widehat{\mathcal{I}}$ has width at least $k w$.
\end{lemma}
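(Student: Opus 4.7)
Fix an arbitrary starting index $j$, so the $2k$ contiguous intervals are $\widehat{I}_{j+1}, \widehat{I}_{j+2}, \ldots, \widehat{I}_{j+2k}$. Since each $\widehat{I}_i$ is obtained by extending the 1D interval $I_i = (p_{i-1}, p_i]$ vertically across $\slab(P)$, the union of the $2k$ intervals is a rectangle of horizontal width $p_{j+2k} - p_j$. So it suffices to prove $p_{j+2k} - p_j \ge kw$. The approach is to exhibit $k$ pairwise disjoint ``covering'' sub-intervals of $(p_j, p_{j+2k}]$, each of width at least $w$, and sum their widths.

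\textbf{Key step: disjoint even-indexed covering segments.} For each $i \in \{1, 2, \ldots, k\}$, look at the covering rectangle $\widehat{c}_{j+2i} \in A(P)$ with bottom edge $c_{j+2i}$. Because $\widehat{c}_{j+2i}\in A(P)$ spans $\slab(P)$ and has width at least $w$ (by definition of $w$), its bottom edge $c_{j+2i}$ also has length at least $w$. By construction, $c_{j+2i}$ has right endpoint $p_{j+2i}$. I claim its left endpoint lies strictly to the right of $p_{j+2i-2}$. Indeed, suppose for contradiction that $c_{j+2i}$ contains $p_{j+2i-2}$. Then $c_{j+2i}$ is a segment in the covering family that intersects $p_{j+2i-2}$, so by the greedy choice of $p_{j+2i-1}$ as the \emph{rightmost} right-endpoint among all such segments, we would have $p_{j+2i-1} \ge p_{j+2i}$; but $p_{j+2i-1} < p_{j+2i}$ by construction, a contradiction. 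Hence $c_{j+2i} \subset (p_{j+2i-2}, \, p_{j+2i}]$. The intervals $(p_{j+2i-2}, p_{j+2i}]$ for $i = 1, \ldots, k$ tile $(p_j, p_{j+2k}]$ and are pairwise disjoint, so the $k$ segments $c_{j+2}, c_{j+4}, \ldots, c_{j+2k}$ are also pairwise disjoint.

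\textbf{Conclusion.} Summing the widths, $p_{j+2k} - p_j \ge \sum_{i=1}^{k} |c_{j+2i}| \ge k w$, which is exactly the claimed bound on the width of the union of the $2k$ contiguous intervals. The main subtlety is the contradiction argument in the key step, since it depends critically on the greedy maximality in the construction of the sequence $(p_i)$ and on the fact that covering rectangles in $A(P)$ span the full slab (so their horizontal width equals the length of their bottom edge). Once these are in place, disjointness and the width-summing conclusion are immediate.
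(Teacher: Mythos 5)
Your proof is correct and follows essentially the same approach as the paper's: pair the $2k$ contiguous intervals into $k$ consecutive disjoint blocks, observe that the covering rectangle for each even-indexed interval fits entirely inside its block and has width at least $w$, and sum. The only difference is cosmetic — the paper invokes Lemma~\ref{lem:interval_props}(2) to show that $\widehat{c}_k \subset \widehat{I}_{k-1}\cup\widehat{I}_k$, whereas you re-derive that containment inline via the greedy maximality of the $p_i$'s; the underlying argument is the same.
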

\begin{proof}
By construction, every covering rectangle $\widehat{c}_k$ intersects $\widehat{I}_k$ and $\widehat{I}_{k-1}$. By Lemma~\ref{lem:interval_props}(2), $\widehat{c}_k$ does not intersect any other intervals in $\widehat{I}$. Thus, $\widehat{c}_k \subset \widehat{I}_{k-1} \cup \widehat{I}_{k}$. This means that every pair of intervals has width at least $w$. As there are $k$ disjoint pairs of intervals in a set containing $2k$ contiguous intervals, such a set must have width at least $kw$.
\end{proof}

\begin{lemma}\label{lem:subproblemspanner}
The subgraph $H(P)$ has the following properties:
\begin{enumerate}
    \item for every edge $ab \in G(P)$ with $a \in \across(P)$, 
     $H(P)$ contains an $ab$-path of length at most 2; 
    \item $H(P)$ contains $O(\alpha^2\, |S(P)|)$ edges.
\end{enumerate}
\end{lemma}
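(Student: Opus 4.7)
The plan is to prove the stretch claim (1) and the sparsity claim (2) separately, exploiting the decomposition $H(P) = H_\bottomset(P) \cup H_\topset(P) \cup H_\inside(P)$ and the interval partition $\widehat{\mathcal{I}}$ of $\bigcup(\across(P) \cap \slab(P))$ built from the across rectangles.

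For (1), I would case on which of the three sets contains $b$. Since $a \in \across(P) = \bottomset(P) \cap \topset(P)$, if $b \in \bottomset(P)$ then both endpoints lie in $\bottomset(P)$ and the 2-hop spanner $H_\bottomset(P)$ (supplied by Corollary~\ref{cor_2d_line}) yields the required path; the case $b \in \topset(P)$ is symmetric. The substantive case is $b \in \inside(P)$. Here I would pick any $p \in a \cap b$; since $p \in a \subset \bigcup \across(P)$ and $p \in b \subset \mathrm{int}(\slab(P))$, the 2D analog of Lemma~\ref{lem:interval_props}(1) guarantees that $p$ lies in some strip $\widehat{I}_k \in \widehat{\mathcal{I}}$. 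Then $a$ and $b$ both meet $\widehat{I}_k$ at $p$. The subgraph $H_\inside'(P)$ is itself a union of stars centered at covering rectangles (Theorem~\ref{thm:intervals} applied through Corollary~\ref{cor_2d_line}), so it contains the direct edge $(a, \widehat{c}_k)$; meanwhile the construction of $H_\inside(P)$ explicitly adds the edge $(b, \widehat{c}_k)$. Hence $a - \widehat{c}_k - b$ is a length-2 path in $H(P)$.

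For (2), the subgraphs $H_\bottomset(P)$, $H_\topset(P)$, and $H_\inside'(P)$ together contribute $O(|S(P)|)$ edges by Corollary~\ref{cor_2d_line}, so the main task, and the main obstacle, is to bound the number $N$ of extra edges joining $\inside(P)$-rectangles to covering rectangles; this is exactly where $\alpha$-fatness enters. For a single $s \in \inside(P)$, the strips $\widehat{I}_k$ it intersects form a contiguous run, since the strips partition a horizontal range. If the run has length at least $2k$, then Lemma~\ref{lem:sequencewidth} shows its total horizontal width is at least $k w$, where $w$ is the minimum width of rectangles in $\across(P)$. Every $r \in \across(P)$ spans the slab vertically, so it has height at least $h := t_P - b_P$; $\alpha$-fatness then forces $w \ge h/\alpha$. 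Conversely, $s \subset \mathrm{int}(\slab(P))$ has height at most $h$, so $\alpha$-fatness forces $s$ to have width at most $\alpha h$. Combining, $k \le \alpha h/(h/\alpha) = \alpha^2$, so $s$ intersects only $O(\alpha^2)$ strips and contributes $O(\alpha^2)$ edges to $N$. Summing over $s \in \inside(P)$ yields $N = O(\alpha^2 \, |\inside(P)|)$, and hence $|H(P)| = O(\alpha^2 \, |S(P)|)$, as claimed.
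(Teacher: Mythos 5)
Your part (1) argument is correct and matches the paper's proof: you case on whether $b$ lies in $\bottomset(P)$, $\topset(P)$, or $\inside(P)$, and in the last case both $a$ and $b$ are seen to intersect the strip $\widehat{I}_k$ containing a point of $a\cap b$, giving the 2-hop path through $\widehat{c}_k$. (The paper cites Lemma~\ref{lem:interval_props}(3) rather than (1), but these say the same thing here.)

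Your part (2), however, has a genuine gap in the step ``Combining, $k \le \alpha h/(h/\alpha) = \alpha^2$.'' That inequality implicitly assumes the \emph{total width of the run} of strips that $s$ intersects is at most the width of $s$. This is false: the leftmost strip $\widehat{I}_l$ and rightmost strip $\widehat{I}_r$ of the run only need to \emph{overlap} $s$, and either one can extend far beyond $s$ in the $x$-direction, so the run can be much wider than $\alpha h$. Consequently, applying Lemma~\ref{lem:sequencewidth} to the entire run does not give the claimed bound on the run length. The fix (which is what the paper does) is to apply the width bound only to the strips \emph{strictly between} $\widehat{I}_l$ and $\widehat{I}_r$: these are contained in the horizontal projection of $s$, so their total width is at most $\alpha h$, and Lemma~\ref{lem:sequencewidth} then caps their number at $O(\alpha^2)$, yielding $O(\alpha^2)$ strips for $s$ overall (the two endpoints add only $+2$). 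With that correction the sparsity argument goes through and is otherwise identical to the paper's.
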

\begin{proof}
\textbf{(1)} Every $b \in S(P)$ is in $\bottomset(P)$, $\topset(P)$, or $\inside(P)$. If $b \in \bottomset(P)$, then the claim follows from the definition of $H_{\bottomset}(P)$ and the fact that $H_{\bottomset}(P)$ is a subgraph of $H(P)$. Similarly, the claim holds when $b \in \topset(P)$.

Suppose $b \in \inside(P)$. By Lemma~\ref{lem:interval_props}(3), if there is an edge $ab$ in $G(P)$ then both $a$ and $b$ intersect some interval $\widehat{I}_k$. By construction of $H_{\centerset}(P)$, there is an edge between $a$ and $c_k$ and between $b$ and $c_k$ (or else either $a$ or $b$ is equal to $c_k$) and so there is a path of length at most 2 between $a$ and $b$ in $H_{\centerset}(P)$. As $H_{\centerset}(P)$ is a subgraph of $H(P)$, this proves the claim.

\smallskip \noindent \textbf{(2)} 
By construction, $H_{\bottomset}(P)$ contains $2\, | \bottomset(P) |$ edges, $H_{\bottomset}(P)$ contains $2\, | \topset(P) |$ edges, and $H_{\centerset}'(P)$ contains $2\, | \across(P) |$ edges.

We now bound the number of edges that are added to $H_{\centerset}'(P)$ to produce $H_{\centerset}(P)$. Let $h$ be the distance between $b_P$ and $t_P$. Every rectangle $a \in \across(P)$ has width at least $\Omega(\frac{h}{\alpha})$, as $a$ is $\alpha$-fat and has height at least $h$. Further, every rectangle $b \in \inside(P)$ has width less than $\alpha h$, as otherwise $b$ would cross $b_P$ or $t_P$.

Let $\widehat{I}_l, \widehat{I}_r \in \widehat{I}$, resp., be the leftmost and rightmost intervals that the rectangle $b \in \inside(P)$ intersects. If $\widehat{I}_l = \widehat{I}_r$, then $b$ intersects only one interval in $\widehat{\mathcal{I}}$.  Suppose this is not the case. Then $\widehat{I}_l$ and $\widehat{I}_r$ are interior-disjoint. The union of the intervals of $\widehat{\mathcal{I}}$ between $\widehat{I}_l$ and $\widehat{I}_r$ (if any exist) has width less than $\alpha h$; otherwise, $b$ could not intersect both $\widehat{I}_l$ and $\widehat{I}_r$. By Lemma~\ref{lem:sequencewidth}, the union of any consecutive $2\alpha^2$ intervals (each of width at least $h/\alpha$) has width at least $\alpha h$. Thus, $b$ intersects at most $2\alpha^2 - 1$ intervals other than $\widehat{I}_l$ and $\widehat{I}_r$.

This implies that every $b \in \inside(P)$ adds at most $O(\alpha^2)$ edges to $H_{\centerset}'$ during the construction of $H_{\centerset}$. Thus, $H_{\centerset}$ has at most $2 | \across(P) | + \alpha^2 | \inside (P) |$ edges. As $\inside(P)$, $\bottomset(P)$, $\topset(P)$, and $\across(P)$ are all subsets of $S(P)$, $H(P)$ has at most $O(\alpha^2 |S(P)|)$ edges.
\end{proof}

We prove that $H(S) = \bigcup_{P \in \mathcal{P}} H(P)$ has $O(\alpha^2 n \log n)$ edges and that it is a 2-hop spanner. We begin by considering the size. While some $H(P)$ may contain many edges, we bound the total size of $H(S)$ by showing that every rectangle in $S$ is involved in $O(\log n)$ subproblems.

\begin{lemma}
\label{lem:square_subproblem_counts}
For every rectangle $s \in S$, the following hold:
\begin{enumerate}
    \item there are $O(\log n)$ nodes $P \in \mathcal{P}$ where $s \in \inside(P)$;
    \item there are $O(\log n)$ nodes $P \in \mathcal{P}$ where $s \in \bottomset(P) \setminus \across(P)$; symmetrically, there are $O(\log n)$ nodes $P \in \mathcal{P}$ where $s \in \topset(P) \setminus \across(P)$;
    \item there are $O(\log n)$ nodes $P \in \mathcal{P}$ where $s \in \across(P)$.
\end{enumerate}
\end{lemma}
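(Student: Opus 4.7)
The plan is to trace, for a fixed rectangle $s \in S$, how $s$ is categorized (as one of $\inside(P)$, $\bottomset(P)\setminus\across(P)$, $\topset(P)\setminus\across(P)$, or $\across(P)$) at each node $P$ of the recursion tree $\mathcal{P}$, and then show that the transitions down the tree are restrictive enough that each category can be attained at only $O(\log n)$ nodes.

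First I would bound the depth of $\mathcal{P}$ by $O(\log n)$. The splitting line $c_P$ is chosen so that each child $P_i$ has $|\rep(P_i)|\le |\rep(P)|/2$, and since points on $c_P$ are excluded from both children's interior and $|\rep(P_r)|\le\binom{n}{2}$, the recursion bottoms out after at most $2\log_2 n + O(1)$ levels.

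The main work is a short case analysis on the relationship between the horizontal line $c_P$ and the vertical extent $[s_b,s_t]$ of $s$. If $s\in\inside(P)$: when $c_P\notin[s_b,s_t]$, $s$ lies in $\inside$ of exactly one child; when $c_P\in[s_b,s_t]$, $s$ becomes $\topset(P_1)\setminus\across(P_1)$ in the lower child and $\bottomset(P_2)\setminus\across(P_2)$ in the upper child, and $s$ is $\inside$ in no child. If $s\in\bottomset(P)\setminus\across(P)$: when $c_P>s_t$, $s\in\bottomset(P_1)\setminus\across(P_1)$ only; when $c_P\le s_t$, $s\in\across(P_1)$ and $s\in\bottomset(P_2)\setminus\across(P_2)$. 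The case $\topset(P)\setminus\across(P)$ is symmetric, and if $s\in\across(P)$ then $s$ does not propagate.

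These transitions force the three relevant node sets to be ``path-like.'' The set of nodes with $s\in\inside(P)$ is a root-to-descendant path (at most one child continues as $\inside$, and once $s$ leaves $\inside$ it never returns), of length at most the depth, giving (1). The set with $s\in\bottomset(P)\setminus\across(P)$ is also a path, starting at the root or at the child of the unique $\inside$-cut where $s$ first becomes $\bottomset\setminus\across$ and continuing one child per level; the same argument for $\topset\setminus\across$ yields (2). Finally, each $\across$ occurrence of $s$ is created only as the byproduct of a cutting step along the $\bottomset$ path or the $\topset$ path, contributing at most one $\across$ node per step, so the total count is bounded by the combined lengths of those two paths, again $O(\log n)$, giving (3). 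The main obstacle is executing the case analysis cleanly — in particular, verifying that the ``$\across$ terminates'' rule together with the at-most-one-continuing-child property keeps each of the three relevant node sets genuinely path-like rather than branching; the rest is routine bookkeeping.
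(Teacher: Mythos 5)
Your proof is correct, and it takes a genuinely different (though related) route from the paper's. You bound the depth of $\mathcal{P}$ by $O(\log n)$ exactly as the paper does, but then you argue locally about parent-to-child transitions: the set of nodes where $s\in\inside(P)$ forms a root-to-descendant path (at most one child continues as $\inside$, and once $s$ is cut it never re-enters $\inside$), the $\bottomset\setminus\across$ and $\topset\setminus\across$ node sets are likewise paths continuing one child per level, and every $\across$ occurrence has a parent in $\bottomset\setminus\across$ or $\topset\setminus\across$, so the $\across$ count is bounded by the lengths of those two paths. The paper instead argues globally level by level: using that slabs at a fixed level have pairwise disjoint interiors, it shows there is at most one node per level with $s\in\inside(P)$ (since $s$ must be contained in the slab), at most one with $s\in\bottomset(P)\setminus\across(P)$ (the highest slab $s$ touches), and at most two with $s\in\across(P)$ (a contradiction argument: a third such node at the same level would force $s\in\across$ of the parent of the middle one, contradicting the rule that $\across$ rectangles are not passed to children). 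The two arguments are equivalent in outcome — each says $s$ appears in $O(1)$ nodes per level of a depth-$O(\log n)$ tree — but the paper's is a per-level disjointness argument while yours is a transition/path-tracing argument. Your attribution of each $\across$ node to a unique cutting step on the $\bottomset$ or $\topset$ path gives a clean direct proof of part (3) without the paper's contradiction. One small caveat: your case analysis for $\bottomset\setminus\across$ tacitly uses that $s_b \le b_P < c_P$ so only the comparison of $c_P$ with $s_t$ matters; it is worth saying so, and worth treating boundary equalities ($c_P = s_t$, etc.) explicitly, but these do not change the outcome.
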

\begin{proof}
Notice that for any $k$, the slabs of nodes at level $k$ in the space partition tree have pairwise disjoint interiors. Since $S$ contains $n$ rectangles, there are at most $\binom{n}{2}$ intersections in $G(S)$. Thus, $| \rep(S) | \leq \binom{n}{2}$, and so the tree has $O(\log n)$ levels.

\textbf{(1)} For every level $k\in \mathbb{N}$ in the space partition tree, there is only one node $P$ where $s \in \inside(P)$. Suppose for the sake of contradiction that $s \in \inside(P_1)$ and $s \in \inside(P_2)$ with $P_1$ and $P_2$ in the same level and $P_1 \neq P_2$. By the definition of $\inside(.)$, $s$ is contained in $\slab(P_1)$ and in $\slab(P_2)$. As these slabs are disjoint, this is impossible. Summation over $O(\log n)$ levels of the recursion tree completes the proof.

\textbf{(2)} For every level $k \in \mathbb{N}$ in the tree, consider the node $P$ with the highest slab such that $\slab (P) \cap s \neq \emptyset$. Notice that $s \in \bottomset(P)$ and $s \notin \topset(P)$, so $s \in \bottomset(P) \setminus \across(P)$. Any other node $P'$ in this level that $s$ intersects lies strictly below $P$ (as nodes within a level have pairwise disjoint slab interiors) and $s$ is connected, so $s \in \bottomset(P')$ only if $s \in \topset(P')$. Thus, $P$ is the only node in level $k$ where $s \in \bottomset(P) \setminus \across(P)$. A symmetric argument proves that there is only one $P$ per level where $s \in \topset(P) \setminus \across(P)$.

\textbf{(3)} For every level $k \in \N$ in the tree, there are at most two nodes $P$ such that $s \in \across(P)$. Suppose for the sake of contradiction that there exist distinct $P_{1}$, $P_{2}$, and $P_{3}$ at level $k$ such that $s \in \across(P_{1}) \cap \across(P_{2}) \cap \across(P_{3})$. The interiors of the corresponding slabs are disjoint, so we may assume w.l.o.g.\ that $P_{1}$ lies below $P_{2}$, which lies below $P_{3}$.
As $s$ is connected, it intersects $b_P$ and $t_P$ for every node $P$ between $P_{1}$ and $P_{3}$. In particular, $s$ must be in $\across(P)$ for the sibling $P$ of $P_{2}$. Then $s$ is also in $\across(P')$ for the parent $P'$ of $P_2$. This is a contradiction---if $s$ were in the $\across$ set of the parent of $P_{2}$, it would not have been added to the set $S(P_{2}) \subset S(P') \setminus A(P')$ of rectangles for the child.
\end{proof}

\begin{corollary}
\label{cor:size}
For every $s \in S$, there are $O(\log n)$ nodes $P \in \mathcal{P}$ where $s \in S(P)$.
\end{corollary}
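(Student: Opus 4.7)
The plan is to observe that the corollary is a direct summation over the cases already enumerated in Lemma~\ref{lem:square_subproblem_counts}. First I would note that by the construction of the space partition tree, every rectangle $s \in S(P)$ falls into exactly one of the four disjoint categories determined by how $s$ sits relative to the bounding lines $b_P$ and $t_P$: either $s \subset \mathrm{int}(\slab(P))$ (so $s \in \inside(P)$), or $s$ meets $b_P$ but not $t_P$ (so $s \in \bottomset(P) \setminus \across(P)$), or $s$ meets $t_P$ but not $b_P$ (so $s \in \topset(P) \setminus \across(P)$), or $s$ meets both (so $s \in \across(P)$). Thus
\[
S(P) \;=\; \inside(P) \,\cup\, (\bottomset(P)\setminus\across(P)) \,\cup\, (\topset(P)\setminus\across(P)) \,\cup\, \across(P),
\]
and in particular $\{P : s \in S(P)\}$ is the union of the four sets of nodes counted by items (1), (2), (2'), and (3) of Lemma~\ref{lem:square_subproblem_counts}.

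Each of these four sets has size $O(\log n)$ by that lemma, so summing gives a total of $O(\log n)$ nodes $P$ with $s \in S(P)$. There is no real obstacle here --- the work has already been done in Lemma~\ref{lem:square_subproblem_counts}, and the corollary is essentially a bookkeeping step that assembles the case analysis into a single bound. The only subtlety worth stating explicitly is that the four membership categories truly partition $S(P)$, which follows immediately from the definitions of $\inside$, $\bottomset$, $\topset$, and $\across$, together with the fact that by construction $S(P)$ contains only rectangles that intersect $\slab(P)$.
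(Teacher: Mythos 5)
Your proof is correct and takes the same approach as the paper: write $S(P)$ as the union of the four sets from Lemma~\ref{lem:square_subproblem_counts} and apply that lemma to each, summing the $O(\log n)$ bounds. The extra observation that the four categories are actually disjoint is true but unnecessary for the $O(\log n)$ bound.
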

\begin{proof}
This follows from the fact that for every node $P$, $S(P)$ is the union of the four sets mentioned in Lemma~\ref{lem:square_subproblem_counts}: $S(P) = \inside(P) \cup \left(\bottomset(P) \setminus \across(P)\right) \cup \left(\topset(P) \setminus \across(P)\right) \cup \across(P)$. 
\end{proof}

\begin{lemma}
\label{lem:efficiency_axis}
$H(S)$ has $O(\alpha^2 n \log n)$ edges.
\end{lemma}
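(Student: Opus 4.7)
The plan is to bound $|E(H(S))|$ by double-counting contributions of individual rectangles across the nodes of the recursion tree $\mathcal{P}$. Since $H(S) = \bigcup_{P \in \mathcal{P}} H(P)$, the trivial bound is $|E(H(S))| \le \sum_{P \in \mathcal{P}} |E(H(P))|$. Lemma~\ref{lem:subproblemspanner}(2) already tells us that $|E(H(P))| = O(\alpha^2 |S(P)|)$, so the problem reduces to showing that $\sum_{P \in \mathcal{P}} |S(P)| = O(n \log n)$.

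To establish this, I would swap the order of summation and write
\[
\sum_{P \in \mathcal{P}} |S(P)| = \sum_{s \in S} |\{P \in \mathcal{P} : s \in S(P)\}|.
\]
Here Corollary~\ref{cor:size} does all the heavy lifting: for every rectangle $s \in S$, the number of nodes $P$ with $s \in S(P)$ is $O(\log n)$. Summing this bound over the $n$ rectangles in $S$ yields $\sum_{P \in \mathcal{P}} |S(P)| = O(n \log n)$.

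Combining the two estimates gives
\[
|E(H(S))| \le \sum_{P \in \mathcal{P}} |E(H(P))| = O\!\left(\alpha^2 \sum_{P \in \mathcal{P}} |S(P)|\right) = O(\alpha^2 n \log n),
\]
which is the claimed bound. There is no real obstacle here; both ingredients are already in place, and the only thing to be careful about is not double-counting rectangles that appear in more than one of $\inside(P)$, $\bottomset(P)\setminus \across(P)$, $\topset(P)\setminus \across(P)$, $\across(P)$ — but this is handled by Corollary~\ref{cor:size}, which bounds membership in the union $S(P)$ directly.
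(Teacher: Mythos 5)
Your proof is correct and is essentially the paper's argument: both bound $|E(H(P))| = O(\alpha^2 |S(P)|)$ via Lemma~\ref{lem:subproblemspanner}(2) and then invoke Corollary~\ref{cor:size} to show each rectangle lies in $O(\log n)$ of the sets $S(P)$; your explicit swap of the order of summation is just the paper's charging argument written out. No gap.
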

\begin{proof}
For every node $P$, $H(P)$ has $O(\alpha^2\, | S(P) |)$ edges by Lemma~\ref{lem:subproblemspanner}. Charge $O(\alpha^2 )$ edges to each rectangle in $S(P)$. By Corollary~\ref{cor:size}, each rectangle is charged at most $O(\log n)$ times, and so $H(S)$ has at most $O(\alpha^2 n \log n)$ edges.
\end{proof}

\begin{lemma}
\label{lem:correctness_axis}
$H(S)$ is a 2-hop spanner for $G(S)$.
\end{lemma}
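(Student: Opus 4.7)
The plan is to prove the following strengthening by induction on the height of $Q$ in $\mathcal{P}$: \emph{for every node $Q \in \mathcal{P}$ and every edge $ab$ in $G(S(Q))$, the union of the subgraphs $H(P)$ over the subtree of $\mathcal{P}$ rooted at $Q$ contains an $ab$-path of length at most $2$.} Applying this at the root $Q = P_r$ (where $S(P_r) = S$ and the subtree is all of $\mathcal{P}$) yields Lemma~\ref{lem:correctness_axis}, since that union is exactly $H(S)$.

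For the inductive step at an internal node $Q$ with children $Q_1, Q_2$, I split into two cases. If $a \in \across(Q)$ or $b \in \across(Q)$, Lemma~\ref{lem:subproblemspanner}(1) already supplies a $2$-hop $ab$-path inside $H(Q)$. Otherwise, I claim at least one child $Q_i$ satisfies $a, b \in S(Q_i)$, after which the induction hypothesis applied to $Q_i$ concludes this case. To prove the claim, I assume the contrary and analyze the position of $a$ and $b$ relative to the cut line $c_Q$: if $a$ is confined to one of $\slab(Q_1), \slab(Q_2)$ while $b$ refuses to follow $a$ into that same child, then $b$ must lie strictly on the opposite side of $c_Q$, contradicting $a \cap b \neq \emptyset$; and if $a$ straddles $c_Q$, then $b$ would have to miss both child slabs entirely to avoid joining $a$ in one of them, contradicting $b \in S(Q)$.

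The base case, where $Q$ is a leaf of $\mathcal{P}$ so that $\rep(Q) = \emptyset$, is the main obstacle, because neither $a$ nor $b$ need lie in $\across(Q)$. I would exploit that the representative $p = \rep(a,b) \in \rep(S)$ cannot sit in $\mathrm{int}(\slab(Q))$; writing $\slab(Q) = \R \times [y_1, y_2]$, this gives $p_y \leq y_1$ or $p_y \geq y_2$. In the former case, $p \in a \cap b$ forces the lower endpoints of the $y$-ranges of $a$ and $b$ to be at most $p_y \leq y_1$, while $a, b \in S(Q)$ forces their upper endpoints to be at least $y_1$; hence both rectangles meet $b_Q$, i.e., $a, b \in \bottomset(Q)$. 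The subgraph $H_\bottomset(Q) \subseteq H(Q)$, which is a $2$-hop spanner for $G(\bottomset(Q))$ by Corollary~\ref{cor_2d_line}, then contains the required path; the case $p_y \geq y_2$ is symmetric, via $H_\topset(Q)$. This step relies on a property slightly stronger than the formal statement of Lemma~\ref{lem:subproblemspanner}(1)---namely, that $H_\bottomset(Q)$ and $H_\topset(Q)$ are full $2$-hop spanners for their sets---but this is immediate from the construction and requires no additional work.
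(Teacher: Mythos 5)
Your proof is correct, and it rests on the same two pillars as the paper's own argument: the representative point $p\in a\cap b$ must lie on a cut line of the space partition (because leaves have $\rep(Q)=\emptyset$), and an edge incident to a rectangle in $\across(P')$ is handled by Lemma~\ref{lem:subproblemspanner}(1). You organize these as an explicit induction on tree height, whereas the paper picks a node $P$ with $p\in b_P$ directly and then ``traces up'' to the node $P'$ where a rectangle was removed; both routes ultimately land in $H_{\bottomset}$/$H_{\topset}$ of some node or in $H(P')$ via the across-set lemma. Your version is a bit more careful on two points the paper leaves implicit: (i) the case analysis showing that if neither $a$ nor $b$ is in $\across(Q)$ then both survive into a common child $S(Q_i)$ (the paper jumps straight to ``there is some $P'$ for which both $a$ and $b$ are in $S(P')$ but \dots'' without spelling this out), and (ii) the observation that $H_{\bottomset}(Q)$ and $H_{\topset}(Q)$ are \emph{full} 2-hop spanners of $G(\bottomset(Q))$ and $G(\topset(Q))$, a property stronger than what Lemma~\ref{lem:subproblemspanner}(1) records but which the paper's proof also uses (via ``$H_B(P)$ contains an $ab$-path'') and which follows immediately from Corollary~\ref{cor_2d_line}. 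So this is the same approach, with the induction making the bookkeeping explicit rather than introducing a genuinely new idea.
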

\begin{proof}
Let $ab$ be an edge in $G(S)$. As the rectangles $a$ and $b$ intersect, there is some point $p \in \rep(S)$ that lies in $a \cap b$. 
Since $p$ is not in the interior of any slab at the leaf level, a horizontal line of the space partition contains $p$. 
Assume w.l.o.g. that this line is $b_P$ for some node $P$.
If both $a$ and $b$ are present in $S(P)$, then $H_B(P)$ contains an $ab$-path of length at most 2. 
Otherwise, there is some node $P'$ for which both $a$ and $b$ are in $S(P')$ but either $a$ or $b$ is \emph{not} in the set for either child of $P'$. Assume w.l.o.g.\ that $a$ was removed. By construction, a rectangle is removed exactly when it is in $\across(P')$. By Lemma~\ref{lem:subproblemspanner}, $H(P')$ contains an $ab$-path of length at most 2. As $H(S) = \bigcup_{P \in \mathcal{P}} H(P)$, this proves that $H(S)$ contains such a path.
\end{proof}

The previous two lemmata prove the following theorem.

\begin{theorem}\label{thm:sq}
The intersection graph of every set of $n$ axis-aligned $\alpha$-fat rectangles in the plane admits a 2-hop spanner with $O(\alpha^2 n\log n)$ edges.
\end{theorem}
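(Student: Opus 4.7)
The plan is direct: both halves of the theorem are already packaged in the two preceding lemmas, so the proof amounts to assembling them into one statement. Take $H(S)=\bigcup_{P\in\mathcal{P}} H(P)$, where the space partition tree $\mathcal{P}$ and the per-node subgraphs $H(P)=H_\bottomset(P)\cup H_\topset(P)\cup H_\inside(P)$ are exactly as constructed above. Then invoke Lemma~\ref{lem:correctness_axis} for stretch and Lemma~\ref{lem:efficiency_axis} for size.

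For stretch, the argument I would present is the trace already used in Lemma~\ref{lem:correctness_axis}: given an edge $ab\in G(S)$, pick any representative point $p\in a\cap b$ from $\rep(S)$. Because the recursive partition halves $\rep$ at each step, $p$ must eventually lie on some horizontal splitting line $b_P$. From the root down to this $P$, follow the node $P'$ where $a$ and $b$ last coexist in $S(P')$. At $P'$ either both rectangles already appear together in $\bottomset(P')$ or $\topset(P')$, and the subgraph built via Corollary~\ref{cor_2d_line} gives the 2-hop path directly; or one of them, say $a$, belongs to $\across(P')$ and is then removed from both children, in which case Lemma~\ref{lem:subproblemspanner}(1) guarantees an $ab$-path of length at most $2$ inside $H(P')$.

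For the edge count, combine Lemma~\ref{lem:square_subproblem_counts} and Corollary~\ref{cor:size} with Lemma~\ref{lem:subproblemspanner}(2). Each rectangle $s\in S$ appears in $S(P)$ for only $O(\log n)$ nodes $P$, because at every level of $\mathcal{P}$ it occupies at most a constant number of slabs in each of the four membership categories $\inside$, $\bottomset\setminus\across$, $\topset\setminus\across$, and $\across$. At each such $P$, the per-node budget is $O(\alpha^2\,|S(P)|)$ edges, so charging $O(\alpha^2)$ to $s$ at every one of its $O(\log n)$ appearances and summing over $s\in S$ yields $O(\alpha^2 n\log n)$ edges in total.

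The genuinely nontrivial ingredient is packed into Lemma~\ref{lem:subproblemspanner}(2), which is the one place fatness is used: rectangles in $\across(P)$ have height $\geq h$ (the slab height), hence width $\Omega(h/\alpha)$, while rectangles in $\inside(P)$ have width at most $\alpha h$; Lemma~\ref{lem:sequencewidth} then forces every inside rectangle to cross only $O(\alpha^2)$ of the intervals $\widehat{I}_k$, keeping the edge contribution of $H_\inside(P)$ linear in $|\inside(P)|$. Once this bound is in hand, the theorem is a one-line consequence of the two preceding lemmas, so I would simply write: by Lemma~\ref{lem:correctness_axis} the graph $H(S)$ is a 2-hop spanner of $G(S)$, and by Lemma~\ref{lem:efficiency_axis} it has $O(\alpha^2 n\log n)$ edges, proving the theorem.
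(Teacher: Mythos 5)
Your proposal is correct and follows the paper's proof exactly: the theorem is stated in the paper as an immediate consequence of Lemma~\ref{lem:correctness_axis} (stretch) and Lemma~\ref{lem:efficiency_axis} (size), and your additional recap of how those lemmas are established matches the paper's arguments. Nothing to add or correct.
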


\section{Three-Hop Spanners for Fat Convex Bodies}
\label{sec:fat}

In this section, we generalize Theorem~\ref{thm:sq} to fat convex bodies in the plane, at the expense of increasing the hop-stretch factor from 2 to 3. 
A convex body $C\subset \R^d$ is \emph{$\alpha$-fat} if the ratio between the radii of the minimum enclosing and maximum inscribed balls is bounded by $\alpha$. 
A family of convex bodies are called \emph{fat} if they are $\alpha$-fat for some constant $\alpha$.

\subsection{Spanner Construction}
\label{ssec:fat-construction}

Let $S$ be a set of fat convex bodies in the plane. For each $s\in S$, let $r(S)$ be the axis-parallel bounding box of $s$, which is also fat (possibly with a larger $\alpha$); and let $\mathcal{R}=\{r(s):s\in S\}$.

We partition $\mathbb{R}^2$ into slabs w.r.t.\ $\mathcal{R}(S)$ as described in Section~\ref{ssec:squares}. Notice that $s \in S$ intersects a horizontal line if and only if $r(s) \in \mathcal{R}(S)$ intersects that line. Let $S(P)$, $\bottomset(P)$, $\topset(P)$, $\across(P)$, and $\inside(P)$ denote subsets of $S$ (not $\mathcal{R}(S)$) following the same criteria as in Section~\ref{ssec:squares}. For a leaf node $P$, we will use the same spanner $H(P)$; for an internal node $P$, we modify $H(P)$ to capture all intersections involving $\across(P)$ in the slab. As before, let $H_{\bottomset}(P)$ (resp., $H_{\topset}(P)$) be a spanner for the intersection graph of $\{s\cap b_P: s\in \bottomset(P)\}$ (resp., $\{s\cap t_P: s\in \topset(P)\}$), using Theorem~\ref{thm:intervals}. Unlike in the case of axis-aligned fat rectangles, $H_{\bottomset}(P)$ is \textit{not} necessarily a spanner for all edges between $\bottomset(P)$: it is possible for two convex bodies to intersect in $\slab(P)$ but not intersect along $b_P$ (see, for example, $l$ and $r$ in Fig.~\ref{fig:fatbodies}). We modify $H_{\inside}(P)$ to account for these intersections, as well as to handle intersections between $\across(P)$ and $\inside(P)$ for fat convex bodies.

\paragraph{Construction of $H_{\inside}'(P)$.}
We construct a set of spanning stars with centers $\mathcal{C}=\{c_0,\ldots , c_m\}$. This is similar to the construction for rectangles. However, in the construction for rectangles we chose $c_{k+1}$ greedily from the neighbors of $c_k$, while now we choose $c_{k+1}$ from the neighbors of a larger set. 
In our construction, $c_{k}$ and $c_{k+1}$ do not necessarily intersect. 
We construct another set of centers $\mathcal{C}'=\{c_0',\ldots , c_{m-1}'\}$
to cover all bodies between $c_k$ and $c_{k+1}$. 

For the construction of $H_{\inside}'$, we only consider intersections within $\slab(P)$: $a_1$ and $a_2$ \emph{intersect in $\slab(P)$} if $a_1\cap a_2\cap\slab(P)\neq\emptyset$; and they are \emph{disjoint w.r.t.\ $\slab(P)$} if $a_1\cap a_2\cap\slab(P)=\emptyset$.
We assume w.l.o.g.\ that the intersection graph of $\across(P)$ w.r.t.\ $\slab(P)$ is connected; otherwise, we repeat this construction for each component.
Sort the objects $a\in \across(P)$ in increasing order according to the minimum $x$-coordinate of $a\cap b_P$, and break ties arbitrarily. For all $a\in \across(P)$, let $\rank(a)$ denote the rank of $a$ in this order; see Fig.~\ref{fig:fatrank}.

We construct $\mathcal{C}$ and $\mathcal{C}'$ as follows. 
Let $c_0 \in \across(P)$ be the object with smallest rank, and $k:=0$. 
While $\rank(c_k)<|\across(P)|$, let $c_{k+1}$ be the body in $\across(P)$ with maximum rank that intersects in $\slab(P)$ some body with rank less than or equal to $\rank(c_k)$; and set $k:=k+1$.
Let $L_k =\{a\in A(P): \rank(a)\leq \rank(c_k)\}$. 
For every $k\in \{0,\ldots, m-1\}$, let $c_k'\in \across(P)$ be the body in $L_k$ of maximal rank that intersects $c_{k+1}$ in $\slab(P)$.
Let $\mathcal{C} = \{c_0, \ldots, c_m\}$ and let $\mathcal{C}' = \{c_0', \ldots, c_{m-1}'\}$.

We can now construct $H_{\inside}'(P)$. For every $c \in \mathcal{C} \cup \mathcal{C}'$, add an edge between $c$ and every $a \in A(P)$ that it intersects in $\slab(P)$. Let $H_{\inside}'(P)$ be the union of these stars.

\begin{figure}[htbp]
 \centering
 \includegraphics[width=.95\textwidth]{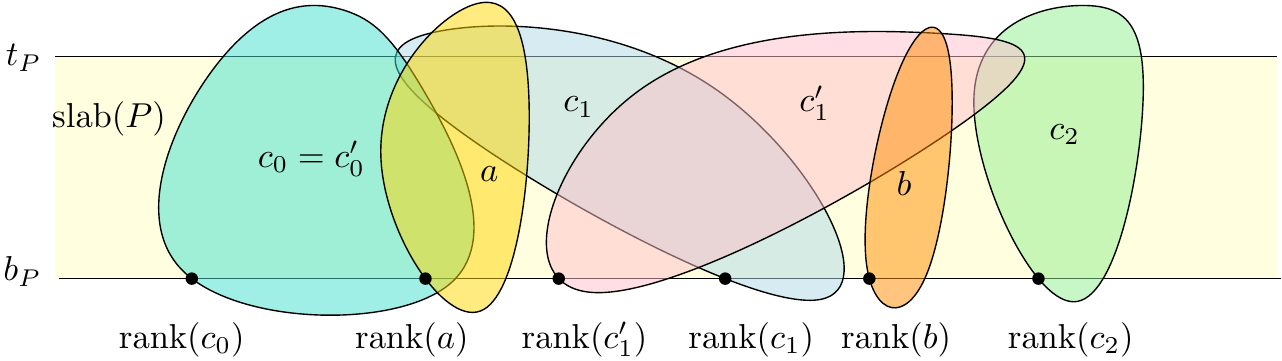}
 \caption{Fat convex bodies in $\across(P)$, ordered by rank. Some bodies are in $\mathcal{C}$ and $\mathcal{C}'$.}
    \label{fig:fatrank}
\end{figure}

\paragraph{Construction of $H(P)$.}
We add edges to $H_{\inside}'(P)$. For every $c \in \mathcal{C} \cup \mathcal{C}'$, let $\mathcal{N}(c) \subset \across(P)$ denote the neighbors of $c$ in $H_{\inside}'(P)$. For every $s \in \inside(P)$, add an edge from $s$ to every $c \in \mathcal{C}' \cup \mathcal{C}$ that it intersects. If $s$ intersects some body in $\mathcal{N}(c)$ but not $c$, add an edge from $s$ to one such body in $\mathcal{N}(c)$. Repeat this process for every $s \in \bottomset(P)$ (resp., $\topset(P)$), considering only the convex body $a \in A(P)$ where $s \cap a$ is contained in the interior of $\slab(P)$ and does not intersect $b_P$ (resp., $t_P$). Call the resulting graph $H_{\inside}(P)$. Let $H(P) = H_{\bottomset}(P) \cup H_{\topset}(P) \cup H_{\inside}(P)$.

\subsection{Stretch Analysis}
\label{ssec:fat-stretch}
In the rectangular case, there was a simple proof for the stretch factor based on the fact that $\slab(P)$ could be partitioned into intervals, each covered by some rectangle in $S(P)$. Thus every pair of intersecting rectangles were attached to a common star. This is no longer the case here. 
Instead, we show that every convex body $a\in \across(P)$ intersects a nearby star center in $\mathcal{C}$ or $\mathcal{C}'$, using the greedy choice of successive elements in $\mathcal{C}$ and $\mathcal{C}'$. We cannot guarantee that every pair of intersecting convex bodies are in a common star, but if they are not, they will be in two nearby stars with adjacent centers. 
%

\begin{lemma}\label{lem:order}
The following properties hold for every $P$:
\begin{enumerate}
    \item\label{o:i} if $l, m, r \in A(P)$ with $\rank(l) \leq \rank(m) \leq \rank(r)$ and $l$ intersects $r$ in $\slab(P)$, then $m$ intersects $l$ or $r$ in $\slab(P)$;
    \item\label{o:ii} for every $k \in \{0, \hdots, m-1\}$, $c_{k}' \in \mathcal{C}'$ intersects $c_{k} \in \mathcal{C}$ in $\slab(P)$;
    \item\label{o:iii} every $a \in \across(P)$ intersects some $c \in \mathcal{C} \cup \mathcal{C}'$ in $\slab(P)$.
\end{enumerate}
\end{lemma}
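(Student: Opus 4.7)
The proof splits into a geometric heart in (1) and two applications in (2)--(3). For (1), I would argue by contradiction, assuming $m$ is disjoint from both $l$ and $r$ in $\slab(P)$. The key preliminary is that whenever $a, b \in \across(P)$ are disjoint in $\slab(P)$ and $\rank(a) \le \rank(b)$, the segments $a \cap b_P$ and $b \cap b_P$ are disjoint (since $b_P \subset \slab(P)$) and the rank ordering forces $a \cap b_P$ to lie strictly to the left of $b \cap b_P$ on $b_P$. Applied to the triple, this places $l \cap b_P$, $m \cap b_P$, $r \cap b_P$ in strict left-to-right order along $b_P$. Pick any $p \in l \cap r \cap \slab(P)$; the ordering just derived shows $p \notin b_P$. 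Form the closed triangle $T$ with vertices $p$, the right endpoint of $l \cap b_P$, and the left endpoint of $r \cap b_P$. By convexity of $l$ and $r$, the left edge of $T$ lies in $l$, the right edge in $r$, and the bottom edge on $b_P$ contains $m \cap b_P$ in its interior. Since $m$ is convex, crosses the slab from $b_P$ to $t_P$, and enters $T$ from its bottom edge, any path in $m$ from $m \cap b_P$ to $m \cap t_P$ must leave $T$. Exiting through the bottom edge is impossible (the path stays in $\slab(P)$), so it must cross either the left edge (giving $m \cap l \cap \slab(P) \neq \emptyset$) or the right edge (giving $m \cap r \cap \slab(P) \neq \emptyset$), contradicting the assumption.

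The main obstacle is the degenerate case where every point of $l \cap r \cap \slab(P)$ lies on $t_P$, forcing the apex of $T$ onto the top of the slab. I would handle it separately: either $m \cap t_P = \{p\} \subset l \cap r$, which already yields an intersection of $m$ with $l$ (and with $r$) in $\slab(P)$, or $m$ reaches some point of $t_P$ strictly outside $T$, and the same exit argument applies (the only way to leave $T$ into the exterior is again through the $l$-edge or the $r$-edge).

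For (2), I would apply (1) to the triple $(c_k', c_k, c_{k+1})$: the rank inequalities follow from $c_k' \in L_k$ and the definition of $c_{k+1}$, while $c_k' \cap c_{k+1} \cap \slab(P) \neq \emptyset$ holds by construction of $c_k'$. Thus (1) gives that $c_k$ meets $c_k'$ or $c_{k+1}$ in the slab; in the latter case, the maximality of $c_k'$ in $L_k$ among bodies intersecting $c_{k+1}$ forces $c_k = c_k'$, so the conclusion still holds. For (3), given $a \in \across(P)$, I would take the largest $k$ with $\rank(c_k) \le \rank(a)$; this $k$ exists because the connectedness assumption on the intersection graph of $\across(P)$ in $\slab(P)$ guarantees that the greedy loop terminates with $\rank(c_m) = |\across(P)|$. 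If $\rank(a) = \rank(c_k)$ then $a = c_k \in \mathcal{C}$; otherwise $\rank(c_k) < \rank(a) < \rank(c_{k+1})$, and applying (1) to $(c_k', a, c_{k+1})$ yields that $a$ intersects $c_k' \in \mathcal{C}'$ or $c_{k+1} \in \mathcal{C}$ in $\slab(P)$.
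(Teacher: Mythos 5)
Your proof is correct, and part~(1) takes a genuinely different route from the paper's. The paper constructs a Jordan arc $J \subset (l \cup r) \cap \slab(P)$ joining the leftmost points of $l\cap b_P$ and $r\cap b_P$, closes it with a segment of $b_P$ to obtain a Jordan curve $J_0$, places a point of $m$ in the interior of $J_0$ and a point of $m\cap t_P$ in its exterior (or on $J$), and invokes the Jordan curve theorem to force the connecting segment to cross $J$, hence $l\cup r$. You instead build an explicit triangle $T$ with apex $p\in l\cap r\cap\slab(P)$, lateral edges $pq_l\subset l$ and $pq_r\subset r$ (by convexity), and base $q_lq_r$ on $b_P$ strictly containing $m\cap b_P$; a chord of $m$ from $b_P$ to $t_P$ must escape $T$ through a lateral edge, producing the desired intersection. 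Both arguments rest on the same trapping intuition, but the triangle version replaces the Jordan curve theorem with elementary convexity, pins down exactly where the intersection is found, and makes the degenerate case ($l\cap r\cap\slab(P)\subset t_P$) explicit rather than leaving it implicit as the paper does. Your preliminary observation that disjointness in $\slab(P)$ plus the rank order forces $a\cap b_P$ entirely to the left of $b\cap b_P$ is also correct and needed. For parts~(2) and~(3), you follow essentially the paper's reductions to~(1), though you prove~(2) directly rather than by contradiction, observing cleanly that $c_k$ intersecting $c_{k+1}$ forces $c_k'=c_k$; both are fine, but yours is a touch more streamlined.
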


\begin{figure}[htbp]
 \centering
 \includegraphics[width=.8\textwidth]{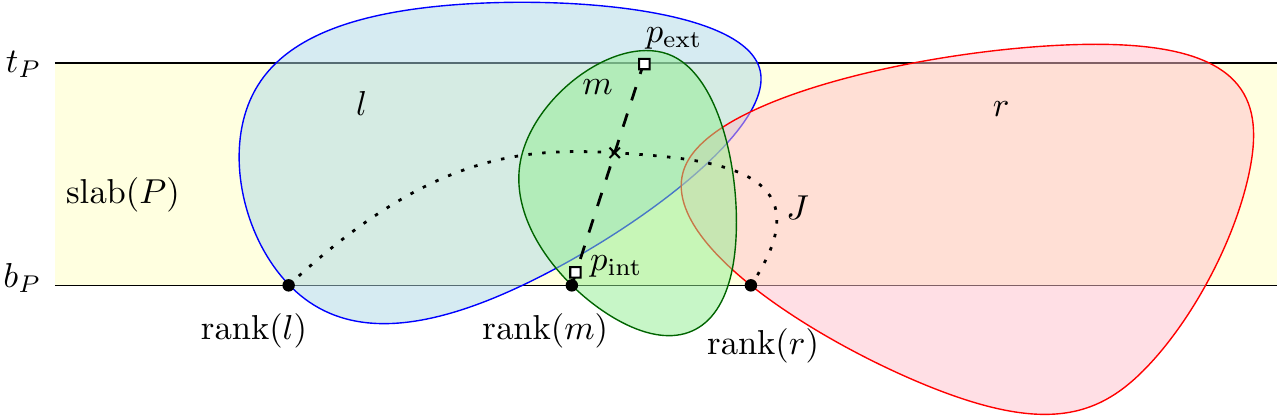}
 \caption{Fat convex bodies $l, m, r \in A(P)$ with $\rank(l) < \rank(m) < \rank(r)$. The curve $J$, and the line segment between $p_{\rm ext}$ and $p_{\rm int}$, from the proof of Lemma~\ref{lem:order}\eqref{o:i}.}
    \label{fig:fatbodies}
\end{figure}

\begin{proof}
\textbf{(1)} As $l$ and $r$ intersect in $\slab(P)$, there is a Jordan arc $J$  in $(l \cup r) \cap \slab(P)$ from the leftmost point in $l \cap b_P$ to the leftmost point in $r \cap b_P$; see Fig.~\ref{fig:fatbodies}. By connecting the endpoints of $J$ along the bottom line $b_P$, we obtain a closed Jordan curve $J_0$. As $m \in \across(P)$, the body $m$ contains some point $p_\text{ext}$ in $m \cap t_P$. Note that $p_\text{ext}$ is either in the exterior of $J_0$ or on the curve $J$. As $\rank(l) \leq \rank(m) \leq \rank(r)$, the body $m$ contains some point $p_\text{int}$ that is in the interior of $J_0$: Take $p_\text{int}$ to be a point slightly above the leftmost point in $m \cap b_P$, so that $p_\text{int}$ lies in the interior of $\slab(P)$ instead of on $b_P$.
By the Jordan curve theorem, the line segment between $p_\text{int}$ and $p_\text{ext}$ intersects $J_0$. As this segment does not intersect $b_P$, it intersects $J$. By convexity, $m$ intersects $J$.

\textbf{(2)} Suppose for the sake of contradiction that $c_{k}'$ did not intersect $c_{k}$ in $\slab(P)$. By construction, $\rank(c_{k}') < \rank(c_{k}) < \rank(c_{k+1})$. As $c_{k}'$ intersects $c_{k+1}$, property (\ref{o:i}) implies that $c_{k}$ intersects $c_{k}'$ or $c_{k+1}$. The latter case is impossible, as $c_{k}'$ by construction has the highest rank of all bodies in $L_k$ that intersect $c_{k+1}$, and $c_k$ has higher rank than $c_{k}'$.

\textbf{(3)} We may assume w.l.o.g.\ that $a \neq c_m$, so there exists some $k \in \mathbb{Z}$ such that $\rank(c_{k}) \leq \rank(a) < \rank(c_{k + 1})$, which exists as $a \neq c_m$. As $\rank(c_{k}') \leq \rank(c_{k})$, property (\ref{o:i}) implies that $a$ intersects either $c_{k}'$ or $c_{k+1}$ in $\slab(P)$.
\end{proof}

\begin{lemma}\label{lem:across-3hop}
If $a, b \in A(P)$ intersect in $\slab(P)$, then $H_{\inside}'(P)$ contains an $ab$-path of length at most $3$.
\end{lemma}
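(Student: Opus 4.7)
The plan is a case analysis on the ranks of $a$ and $b$ relative to the greedy center sequences $\mathcal{C}$ and $\mathcal{C}'$. Assume without loss of generality that $\rank(a)\le\rank(b)$. If either body lies in $\mathcal{C}\cup\mathcal{C}'$, the star at that center contains a direct edge to the other body (since they intersect in $\slab(P)$), giving a path of length at most $1$; henceforth assume $a,b\notin\mathcal{C}\cup\mathcal{C}'$. Let $k$ and $\ell$ be the largest indices with $\rank(c_k)<\rank(a)$ and $\rank(c_\ell)<\rank(b)$, so that $\rank(c_k)<\rank(a)<\rank(c_{k+1})$, $\rank(c_\ell)<\rank(b)<\rank(c_{\ell+1})$, and $k\le\ell$. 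A first step shows $\ell\le k+1$: since $\rank(a)<\rank(c_{k+1})$ we have $a\in L_{k+1}$, and because $b$ intersects $a$ in $\slab(P)$, the maximality of $c_{k+2}$ among bodies intersecting $L_{k+1}$ forces $\rank(b)\le\rank(c_{k+2})$.

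Throughout, the workhorse is the following consequence of Lemma~\ref{lem:order}\eqref{o:i}: if $\rank(c_j)\le\rank(x)\le\rank(c_{j+1})$ for some $x\in A(P)$, then applying \eqref{o:i} to the triple $(c_j',x,c_{j+1})$ using $c_j'\cap c_{j+1}\ne\emptyset$ shows that $x$ intersects $c_j'$ or $c_{j+1}$ in $\slab(P)$.

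If $\ell=k$, then both $a$ and $b$ have rank strictly between $\rank(c_k)$ and $\rank(c_{k+1})$, so the workhorse provides each with a center in $\{c_k',c_{k+1}\}$. Either they share one (path length $2$), or the edge $c_k'c_{k+1}$ from the construction of $c_k'$ yields a path of length $3$.

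The main case is $\ell=k+1$. Apply Lemma~\ref{lem:order}\eqref{o:i} to $(a,c_{k+1},b)$ with $a\cap b\ne\emptyset$ in $\slab(P)$ to conclude that $c_{k+1}$ intersects $a$ or $b$ in $\slab(P)$. If $c_{k+1}$ intersects $b$, combine the workhorse for $a$ with the edge $c_k'c_{k+1}$ for a path of length at most $3$. Otherwise $c_{k+1}$ intersects $a$ but not $b$, and the workhorse for $b$ yields either $b\cap c_{k+1}'\ne\emptyset$ (finish via the edge $c_{k+1}c_{k+1}'$ from \eqref{o:ii}) or $b\cap c_{k+2}\ne\emptyset$. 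The latter sub-case is the main obstacle, since $c_{k+1}$ and $c_{k+2}$ need not be adjacent; the target path becomes $a$-$c_{k+1}'$-$c_{k+2}$-$b$, using the edge $c_{k+1}'c_{k+2}$ guaranteed by the construction of $c_{k+1}'$, so it suffices to show $a\cap c_{k+1}'\ne\emptyset$. Split on whether $\rank(a)\le\rank(c_{k+1}')$: if so, \eqref{o:i} applied to $(a,c_{k+1}',b)$ with $a\cap b\ne\emptyset$ gives that $c_{k+1}'$ intersects $a$ or $b$, and ruling out the latter forces $a\cap c_{k+1}'\ne\emptyset$; if $\rank(c_{k+1}')<\rank(a)$, the maximality of $c_{k+1}'$ as the highest-rank body of $L_{k+1}$ intersecting $c_{k+2}$ forces $a\cap c_{k+2}=\emptyset$ in $\slab(P)$, and \eqref{o:i} applied to $(c_{k+1}',a,c_{k+2})$ with $c_{k+1}'\cap c_{k+2}\ne\emptyset$ then yields $a\cap c_{k+1}'\ne\emptyset$.
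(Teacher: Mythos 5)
Your proof is correct; it just organizes the case analysis differently from the paper. The paper anchors a single index $k$ by the position of $b$ (i.e.\ $\rank(c_k)<\rank(b)<\rank(c_{k+1})$) and then splits into four cases governed by the position of $c_k'$ relative to $a$, whether $b$ meets $c_k'$, and whether $a$ meets $c_k$; the resulting three-hop paths it exhibits are $(a,c_a,c_b,b)$ with $c_a,c_b\in\{c_k',c_{k+1}\}$, $(a,c_k,c_k',b)$, $(a,c_k',c_{k+1},b)$, and $(a,c_{k-1}',c_k,b)$. You instead index \emph{both} endpoints by the gap of $\mathcal{C}$ they fall in, observe via the greedy construction that $\ell\in\{k,k+1\}$, and then split on $\ell=k$ versus $\ell=k+1$. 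The two approaches invoke the same tools (Lemma~\ref{lem:order}\eqref{o:i}, the intersections $c_j'\cap c_{j+1}$ and $c_j'\cap c_j$, and the greedy maximality of the $c_j$ and $c_j'$), and they exhibit a similar menu of three-hop paths, so the difference is mainly organizational. Your ``adjacent gaps'' observation ($\ell\le k+1$) is a nice structural starting point, and your $\ell=k$ case is cleaner than the paper's Case~1; on the other hand your $\ell=k+1$ branch needs a further two-level split (sub-cases on $c_{k+1}$ and then on $\rank(a)$ vs.\ $\rank(c_{k+1}')$), landing at roughly the same total complexity. One path you produce, $(a,c_{k+1}',c_{k+2},b)$, does not appear in the paper's proof (which never looks forward to $c_{k+2}$), so the two arguments are genuinely distinct rather than a relabeling.
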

\begin{proof}
We may assume w.l.o.g.\ that $\rank(a) < \rank(b)$. If $a$ or $b$ is in $\mathcal{C} \cup \mathcal{C}'$, then $H_{\inside}'(P)$ contains the edge $ab$. Otherwise, there exists some $k \in \mathbb{Z}$ such that $\rank(c_{k}) < \rank(b) < \rank(c_{k+1})$. For simplicity, we assume that $c_{k}' \neq c_{k}$ and that $c_{k - 1}$ exists. If this is not the case, simpler versions of the same arguments carry over. We consider four cases; see Fig.~\ref{fig:across-3hop}.

\begin{figure}[htbp]
 \centering
 \begin{subfigure}{.45\textwidth}
  \centering
  \includegraphics[width=\textwidth]{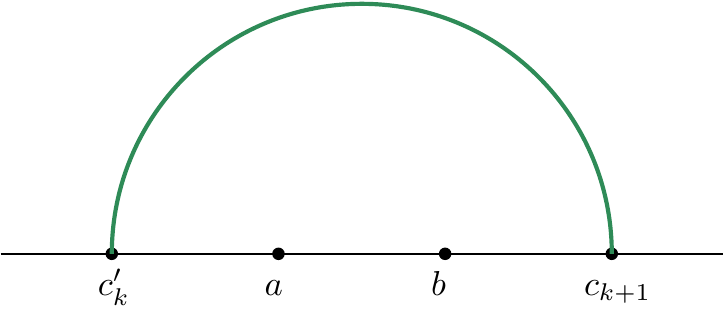}
  \caption{Case 1}
 \end{subfigure}
 \begin{subfigure}{.45\textwidth}
  \centering
  \includegraphics[width=\textwidth]{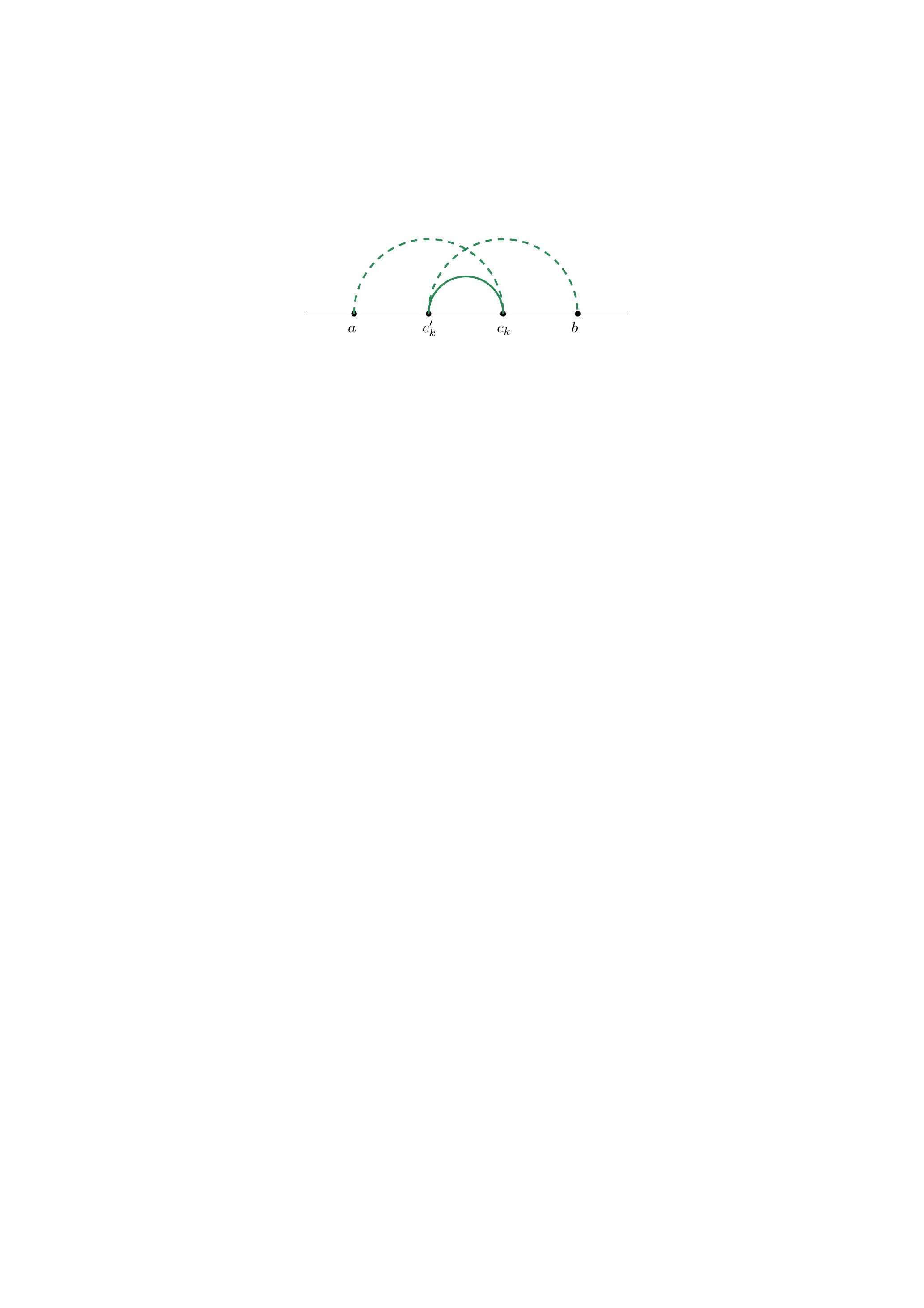}
  \caption{Case 2}
 \end{subfigure}
 \begin{subfigure}{.45\textwidth}
  \centering
  \includegraphics[width=\textwidth]{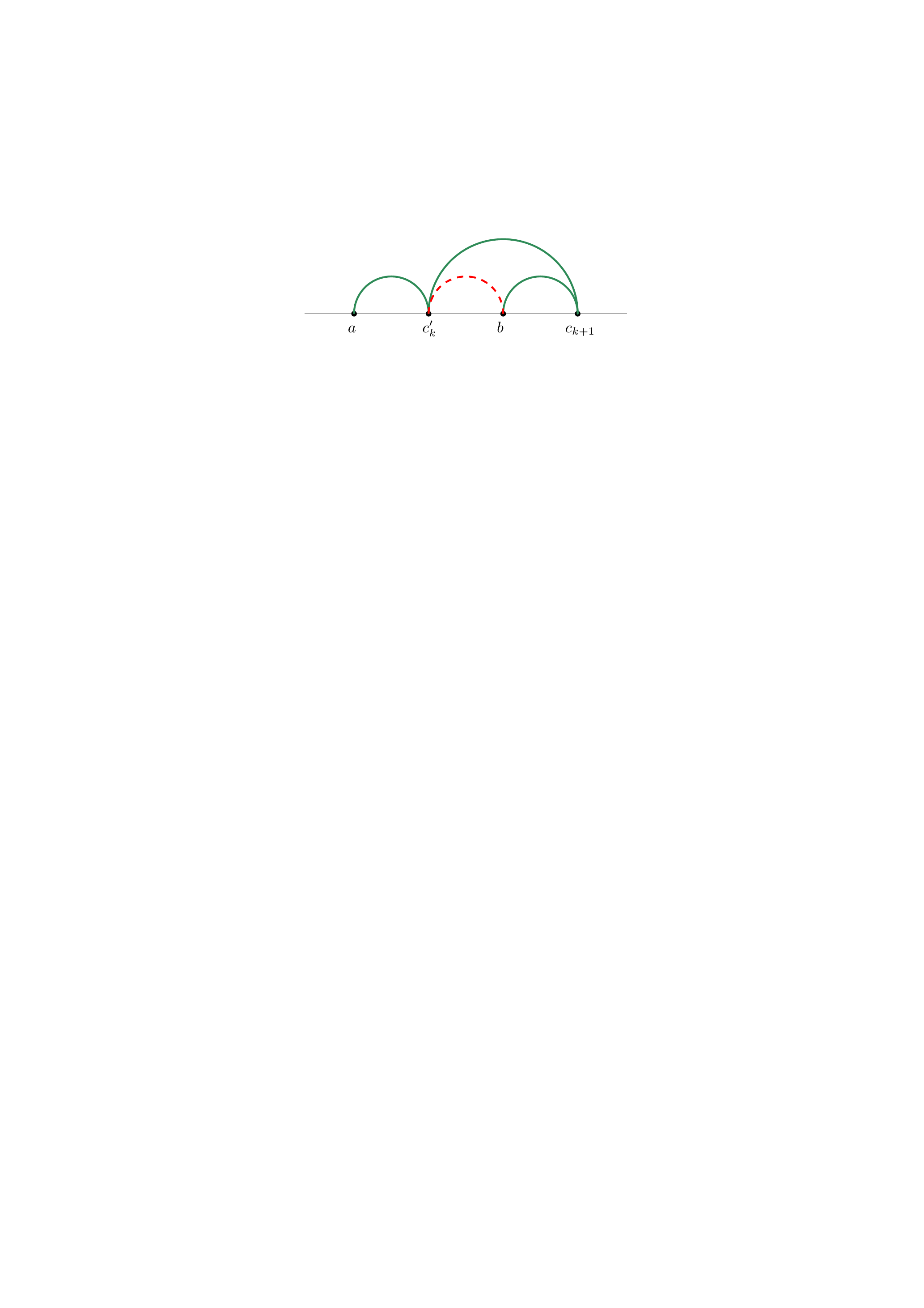}
  \caption{Case 3}
 \end{subfigure}
 \begin{subfigure}{.45\textwidth}
  \centering
  \includegraphics[width=\textwidth]{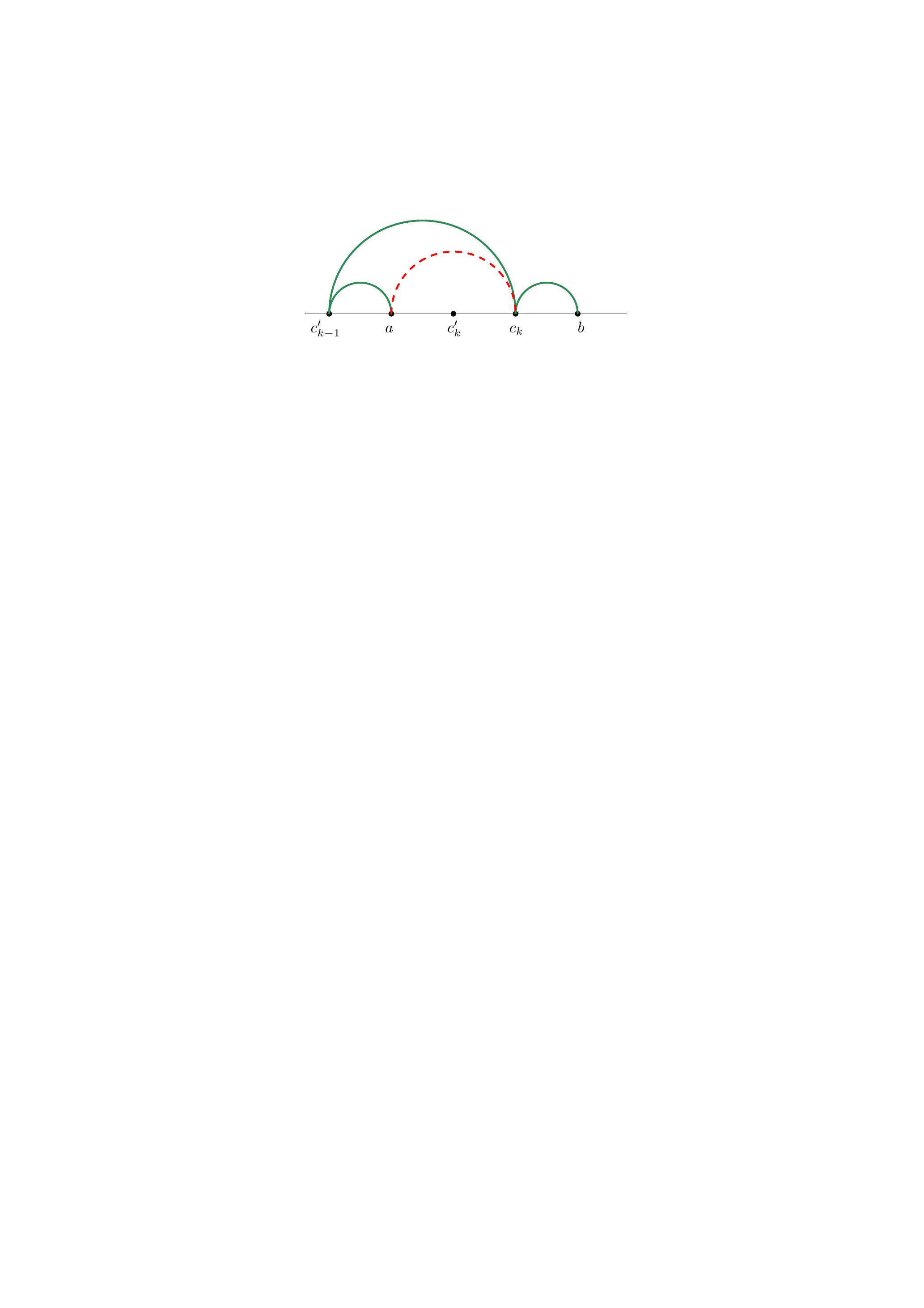}
  \caption{Case 4}
 \end{subfigure}
 \caption{Four cases considered in the proof of Lemma~\ref{lem:across-3hop}. Each body is represented by a point on a line, ordered by $\rank$. Green edges indicate that two bodies intersect, and that this intersection is used in a $3$-hop path from $a$ to $b$. Red edges indicate no intersection exists. Dashed edges indicate that an intersection exists (or not) by assumption.
 }
\label{fig:across-3hop}
\end{figure}

\smallskip\noindent\textbf{Case~1: $\rank(c_{k}') < \rank(a)$.} 
By construction, $\rank(c_{k}') \leq \rank(c_{k})$, and so $\rank(c_{k}') < \rank(a) < \rank(b) < \rank(c_{k+1})$. By Lemma~\ref{lem:order}(\ref{o:i}), $a$ (resp., $b$) intersects $c_a \in \{c_{k}', c_{k+1}\}$ (resp., $c_b \in \{c_{k}', c_{k+1}\}$). If $c_a = c_b$ then $H_{\inside}'(P)$ contains the $2$-hop path $(a, c_a, b)$; otherwise, $H_{\inside}'(P)$ contains a $3$-hop path $(a, c_a, c_b, b)$.

\smallskip\noindent\textbf{Case~2: $\rank(c_{k}') > \rank(a)$, $b$ intersects $c_{k}'$, and $a$ intersects $c_{k}$.}
By Lemma~\ref{lem:order}(\ref{o:ii}), $c_{k}$ intersects $c_{k}'$. Thus, $H_{\inside}'(P)$ contains the $3$-hop path $(a, c_{k}, c_{k}', b)$.

\smallskip\noindent\textbf{Case~3: $\rank(c_{k}') > \rank(a)$ and $b$ does not intersect $c_{k}'$.} Lemma~\ref{lem:order}(\ref{o:i}) implies that $b$ intersects $c_{k}'$ or $c_{k+1}$, and that $c_{k}'$ intersects $a$ or $b$. By assumption, $b$ does not intersect $c_{k}'$, so $b$ intersects $c_{k+1}$ and $a$ intersects $c_{k}'$. Thus, $H_{\inside}'(P)$ contains the $3$-hop path $(a, c_{k}', c_{k+1}, b)$.

\smallskip\noindent\textbf{Case~4: $\rank(c_{k}') > \rank(a)$ and $a$ does not intersect $c_{k}$.}
As $\rank(c_{k}') \leq \rank(c_{k})$, we have $\rank(a) < \rank(c_{k})$. Further, $\rank(c_{k-1}') < \rank(a)$: Otherwise, $a \in L_{k}$ and $a$ intersects $b$ with $\rank(b) > \rank(c_k)$, which contradicts the greedy construction of $\mathcal{C}$. Thus, $\rank(c_{k-1}') < \rank(a) < \rank(c_{k})$, and Lemma~\ref{lem:order}(\ref{o:i}) implies that $a$ intersects $c_{k-1}'$ or $c_{k}$. By assumption, $a$ does not intersect $c_{k}$, so $a$ intersects $c_{k-1}'$. Further, $\rank(a) < \rank(c_{k}) < \rank(b)$, so Lemma~\ref{lem:order}(\ref{o:i}) implies that $c_{k}$ intersects $b$. We conclude that $H_{\inside}'(P)$ contains the $3$-hop path $(a, c_{k-1}', c_{k}, b)$; see also Fig.~\ref{fig:fatspanner} for an example.
\end{proof}

\begin{lemma}\label{lem:fat-3hop}
If $a \in \across(P)$ and $b \in S(P)$ intersect in $\slab(P)$, then $H(P)$ contains an $ab$-path of length at most 3.
\end{lemma}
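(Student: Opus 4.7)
The plan is a case analysis on which piece of $S(P)=\inside(P)\cup\bottomset(P)\cup\topset(P)$ contains $b$. When $b\in\across(P)$, Lemma~\ref{lem:across-3hop} applied to $H_{\inside}'(P)\subseteq H(P)$ immediately yields an $ab$-path of length at most $3$. When $b\in\inside(P)$, I would invoke Lemma~\ref{lem:order}(\ref{o:iii}) to produce a star center $c\in\mathcal{C}\cup\mathcal{C}'$ that meets $a$ in $\slab(P)$, so $ac\in H_{\inside}'(P)$ and $a\in\mathcal{N}(c)$. By the construction of $H_{\inside}(P)$, if $b$ meets $c$ then the edge $bc$ gives the $2$-hop path $(a,c,b)$; otherwise the neighbor-of-neighbor rule is triggered by $b$ meeting $a\in\mathcal{N}(c)$ but not $c$, producing an edge $ba'$ with $a'\in\mathcal{N}(c)$, and combined with $a'c,ac\in H_{\inside}'(P)$ this yields the $3$-hop path $(a,c,a',b)$.

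The remaining case $b\in\bottomset(P)\setminus\across(P)$ (and symmetrically $\topset(P)\setminus\across(P)$) splits further on whether $a\cap b$ meets the bounding line $b_P$. If $a\cap b\cap b_P\neq\emptyset$, then $a\cap b_P$ and $b\cap b_P$ are overlapping sub-intervals of $b_P$, and Theorem~\ref{thm:intervals} applied to the interval $2$-hop spanner $H_{\bottomset}(P)$ yields an $ab$-path of length at most $2$. Otherwise $a\cap b$ lies entirely in $\mathrm{int}(\slab(P))$ and avoids $b_P$, which is precisely the regime targeted by the ``repeat this process'' clause for $\bottomset(P)$ in the construction of $H_{\inside}(P)$; the argument from the $b\in\inside(P)$ case then transfers verbatim to give a $3$-hop path in $H(P)$.

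The one point that needs care is this last case: I must check that when the ``repeat this process'' clause is applied to $b\in\bottomset(P)$, the filter ``$s\cap a$ is contained in $\mathrm{int}(\slab(P))$ and does not intersect $b_P$'' is respected both by the witness supplied by Lemma~\ref{lem:order}(\ref{o:iii}) and by the neighbor-of-neighbor choice. Since we are already in the subcase where $a\cap b$ avoids $b_P$, and the argument uses $a$ itself as the element of $\mathcal{N}(c)$ met by $b$, the filter is satisfied, so the $\inside(P)$ argument transfers without modification and the proof closes.
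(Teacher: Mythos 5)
Your proof is correct and follows essentially the same approach as the paper's: dispatch $b\in\across(P)$ to Lemma~\ref{lem:across-3hop}, handle intersections along $b_P$ or $t_P$ via $H_{\bottomset}(P)$ or $H_{\topset}(P)$, and for the remaining case use Lemma~\ref{lem:order}(\ref{o:iii}) to find a star center $c$ near $a$ and then either the direct edge $bc$ or the neighbor-of-neighbor edge $ba'$ to close the path. The only cosmetic differences are that you organize the case analysis by the subset containing $b$ rather than by where the intersection lies, and you don't explicitly note the degenerate subcases $a=c$ and $a'=a$ (both of which only shorten the path, so nothing breaks); your extra verification that the filter in the $H_{\inside}(P)$ construction is respected for $b\in\bottomset(P)\setminus\across(P)$ is a sound and welcome elaboration of a point the paper treats implicitly.
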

\begin{proof}
If $b \in \across(P)$, this follows from Lemma~\ref{lem:across-3hop} and the fact that $H_{\inside}'(P)$ is a subgraph of $H(P)$. If $b$ intersects $a$ in $b_P$ or $t_P$, then this follows from the fact that $H_{\bottomset}(P)$ and $H_{\topset}(P)$ are subgraphs of $H(P)$. Otherwise, by Lemma~\ref{lem:order}(\ref{o:iii}) there is some $c \in \mathcal{C}' \cup \mathcal{C}$ such that $a = c$ or $a \in \mathcal{N}(c)$. By construction, if $b$ intersects $c$ in $\slab(P)$, then $H_{\inside}(P)$ contains the $2$-hop path $(b, c, a)$. Otherwise, there is some $n \in \mathcal{N}(c)$ that $b$ intersects in $\slab(P)$. If $n = a$, then $H_{\inside}(P)$ contains the edge $ab$. Otherwise, $H_{\inside}(P)$ contains the $3$-hop path $(b, n, c, a)$; see Fig.~\ref{fig:fatspanner} for an example.
\end{proof}

\begin{figure}[htbp]
 \centering
\includegraphics[width=.95\textwidth]{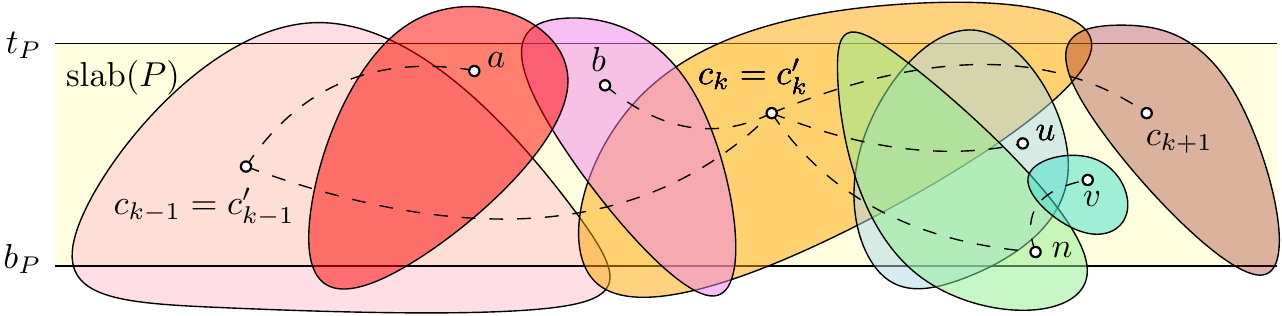}
 \caption{A set $S(P)$ of fat convex bodies, and the spanner $H(P)$ with edges marked by dashed lines. There is a $3$-hop path between $a \in \across(P)$ and $b \in \across(P)$ (cf. case 4 in Lemma~\ref{lem:across-3hop}). There is a $3$-hop path between $u \in \across(P)$ and $v \in \inside(P)$ (cf.\ Lemma~\ref{lem:fat-3hop})}
\label{fig:fatspanner}
\end{figure}

Analogously to Lemma~\ref{lem:correctness_axis}, this immediately implies the following.
\begin{corollary}\label{cor:correctness_axis}
$H(S)$ is a 3-hop spanner for $G(S)$.
\end{corollary}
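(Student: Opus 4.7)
The plan is to mirror the proof of Lemma~\ref{lem:correctness_axis} almost verbatim, using Lemma~\ref{lem:fat-3hop} in place of the $2$-hop statement of Lemma~\ref{lem:subproblemspanner}. Given an edge $ab \in G(S)$, I fix the representative point $p \in a \cap b$ stored in $\rep(S)$ and trace it through the recursive slab partition. Because the recursion halts at a node $P$ precisely when $\rep(S) \cap \mathrm{int}(\slab(P)) = \emptyset$, the point $p$ must lie on some horizontal cut-line of the partition; after a top/bottom reflection if needed, write this line as $b_P$ for some node $P$.

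I then split into two cases. In the first case, both $a$ and $b$ already belong to $S(P)$. Since $p \in a \cap b \cap b_P$, both bodies meet $b_P$ and so lie in $\bottomset(P)$, and the intervals $a \cap b_P$ and $b \cap b_P$ share the point $p$. Then $H_{\bottomset}(P)$, built via Theorem~\ref{thm:intervals} on the intervals cut along $b_P$, contains an $ab$-path of length at most $2$. In the second case, at least one of $a, b$ is absent from $S(P)$, so I walk back up the tree to the deepest ancestor $P'$ with $a, b \in S(P')$. A body is dropped from a child's set exactly when it lies in $\across(P')$, so at least one of $a, b$---say $a$---belongs to $\across(P')$. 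Moreover $\slab(P) \subseteq \slab(P')$, hence $p \in a \cap b \cap \slab(P')$; that is, $a$ and $b$ intersect \emph{in} $\slab(P')$, which is precisely the hypothesis of Lemma~\ref{lem:fat-3hop}. The lemma then produces an $ab$-path of length at most $3$ in $H(P') \subseteq H(S)$.

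The only conceptual point that needs care is that, unlike in the rectangle case, $H_{\bottomset}(P)$ is no longer a $2$-hop spanner of the whole intersection graph on $\bottomset(P)$---it only spans pairs whose intersection meets $b_P$. This is precisely why the argument forces $p$ onto a cut-line before invoking $H_{\bottomset}(P)$, and why the heavier Lemma~\ref{lem:fat-3hop} is needed as soon as the intersection representative lies in the interior of an ancestor's slab with one endpoint in $\across(P')$; in that regime, the $3$-hop path is assembled out of the star centres in $\mathcal{C} \cup \mathcal{C}'$ rather than from a single star on a covering segment.
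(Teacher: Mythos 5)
Your proof is correct and fills in the details that the paper compresses into the single sentence ``Analogously to Lemma~\ref{lem:correctness_axis}, this immediately implies the following.'' The structure is identical: trace the representative point $p\in a\cap b\cap\rep(S)$ to a cut line $b_P$, handle the case where both bodies survive to $S(P)$ via $H_{\bottomset}(P)$, and otherwise climb to the deepest ancestor $P'$ with both present and invoke Lemma~\ref{lem:fat-3hop}. You also correctly flag the one spot where the fat-body setting differs from the rectangle setting, namely that $H_{\bottomset}(P)$ only spans intersections along $b_P$, so the argument must force $p$ onto a cut line before using it; and you verify the hypothesis ``intersect in $\slab(P')$'' of Lemma~\ref{lem:fat-3hop} via $p\in\slab(P)\subseteq\slab(P')$, which is the extra check that the rectangle version did not need.

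One small imprecision, inherited from the paper's own phrasing in Lemma~\ref{lem:correctness_axis}: the claim ``a body is dropped from a child's set exactly when it lies in $\across(P')$'' is not literally true, since a body can also be dropped because it fails to intersect the child's slab. In your argument this does not matter, because $a$ and $b$ both contain $p$, and $p$ lies in every slab along the root-to-$P$ path, so the ``does not intersect the slab'' reason never applies; but it would be worth stating that observation explicitly rather than relying on the overstated general claim.
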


\subsection{Analyzing the Number of Edges}
\label{ssec:fat-size}

We show that the number of edges in $H(P)$ is $O(\alpha^3\, |S(P)|)$.  Analogously to Lemmata~\ref{lem:square_subproblem_counts}--\ref{lem:efficiency_axis}, this implies that $H(S)=\bigcup_{P} H(P)$ has $O(\alpha^3\, n\log n)$ edges. We first consider $H'_{\inside}(P)$ and $H_{\inside}(P)$. To bound the numnber of edges in $H'_{\inside}(P)$, we show that every $a \in A(P)$ can only intersect star centers in $\mathcal{C}$ and $\mathcal{C}'$ with rank close to $\rank(a)$. For convenience, we define $\rank(c_i)=-1$ for all $i<0$ (even though $c_i$ does not exist). 

\begin{lemma}\label{lem:order-intersect}
The following properties hold for every $P$:
\begin{enumerate}
    \item\label{p:order-i} for every $c_j' \in \mathcal{C}'$ with $j \geq 1$, $\rank(c_{j-1}) < \rank(c_{j}') \leq \rank(c_{j})$; and
    \item\label{p:order-ii} for every $a \in A(P)$, let $k$ denote the integer such that $\rank(c_{k-1}) < \rank(a) \leq \rank(c_{k})$.
    If $a$ intersects some $b \in A(P)$ in $\slab(P)$, then $\rank(c_{k-2}) < \rank(b) \leq \rank(c_{k+1})$.
\end{enumerate}
\end{lemma}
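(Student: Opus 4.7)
The plan is to exploit the greedy definitions of $\mathcal{C}$ and $\mathcal{C}'$ directly: both parts will follow by contradiction against the maximum-rank clauses in the construction.

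For part~\eqref{p:order-i}, I would first dispatch the upper bound $\rank(c_j') \leq \rank(c_j)$, which is immediate since by construction $c_j' \in L_j$. For the lower bound, assume toward a contradiction that $\rank(c_j') \leq \rank(c_{j-1})$, so $c_j' \in L_{j-1}$. Since $c_j'$ intersects $c_{j+1}$ in $\slab(P)$ by definition, this means $c_{j+1}$ intersects a body in $L_{j-1}$. But $c_j$ was chosen as the body in $\across(P)$ of \emph{maximum} rank intersecting in $\slab(P)$ some body in $L_{j-1}$, forcing $\rank(c_j) \geq \rank(c_{j+1})$. This contradicts the strict monotonicity $\rank(c_j) < \rank(c_{j+1})$ along $\mathcal{C}$, which itself follows from the assumed connectedness of the intersection graph on $\across(P)$ (so the while-loop always produces $c_{j+1}$ of strictly greater rank than $c_j$).

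For part~\eqref{p:order-ii}, both bounds again reduce to greedy maximality. For the upper bound, $a \in L_k$ holds by hypothesis, and $b$ intersects $a$ in $\slab(P)$, so $b$ is a valid candidate for the role of $c_{k+1}$; maximality of $\rank(c_{k+1})$ among such candidates gives $\rank(b) \leq \rank(c_{k+1})$. For the lower bound, assume toward a contradiction that $\rank(b) \leq \rank(c_{k-2})$, i.e.\ $b \in L_{k-2}$. Then $a$ intersects a body in $L_{k-2}$ in $\slab(P)$, making $a$ a candidate for $c_{k-1}$, which by maximality forces $\rank(a) \leq \rank(c_{k-1})$ and contradicts the hypothesis $\rank(a) > \rank(c_{k-1})$.

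There is no serious obstacle, only bookkeeping: for each inequality one must identify which greedy choice ($c_j$ or $c_j'$) would be violated, and which $L_\cdot$-membership is being established. Boundary cases ($j = 1$, or $k \leq 1$, or $k = m$) are handled transparently by the paper's convention $\rank(c_i) = -1$ for $i < 0$, which renders the lower bounds vacuous in those degenerate situations.
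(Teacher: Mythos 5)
Your proof is correct and follows the paper's argument essentially verbatim: both parts reduce to the greedy maximality of $c_{j}$ (for part 1) and $c_{k+1}$, $c_{k-1}$ (for part 2), and you identify exactly the same $L_\cdot$-memberships and contradictions as the paper. The only addition is your explicit justification that $\rank$ is strictly increasing along $\mathcal{C}$ (via the connectedness assumption), a point the paper uses implicitly when concluding the contradiction in part 1.
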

\begin{proof}
\textbf{(1)} $\rank(c_{j}') \leq \rank(c_{j})$ holds by construction. Suppose for the sake of contradiction that $\rank(c_{j-1}) \geq \rank(c_{j}')$. As $c_{j}' \in L_{j-1}$ and intersects $c_{j+1}$, the greedy construction of $\mathcal{C}$ implies that $\rank(c_{j}) \geq \rank(c_{j+1})$, which is a contradiction. 

\textbf{(2)}
Suppose for contradiction that $\rank(b) > \rank(c_{k+1})$. As $a \in L_{k}$ and intersects $b$, the greedy construction implies that $\rank(b) \leq \rank(c_{k+1})$, a contradiction.
Similarly, suppose $\rank(b) \leq \rank(c_{k-2})$. As $b \in L_{k-2}$ and $b$ intersects $a$, we have $\rank(a) \leq \rank(c_{k-1})$ by the greedy construction of $\mathcal{C}$. As $\rank(a) > \rank(c_{k-1})$ by assumption, this is a contradiction.
\end{proof}

\begin{lemma}\label{lem:Hin-fat}
$H_{\inside}'(P)$ has $O(|A(P)|)$ edges.
\end{lemma}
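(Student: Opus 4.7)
My plan is to bound the number of edges by a degree-sum argument: since $H_{\inside}'(P)$ is a union of stars centered at elements of $\mathcal{C}\cup\mathcal{C}'$, every edge has at least one endpoint in $\mathcal{C}\cup\mathcal{C}'$, so
\[
|E(H_{\inside}'(P))| \;\leq\; \sum_{a\in A(P)} \bigl|\{c\in \mathcal{C}\cup\mathcal{C}' : c\neq a,\ a \text{ intersects } c \text{ in } \slab(P)\}\bigr|,
\]
where edges whose two endpoints are both centers are counted at most twice on the right (which only weakens the inequality). It will therefore suffice to show that each summand is bounded by an absolute constant.

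To that end, I fix an arbitrary $a\in A(P)$ and let $k$ be the unique integer with $\rank(c_{k-1}) < \rank(a) \le \rank(c_k)$, using the paper's convention $\rank(c_i)=-1$ for $i<0$. Then by Lemma~\ref{lem:order-intersect}(\ref{p:order-ii}), every $b\in A(P)$ that intersects $a$ in $\slab(P)$ satisfies $\rank(c_{k-2}) < \rank(b) \le \rank(c_{k+1})$. Since the ranks along $\mathcal{C}$ are strictly increasing, at most three elements of $\mathcal{C}$ lie in this interval, namely $c_{k-1}, c_k, c_{k+1}$. For the $\mathcal{C}'$ side, I invoke Lemma~\ref{lem:order-intersect}(\ref{p:order-i}), which places $\rank(c_j')$ in the window $(\rank(c_{j-1}),\rank(c_j)]$; this window overlaps $(\rank(c_{k-2}),\rank(c_{k+1})]$ only when $k-1 \le j \le k+1$, contributing again at most three candidates.

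Combining the two counts, each $a \in A(P)$ intersects at most six centers of $\mathcal{C}\cup\mathcal{C}'$ in $\slab(P)$, so the displayed sum is at most $6|A(P)|$ and $|E(H_{\inside}'(P))| = O(|A(P)|)$, as claimed. The only substantive step is the rank-interval analysis driven by Lemma~\ref{lem:order-intersect}; once that is in place the double-counting is routine. The only mild subtlety I anticipate is verifying the boundary cases ($k\in\{0,1,m-1,m\}$), but the $\rank(c_i)=-1$ convention already absorbs them.
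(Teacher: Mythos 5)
Your proof is correct and follows essentially the same approach as the paper's: both fix the index $k$ with $\rank(c_{k-1}) < \rank(a) \le \rank(c_k)$ and then use Lemma~\ref{lem:order-intersect} to conclude that $a$ can intersect only $c_{k-1},c_k,c_{k+1}$ from $\mathcal{C}$ and only $c_{k-1}',c_k',c_{k+1}'$ from $\mathcal{C}'$, giving the bound of six edges per body. Your rephrasing via interval overlap is just a different way of stating the same rank inequalities, and the degree-sum bookkeeping at the front matches the paper's closing charging argument.
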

\begin{proof}
Let $a \in A(P)$ and let $k$ denote the integer such that $\rank(c_{k-1}) < \rank(a) \leq \rank(c_k)$. Notice that for every $j \geq k+2$, $\rank(c_j) \ge \rank(c_{k+2}) > \rank(c_{k+1})$; by Lemma~\ref{lem:order-intersect}(\ref{p:order-ii}), $a$ does not intersect $c_j$. Similarly, for every $i \leq k - 2$, $\rank(c_{i}) \leq \rank(c_{k-2})$ and so $a$ does not intersect $c_{i}$. It follows that $a$ can only intersect 3 star centers in $\mathcal{C}$: $c_{k-1}$, $c_k$, and $c_{k+1}$.

Further, Lemma~\ref{lem:order-intersect}(\ref{p:order-i}) implies that for every $j \geq k+2$, $\rank(c_{j}') \geq \rank(c_{k+2}') > \rank(c_{k+1})$ and that for every $i \leq k-2$, $\rank(c_{i}') \leq \rank(c_{k-2})$. It follows from Lemma~\ref{lem:order-intersect}(\ref{p:order-ii}) that $a$ intersects at most $3$ star centers in $\mathcal{C}'$: $c_{k-1}'$, $c_k'$, and $c_{k+1}'$. As every $k \in \across(P)$ contributes at most $6 = O(1)$ edges to $H_{\inside}'(P)$, in total $H_{\inside}'(P)$ has $O(|\across(P)|)$ edges.
\end{proof}

\begin{lemma}\label{lem:disjoint}
We can partition $\mathcal{C}\cup\mathcal{C}'$ into four sets that are each disjoint w.r.t.\ $\slab(P)$. 
\end{lemma}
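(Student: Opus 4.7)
The plan is to reduce the problem to a short combinatorial bandwidth argument on the indices of $\mathcal{C}$ and $\mathcal{C}'$. The key observation, drawn from Lemma~\ref{lem:order-intersect}, is that two elements of $\mathcal{C}\cup\mathcal{C}'$ can intersect in $\slab(P)$ only if their indices are very close. Note first that the ranks $\rank(c_0)<\rank(c_1)<\cdots<\rank(c_m)$ are strictly monotone in the index by the greedy construction of $\mathcal{C}$, and the same holds for $\rank(c_0')<\rank(c_1')<\cdots$ by Lemma~\ref{lem:order-intersect}(\ref{p:order-i}).

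I would verify the bandwidth bound in three cases. For $c_i,c_j\in\mathcal{C}$, Lemma~\ref{lem:order-intersect}(\ref{p:order-ii}) applied with $a=c_i$ forces $\rank(c_j)\in(\rank(c_{i-2}),\rank(c_{i+1})]$, so by rank monotonicity $|i-j|\le 1$. For $c_i',c_j'\in\mathcal{C}'$, the interval bound $\rank(c_k')\in(\rank(c_{k-1}),\rank(c_k)]$ from (\ref{p:order-i}) identifies the integer $k=j$ in (\ref{p:order-ii}) applied to $c_j'$; combining it with (\ref{p:order-i}) for $c_i'$ again yields $|i-j|\le 1$. For a mixed pair $c_i\in\mathcal{C}$, $c_j'\in\mathcal{C}'$, the same chaining of (\ref{p:order-ii}) applied to $c_i$ with (\ref{p:order-i}) for $c_j'$ shows $j\in\{i-1,i,i+1\}$.

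Given these bounds, the required partition is immediate: split $\mathcal{C}$ by index parity into $\mathcal{C}_{\mathrm{even}}$ and $\mathcal{C}_{\mathrm{odd}}$, and split $\mathcal{C}'$ analogously into $\mathcal{C}'_{\mathrm{even}}$ and $\mathcal{C}'_{\mathrm{odd}}$. Any two distinct elements of the same class have indices differing by at least $2$, so by the three cases above they are disjoint w.r.t.\ $\slab(P)$, and we obtain the four promised sets. The main care point — and the only one — is the mixed case, where $\mathcal{C}$ and $\mathcal{C}'$ are indexed independently and one has to chain the one-sided bound of (\ref{p:order-i}) with the two-sided bound of (\ref{p:order-ii}) correctly; beyond that, the argument is purely combinatorial and invokes no new geometry.
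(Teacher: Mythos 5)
Your proof is correct and takes essentially the same route as the paper: split $\mathcal{C}$ and $\mathcal{C}'$ each by index parity and use Lemma~\ref{lem:order-intersect} to verify pairwise disjointness within each of the four resulting classes. The only substantive difference is that you also work out the mixed case $c_i\in\mathcal{C}$, $c_j'\in\mathcal{C}'$ and flag it as ``the main care point''; this is actually superfluous, since each of your four parity classes lies entirely within $\mathcal{C}$ or entirely within $\mathcal{C}'$, so only the $\mathcal{C}$--$\mathcal{C}$ and $\mathcal{C}'$--$\mathcal{C}'$ cases ever arise when checking disjointness within a class. The paper's proof accordingly omits the mixed case and gives only those two checks.
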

\begin{proof}
By Lemma~\ref{lem:order-intersect}(\ref{p:order-ii}), every $c_k\in \mathcal{C}$ is disjoint from all $c_j\in \mathcal{C}$ with $j\leq k-2$. Consequently, the bodies in $\mathcal{C}$ with even (resp., odd) indices are pairwise disjoint. 

Similarly, let $c_k' \in \mathcal{C}'$, and let $c_j' \in \mathcal{C}$ with $j \leq k - 2$. By Lemma~\ref{lem:order-intersect}(\ref{p:order-i}) we have $\rank(c_{k-1}) < \rank(c_k') \leq \rank(c_k)$, and $\rank(c'_j) \leq \rank(c_j) \leq \rank(c_{k-2})$.
It follows from Lemma~\ref{lem:order-intersect}(\ref{p:order-ii}) that $c_j'$ is disjoint from $c_k'$.
Consequently, the bodies in $\mathcal{C}'=\{c_0',\ldots , c_{m-1}'\}$ with even (resp., odd) indices are pairwise disjoint. 
\end{proof}

Our next objective is to bound the number of edges in $H_{\inside}(P)$. We give a geometric argument that every $s \in \inside(P)$ can intersect no more than $O(1)$ disjoint bodies in $A(P)$ (this is accomplished in Lemma~\ref{lem:fat-inside} below). The \emph{height} of $\slab(P)$, denoted $h(P)$, is the distance between $b_P$ and $t_P$. 

\begin{lemma}\label{lem:inscribed}
For every $a\in \across(P)$, the intersection $a\cap \slab(P)$ contains a 
disk $d$ with $\diam(d)\geq \Omega(h(P) / \alpha)$. 
\end{lemma}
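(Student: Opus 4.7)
My plan is to analyze the position of the maximum inscribed disk $B$ of $a$ relative to $\slab(P)$, and in the difficult case construct the required disk by inscribing one in a tangent cone.

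First I would establish the basic inequalities. Let $h := h(P)$ and $\rho := r_{\mathrm{in}}(a)$. Since $a \in \across(P)$ spans the slab vertically, $a$ has diameter at least $h$, so $r_{\mathrm{out}}(a) \geq h/2$, and $\alpha$-fatness yields $\rho \geq h/(2\alpha)$. Moreover, the vertical extent of $a$ is at most $2 r_{\mathrm{out}}(a) \leq 2\alpha\rho$. Placing $b_P$ at $y = 0$ and $t_P$ at $y = h$, and letting $c = (c_x, c_y)$ denote the center of $B$, the constraint $c_y \in [y_{\min}(a) + \rho,\, y_{\max}(a) - \rho]$ combined with $y_{\min}(a) \leq 0$ and $y_{\max}(a) \geq h$ yields $c_y \in [h - (2\alpha - 1)\rho,\, (2\alpha - 1)\rho]$.

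If $B \subseteq \slab(P)$, I would take $d := B$, whose diameter $2\rho \geq h/\alpha$ meets the bound. Otherwise, assume WLOG that $B$ extends above $t_P$, i.e., $c_y > h - \rho$. Since $a \in \across(P)$, there exists $p \in a \cap b_P$, and by convexity $K := \mathrm{conv}(B \cup \{p\}) \subseteq a$. The set $K$ is an ``ice-cream cone'': bounded below the tangent points of $\partial B$ by the two tangent segments from $p$, and above them by an arc of $\partial B$. I construct $d$ as the disk inscribed in $K \cap \slab(P)$ whose center lies on the segment $pc$, tangent to both sides of the cone, and tangent to $t_P$ from below. Solving the tangency conditions gives
\[
r \;=\; \frac{h\rho}{c_y + \rho},
\]
and the bound $c_y + \rho \leq 2\alpha\rho$ yields $r \geq h/(2\alpha) = \Omega(h/\alpha)$.

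The main obstacle will be verifying that this $d$ actually lies in $K$ (not merely in the infinite cone through $p$) and in $\slab(P)$. If the top of $d$ passes above the cone's tangent points along axis $pc$, I would shrink $d$ to a disk tangent to $\partial B$ from inside $B$; solving the resulting tangency conditions gives radius $(h - c_y + \rho)/2$, which with $c_y \leq (2\alpha - 1)\rho$ remains $\Omega(h/\alpha)$. Containment $d \subseteq \slab(P)$ requires $c_y \geq \rho$, which follows from $c_y > h - \rho$ whenever $\rho \leq h/2$; in the remaining regime $\rho > h/2$, the inscribed disk $B$ is already thicker than the slab itself, so $B \cap \slab(P)$ is a ``band'' spanning the full slab height that directly contains a disk of diameter $h \geq h/\alpha$ (centered on the slab midline at $x = c_x$). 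These case distinctions are routine but necessary, and in each case yield a disk of diameter $\Omega(h/\alpha)$, completing the proof.
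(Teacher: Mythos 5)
Your proposal is essentially the paper's proof. Both arguments use $\alpha$-fatness together with $\diam(a)\ge h(P)$ to get $\varrho_{\rm in}\ge h(P)/(2\alpha)$, then shrink the maximum inscribed disk $B$ into the slab by a dilation centered at a point $p\in a\cap b_P$, and check by convexity that the shrunken disk stays inside $a$. Your case split is organized a little differently (you split on whether $B$ sticks out of the slab and whether $c_y\ge\rho$; the paper uses the slab midline $m_P$ and gets three cases, one of which produces a diameter-$h/2$ disk rather than diameter-$h$), but these are cosmetic variations.

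Two small remarks. The "main obstacle" you flag --- whether $d$ lies in $K=\mathrm{conv}(B\cup\{p\})$ rather than just the infinite tangent cone --- is not an obstacle at all: your $d$ is exactly the image of $B$ under the homothety centered at $p$ with ratio $\lambda=h/(c_y+\rho)\le 1$, and for any convex $K\ni p$ containing $B$, $\lambda B+(1-\lambda)p\subset K$. So the auxiliary "shrink to a disk tangent to $\partial B$" branch is never needed. Second, your sentence about "the remaining regime $\rho>h/2$" is phrased as though $\rho>h/2$ alone implies $B$ spans the slab; that is false in general (a tall body could put $B$ mostly above $t_P$ with $c_y>\rho$). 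The correct dichotomy you actually need is $c_y\ge\rho$ vs.\ $c_y<\rho$: in the latter case, combined with $c_y>h-\rho$, the disk genuinely spans the slab and the band argument applies, and your requirement $h-\rho\le c_y\le\rho$ is exactly what makes the diameter-$h$ disk at height $h/2$ fit inside $B$. With that clarification the proof is complete and matches the paper's.
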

\begin{proof}
Since $a\in \across(P)$, it intersects both $b_P$ and $t_P$, and so $\diam(a)\geq h(P)$. Let $d_{\rm in}$ be a maximal inscribed disk of $c$, with radius $\varrho_{\rm in}$ and center $c_{\rm in}$; and let $\varrho_{\rm out}$ be the radius of a minimum enclosing disk of $a$. Since $a$ is $\alpha$-fat, then $\diam(d_{\rm in}) = 2\,\varrho_{\rm in}\geq 2\,\varrho_{\rm out}/\alpha\geq \diam(a)/\alpha\geq h(P)/\alpha$.

Let $m_P$ be the horizontal symmetry axis of $\slab(P)$, which is at distance $\frac12\,h(P)$ from both $b_p$ and $t_P$. We may assume w.l.o.g.\ that the center of $d_{\rm in}$ is on or above $m_P$. We distinguish between three cases based on the location of $p_{\rm in}$; see Fig.~\ref{fig:inscribed}. 

\begin{figure}[htbp]
 \centering
\includegraphics[width=.8\textwidth]{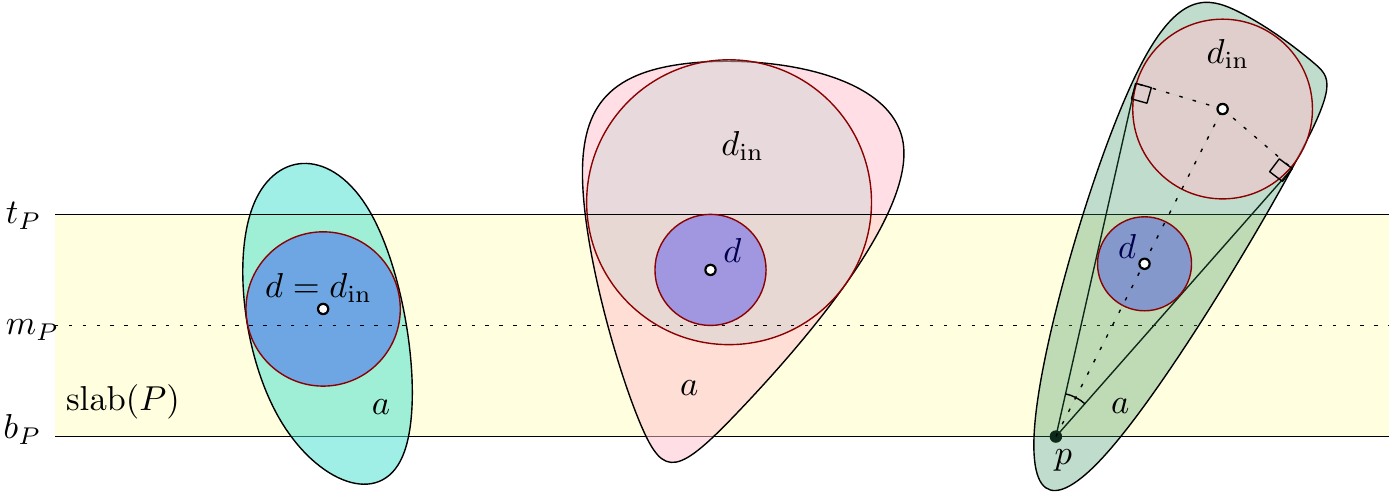}
 \caption{Three cases considered in the proof of Lemma~\ref{lem:inscribed}. 
 Left: $d_{\rm in}\subset \slab(P)$. 
 Middle: $d_{\rm in}$ intersects both $m_P$ and $t_P$. 
 Right: $d_{\rm in}$ is above $m_P$.}
    \label{fig:inscribed}
\end{figure}

\noindent \textbf{(1) $d_{\rm in}\subset \slab(P)$.}
Then we can take $d=d_{\rm in}$, and then $\diam(d)=\diam(d_{\rm in})\geq \Omega(h(P) / \alpha)$.

\noindent\textbf{(2) $d_{\rm in}$ intersects both $m_P$ and $t_P$.}
Then $d_{\rm in}$ contains a disk $d$ tangent to both $m_p$ and $t_P$. Necessarily, $d\subset d_{\rm in}\subset a$, and $\diam(d)=\frac12\, h(P)\geq \Omega(h(P)/\alpha)$.

\noindent\textbf{(3) $d_{\rm in}$ is above $m_P$.}
Let $p\in b_P\cap a$. Since $a$ contains both $c_{\rm in}$ and $a$, then $\|c_{\rm in} p\|\leq \diam(a)$. The angle between the two tangent lines from $p$ to $d_{\rm in}$ is  
\begin{equation}\label{eq:cp1}
2\, \sin^{-1} \left(\frac{\varrho_{\rm in}}{\|c_{\rm in}p\|}\right) 
\geq 2\,\sin^{-1}\left(\frac{\diam(d_{\rm in})/2}{\diam(a)}\right) 
\geq 2\,\sin^{-1}\left(\frac{1}{2\alpha}\right) 
\geq \Omega\left(\frac{1}{\alpha}\right).
\end{equation}
Dilate $d_{\rm in}$ from center $p$ to obtain a disk $d\subset \slab(P)$ tangent to $t_P$.
By construction, $d$ is contained in the convex hull of $d_{\rm in}$ and $p$, which is contained in $a$ by convexity. Then $\diam(d)\geq \Omega(h(P)/\alpha)$, as required.
\end{proof}

\begin{figure}[htbp]
 \centering
 \includegraphics[width=\textwidth]{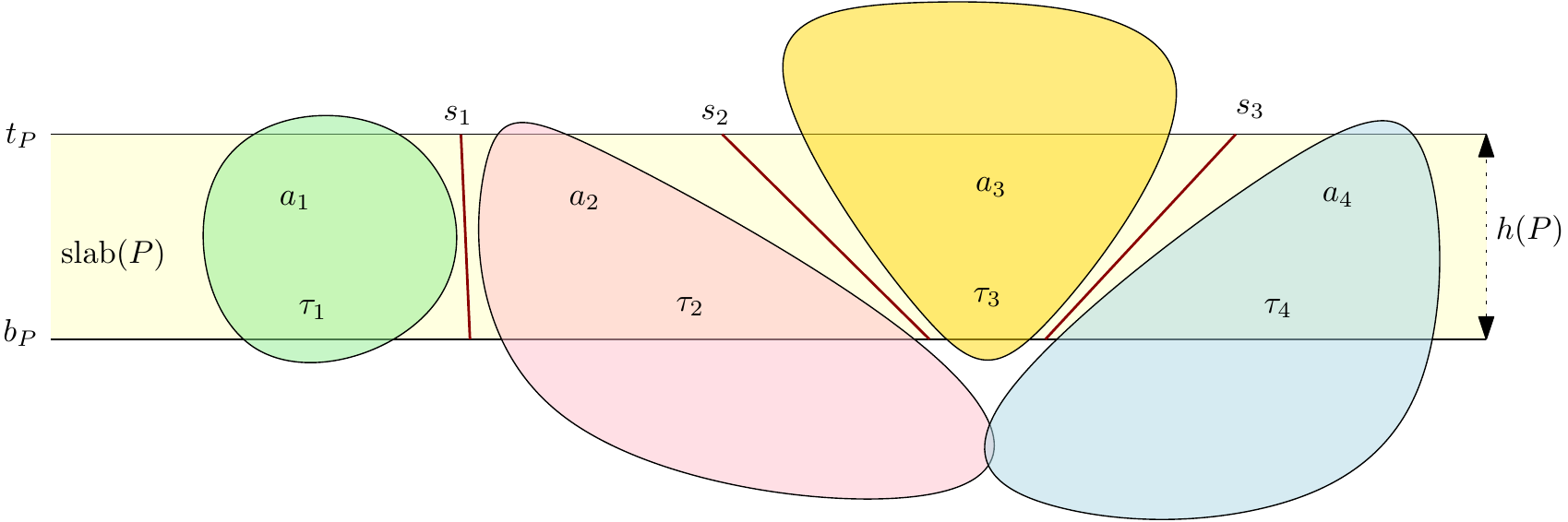}
 \caption{Fat convex bodies $a_1,\ldots ,a_4 \in A(P)$ that do not intersects in $\slab(P)$. Their intersections with $\slab(P)$ are enclosed in interior-disjoint trapezoids $\tau_1,\ldots , \tau_4$.}
    \label{fig:trapezoids}
\end{figure}

Let $a_1, \ldots, a_m\in \across(P)$ be a sequence of convex bodies such that 
their intersections with $\slab(P)$ are pairwise disjoint and $\rank(a_1)<\ldots  <\rank(a_m)$. By convexity, $a_i\cap \slab(P)$ and $a_{i+1}\cap \slab(P)$ can be separated by line segments $s_i$, $i\in \{1,\ldots ,m-1\}$, between $b_P$ and $t_P$. These segments define interior-disjoint trapezoids $\tau_1, \ldots, \tau_m\subset \slab(P)$ such that $a_i\cap \slab(P)\subset \tau_i$ for $i=1,\ldots ,m$; see Fig~\ref{fig:trapezoids} for an illustration.

We show that any two consecutive segments, $s_i$ and $s_{i+1}$, are far apart or are far from being parallel. Specifically, let $\dist(s_i,s_{i+1})$ be the minimum distance between a point in $s_i$ and a point $s_{i+1}$; and let $\angle(s_i,s_{i+1})$ be the angle between the supporting lines of $s_i$ and $s_{i+1}$. 

\begin{lemma}\label{lem:angle}
For all $i\in \{1,\ldots , m-1\}$, we have $\dist(s_i,s_{i+1})\geq \Omega(h(P)/\alpha)$ or $\angle (s_is_{i+1})\geq \Omega(1/\alpha)$. 
\end{lemma}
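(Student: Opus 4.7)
The plan is to combine Lemma~\ref{lem:inscribed}, which inscribes a disk $d$ of radius $\rho\geq\Omega(h/\alpha)$ in $a_{i+1}\cap\slab(P)$ (where $h=h(P)$), with a direct trigonometric analysis of the trapezoid $\tau_{i+1}$ bounded by $b_P$, $t_P$, $s_i$, $s_{i+1}$. The intuition is that if $s_i$ and $s_{i+1}$ are both nearly parallel (small $\theta$) and close together (small $\dist$), then there is no room for this inscribed disk.

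If the supporting lines $L_i,L_{i+1}$ of $s_i,s_{i+1}$ are parallel, the argument is one line: $d$ lies at distance $\geq\rho$ from each line, so the perpendicular distance between $L_i$ and $L_{i+1}$ is at least $2\rho$, and because both segments span the full height of $\slab(P)$, $\dist(s_i,s_{i+1})\geq 2\rho=\Omega(h/\alpha)$. Otherwise $L_i,L_{i+1}$ meet at a point $q$; by reflecting through the midline of $\slab(P)$ if needed I may assume $y_q\geq h$, so the wedge at $q$ opens downward into $\slab(P)$.

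Place coordinates so that $b_P=\{y=0\}$ and $t_P=\{y=h\}$, let $o=(x_o,y_o)$ be the center of $d$, and let $\gamma_1,\gamma_2\in[0,\pi/2)$ denote the angles $L_i,L_{i+1}$ make with the downward vertical at $q$, so $\theta=\gamma_1+\gamma_2$. A short calculation shows that the perpendicular distance from the point on $L_i$ at height $y$ to $L_{i+1}$ equals $(y_q-y)\sin\theta/\cos\gamma_1$, which is decreasing in $y$; hence its minimum over $s_i$ is attained at $y=h$ and equals $(y_q-h)\sin\theta/\cos\gamma_1$, a quantity that lower-bounds $\dist(s_i,s_{i+1})$. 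Separately, writing the two disk-inscribed inequalities $\dist(o,L_i)\geq\rho$ and $\dist(o,L_{i+1})\geq\rho$ in coordinates and combining them with weights $\cos\gamma_2$ and $\cos\gamma_1$ gives $(y_q-y_o)\sin\theta\geq\rho\,(\cos\gamma_1+\cos\gamma_2)\geq\rho$, valid as long as $\theta\leq\pi/2$, which we may assume since otherwise $\theta=\Omega(1)$ already. Using $y_o\geq 0$ this becomes $y_q-h\geq\rho/\sin\theta-h$, and substituting back yields
\[
\dist(s_i,s_{i+1})\;\geq\;\frac{(y_q-h)\sin\theta}{\cos\gamma_1}\;\geq\;\rho-h\sin\theta.
\]

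The dichotomy is now immediate: if $h\sin\theta\leq\rho/2$, then $\dist(s_i,s_{i+1})\geq\rho/2=\Omega(h/\alpha)$; otherwise $\sin\theta>\rho/(2h)=\Omega(1/\alpha)$, forcing $\theta=\Omega(1/\alpha)$. The step I expect to be most delicate is the asymmetric-wedge computation ($\gamma_1\neq\gamma_2$), specifically confirming that the minimum perpendicular distance from $s_i$ to $L_{i+1}$ really is attained at the endpoint $y=h$ rather than at an interior point, and that the combined inequality $(y_q-y_o)\sin\theta\geq\rho$ survives the asymmetry; both reduce to elementary trigonometric identities together with the assumption $\theta\leq\pi/2$, so no additional case analysis on the shape of $\tau_{i+1}$ is required.
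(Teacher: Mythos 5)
Your proof is correct and takes a genuinely different route from the paper's. Both start from Lemma~\ref{lem:inscribed}, but from there they diverge. You place a coordinate system with $b_P=\{y=0\}$, reduce to the apex $q$ of the wedge $L_i\cap L_{i+1}$ lying on or above $t_P$ (the parallel case and the apex-inside-the-slab case are handled or excluded cleanly), and then derive a single clean inequality $\dist(s_i,s_{i+1})\geq\rho-h\sin\theta$ by combining the two half-plane constraints $\dist(o,L_i),\dist(o,L_{i+1})\geq\rho$ with weights $\cos\gamma_2,\cos\gamma_1$; the dichotomy in the statement then falls out immediately. The paper instead argues by contradiction: it takes the \emph{maximum} inscribed disk $d_{\rm in}$ of the full body $a$ (which may extend outside the slab), bounds the angle between the two tangents from the apex $p$ to $d_{\rm in}$ below by $\angle(s_i,s_{i+1})$, deduces $\|c_{\rm in}p\|>10\alpha\varrho_{\rm in}$, and then uses a dilation of the triangle $\Delta(p,q,r)$ and an orthogonal-projection estimate to show $\|bc_{\rm in}\|>4\alpha\varrho_{\rm in}$, contradicting $\alpha$-fatness. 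Your approach is more self-contained (it never refers back to the fatness ratio beyond what Lemma~\ref{lem:inscribed} already encodes) and avoids the somewhat delicate dilation argument; what the paper's approach buys is that it never needs to set up coordinates or reason about which side of the slab the apex lies on, because it works entirely with the body's own inscribed and circumscribed disks. One minor point worth making explicit in a clean write-up of your version: the apex cannot lie strictly inside $\slab(P)$ (otherwise $s_i$ and $s_{i+1}$, both spanning the full height of the slab, would cross), which is what justifies the ``reflect and assume $y_q\geq h$'' step. Also note that the verification that $\cos\gamma_1+\cos\gamma_2\geq 1$ when $\gamma_1+\gamma_2\leq\pi/2$ follows from $\cos\gamma_1+\cos\gamma_2=2\cos\frac{\theta}{2}\cos\frac{\gamma_1-\gamma_2}{2}\geq 2\cos^2\frac{\theta}{2}=1+\cos\theta\geq 1$, so that ``short calculation'' is indeed elementary as you claim.
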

\begin{proof}
By construction, there is an $\alpha$-fat convex body $a_i\in \across(P)$ such that $a_i\cap \slab(P)\subset \tau_i$. By Lemma~\ref{lem:inscribed}, $a_i\cap \slab(P)$ contains a disk $d$ of radius $\Omega(h(P)/\alpha)$. Note that $d\subset a_i\subset \tau_i$. If $\dist(s_i,s_{i+1})\geq \frac12\, \diam(d)\geq \Omega(h(P)/\alpha)$, then the proof is complete. 

Suppose, for contradiction, that $\dist(s_i,s_{i+1})< \frac12\, \diam(d)$ and $\angle (s_i,s_{i+1})<1/(5\alpha)$. Let $d_{\rm in}$ be a maximum inscribed disk of $a_i$, with radius $\varrho_{\rm in}$ and center $c_{\rm in}$. By maximality, $\diam(d_{\rm in})\geq \diam(d)$. The angle between the two tangents from $p$ to $d_{\rm in}$ is 
at least $\angle (s_i,s_{i+1})$, which implies 
\begin{equation}\label{eq:cp2}
\|c_{\rm in} p\| 
= \frac{\varrho_{\rm in}}{\sin (\frac12\ \angle (s_i,s_{i+1}))}
> \frac{\varrho_{\rm in} }{\sin (1/(10\alpha))}
\geq 10\alpha\, \varrho_{\rm in},
\end{equation}
using the Taylor estimate $\sin(x)\leq x$ for $0\leq x\leq \pi$.

Let $b$ be an arbitrary point in $a_i\cap b_P$. In the remainder of the proof we derive a lower bound for $\|bc_{\rm in} \|$ to reach a contradiction. 
The minimum distance between two line segments is attained between an endpoint of one segment and some point in the other. Assume w.l.o.g.\ that $\dist(s_i,s_{i+1})$ is attained between $q=b_P\cap s_i$ and some point $r\in s_{i+1}$. Then, by assumption, $\|qr\|=\dist(s_i,s_{i+1})< \frac12\, \diam(d)$. As $r\in \slab(P)$, then the slope of $qr$ is nonnegative. 
Dilate the triangle $\Delta(p,q,r)$ from center $p$ to a triangle $\Delta(p,q',r')$ such that $q' r'$ passes though $c_{\rm in}$. Since $\|qr\|<\frac12\, \diam(d_{\rm in})$ and $\|q' r'\| > \diam(d_{\rm in})$, then the scaling factor is more than 2. In particular, we obtain $2\, \|rr'\|> \|pr'\|$. 
Taking the orthogonal projections of segments $c_{\rm in} b$ and $c_{\rm in}p$ to the supporting line of $s_{i+1}$, and using \eqref{eq:cp2}, we obtain 
\[
\|bc_{\rm in}\|
\geq \|rr'\|
> \frac{\|pr'\|}{2} 
= \frac{\|c_{\rm in} p\| \cos \angle (cp,s_{i+1})}{2}   
>  5\alpha\, \varrho_{\rm in} \cos \left(\frac{1}{5\alpha}\right) 
> 4\alpha\, \varrho_{\rm in}. 
\]
Denote by $\varrho_{\rm out}$ the radius of a minimum enclosing disk of $a_i$. Then  $\varrho_{\rm out}\geq \frac12\, \diam(a_i)\geq \frac12\, \|bc_{\rm in}\| \geq 2\alpha\, \varrho_{\rm in}$, which yields $\varrho_{\rm out}/\varrho_{\rm in}\geq 2\alpha$,  contradicting the assumption that $a_i$ is $\alpha$-fat.
\end{proof}

\begin{figure}[htbp]
 \centering
 \includegraphics[width=\textwidth]{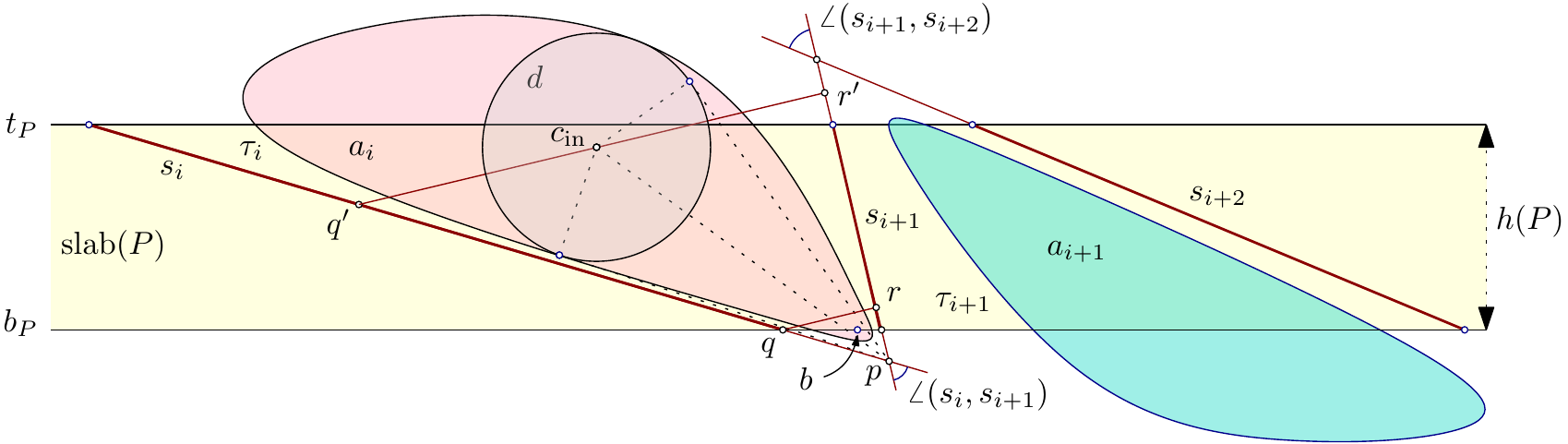}
 \caption{An illustration for the proof of Lemma~\ref{lem:angle} and Lemma~\ref{lem:consecutive}.}
    \label{fig:angles}
\end{figure}

We can distinguish between three \emph{types} of trapezoids: Let $\mathcal{T}_=$ be the set of trapezoids $\tau_i$ with $\dist(s_i,s_{i+1})\geq h(P)/\alpha$. 
Let $\mathcal{T}_<$ (resp., $\mathcal{T}_>$) be the set of trapezoids $t_i$ such that $\dist(s_i,s_{i+1})<h(P)/\alpha$ and the supporting lines of $s_i$ and $s_{i+1}$ meet below (resp., above) $\slab(P)$. 
By Lemma~\ref{lem:angle} there are at most $O(\alpha)$ consecutive trapezoids in $\mathcal{T}_<$ (resp.,  $\mathcal{T}_>$). The next lemma considers the transition between two consecutive trapezoids in $\mathcal{T}_<$ and $\mathcal{T}_>$.

\begin{lemma}\label{lem:consecutive}
If $\tau_i\in \mathcal{T}_<$ and $\tau_{i+1}\in \mathcal{T}_>$ (or, symmetrically, $\tau_i\in \mathcal{T}_>$ and $\tau_{i+1}\in \mathcal{T}_<$), 
then $\dist(s_i,s_{i+2})\geq \Omega(h(P)/\alpha)$. 
\end{lemma}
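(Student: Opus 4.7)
The approach is to bootstrap from Lemma~\ref{lem:angle}: because $\tau_i\in\mathcal{T}_<$ forces $\dist(s_i,s_{i+1})<h(P)/\alpha$, Lemma~\ref{lem:angle} yields $\angle(s_i,s_{i+1})\geq\Omega(1/\alpha)$, and likewise $\tau_{i+1}\in\mathcal{T}_>$ gives $\angle(s_{i+1},s_{i+2})\geq\Omega(1/\alpha)$. The meets-below configuration of $\tau_i$ makes its horizontal width $w_i(y)$ a strictly increasing affine function of height, so the angle bound propagates to $w_i(h(P))\geq\Omega(h(P)/\alpha)$; symmetrically, the meets-above configuration of $\tau_{i+1}$ gives $w_{i+1}(0)\geq\Omega(h(P)/\alpha)$.

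I will then combine these widths. At every height $y\in[0,h(P)]$, the horizontal distance between $s_i$ and $s_{i+2}$ equals $w_i(y)+w_{i+1}(y)$, an affine function of $y$. At $y=0$ the $w_{i+1}$ summand is $\Omega(h(P)/\alpha)$ while $w_i$ is nonnegative; at $y=h(P)$ the situation is reversed. Hence both endpoint values of the affine function are $\Omega(h(P)/\alpha)$, so the same bound holds throughout $[0,h(P)]$. To translate this into a Euclidean bound I split the closest pair $(p,q)\in s_i\times s_{i+2}$ into three standard cases: $p$ and $q$ share a $y$-coordinate (horizontal bound applies directly), their vertical gap is itself $\Omega(h(P)/\alpha)$ (done), or the perpendicular-foot configuration applies and the Euclidean distance equals the horizontal distance scaled by the cosine of the tilt of $s_i$ or $s_{i+2}$ from vertical.

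The symmetric case ($\tau_i\in\mathcal{T}_>$ and $\tau_{i+1}\in\mathcal{T}_<$) is handled by swapping the roles of $t_P$ and $b_P$: now $w_i$ is decreasing and $w_{i+1}$ is increasing, but the same affine-both-endpoints-large argument yields the same horizontal bound. The main obstacle I anticipate is the perpendicular-foot subcase of the Euclidean-vs-horizontal conversion, where a factor $\cos\theta$ appears with $\theta$ the tilt of $s_i$ or $s_{i+2}$ from vertical; I plan to absorb this factor by invoking Lemma~\ref{lem:inscribed} on the fat bodies $a_i,a_{i+1}\in\across(P)$, whose inscribed disks of diameter $\Omega(h(P)/\alpha)$ must fit inside $\tau_i,\tau_{i+1}$ and thus force the separators to be close enough to vertical that $\cos\theta$ is bounded below by a constant and is absorbed into the asymptotic $\Omega(\cdot)$.
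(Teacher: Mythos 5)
Your route --- projecting onto the horizontal axis and using the affine-both-endpoints-large argument --- is genuinely different from the paper's. The paper instead assumes w.l.o.g.\ that $\angle(s_i,s_{i+1}) \le \angle(s_{i+1},s_{i+2})$ and works in the direction perpendicular to $s_i$: there, $s_i$ projects to a single point, $s_{i+1}$ projects to an interval of length $\|s_{i+1}\|\sin\angle(s_i,s_{i+1}) \geq h(P)\cdot\Omega(1/\alpha)$, and the strip bounded by the two lines parallel to $s_i$ through the endpoints of $s_{i+1}$ is argued to separate $s_i$ from $s_{i+2}$. Its width then lower-bounds $\dist(s_i,s_{i+2})$ directly, with no cosine factor to absorb. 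Your observation that the horizontal gap between $s_i$ and $s_{i+2}$ is $\Omega(h(P)/\alpha)$ at every height is correct; the trouble is the last step, converting that into a Euclidean bound.

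The proposed fix --- invoking Lemma~\ref{lem:inscribed} to force the separators near vertical --- does not close the gap. That lemma says $a_i\cap\slab(P)$ contains a disk of diameter $\Omega(h(P)/\alpha)$ lying in the trapezoid $\tau_i$; this lower-bounds how \emph{wide} the trapezoid must be, but places no constraint on the slope of its bounding segments: a trapezoid can contain the required disk and still have a boundary segment of arbitrarily large slope, simply by being wider. Concretely, if $s_i$ runs from $(0,0)$ to $(K,h(P))$ and $s_{i+2}$ from $(c,0)$ to $(K+c,h(P))$, the horizontal gap is exactly $c$ at every height, yet $\dist(s_i,s_{i+2})=c/\sqrt{1+K^2}\to 0$ as $K\to\infty$, and nothing in Lemma~\ref{lem:inscribed} excludes large $K$. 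Whatever tilt control is available here must come from the two angle bounds together with the meets-below/meets-above structure --- which is exactly what the paper's choice of projection direction (perpendicular to whichever of $s_i,s_{i+2}$ makes the smaller angle with $s_{i+1}$) exploits, and even a naive slope bound extracted from those conditions would only give $\cos\theta=\Omega(1/\alpha)$, costing an extra factor of $\alpha$. As written, the perpendicular-foot subcase of your conversion is a genuine gap.
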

\begin{proof}
Assume that $\tau_i\in \mathcal{T}_<$ and $\tau_{i+1}\in \mathcal{T}_>$. 
Note that $\|s_{i+1}\|\geq h(P)$. By Lemma~\ref{lem:angle}, we have $\angle(s_i,s_{i+1})\geq \Omega(1/\alpha)$
and $\angle(s_{i+1},s_{i+2})\geq \Omega(1/\alpha)$. Assume w.l.o.g.\ that 
$\angle(s_i,s_{i+1})\leq\angle(s_{i+1},s_{i+2})$. Then $s_i$ and $s_{i+2}$ are separated by a slab bounded by two lines parallel to $s_i$ that pass through the two endpoints of $s_{i+1}$. The distance between these two parallel lines is at least 
$
\|s_{i+1}\| \cdot \sin \angle (s_i,s_{i+1})
\geq h(P)\cdot \sin(\Omega(1/\alpha))
\geq \Omega(h(P)/\alpha),
$
as claimed.
\end{proof}

\begin{lemma}\label{lem:segment}
Every line segment $s\subset \slab(P)$ intersects  $O(\alpha^2\,\mathrm{length}(s)/h(P))$ trapezoids. 
\end{lemma}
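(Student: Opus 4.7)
The plan is to bound the number $k$ of trapezoids visited by $s$ by classifying each one as $\mathcal{T}_=$, $\mathcal{T}_<$, or $\mathcal{T}_>$ and paying a length budget of $\Omega(h(P)/\alpha)$ for every \emph{bad event} (a $\mathcal{T}_=$-trapezoid, or a direct switch between the $<$ and $>$ types). Label the trapezoids $\tau_1,\dots,\tau_k$ in the order $s$ visits them, let $\sigma_\ell$ be the separator between $\tau_\ell$ and $\tau_{\ell+1}$, and let $d_\ell=|s\cap\tau_\ell|$, so $\sum_\ell d_\ell\le L:=\mathrm{length}(s)$. Every $\mathcal{T}_=$-trapezoid $\tau_\ell$ that does not contain an endpoint of $s$ satisfies $d_\ell\ge \dist(\sigma_{\ell-1},\sigma_\ell)\ge h(P)/\alpha$ by definition, so the number of $\mathcal{T}_=$-trapezoids obeys $n_=\le \alpha L/h(P)+O(1)$.

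Next I would partition the $\mathcal{T}_<\cup\mathcal{T}_>$ trapezoids, in their original order, into maximal same-type runs and show each run has size $O(\alpha)$. Let $\theta_\ell\in(-\pi/2,\pi/2)$ denote the tilt angle that the line supporting $\sigma_\ell$ makes with vertical. A short computation with the linear equations of two supporting lines shows that ``$\sigma_{\ell-1}$ and $\sigma_\ell$ meet below the slab'' is equivalent to $\theta_\ell>\theta_{\ell-1}$, and ``meet above'' to $\theta_\ell<\theta_{\ell-1}$. Consequently the tilt angles along a single same-type run are strictly monotone. Since each pair is in $\mathcal{T}_<\cup\mathcal{T}_>$, the distance alternative of Lemma~\ref{lem:angle} is ruled out and every consecutive increment satisfies $|\theta_\ell-\theta_{\ell-1}|\ge\Omega(1/\alpha)$; telescoping over the range $(-\pi/2,\pi/2)$ then gives at most $O(\alpha)$ trapezoids per run.

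The last step is to bound the number $R$ of runs. Two adjacent runs are of opposite type and must be separated in the original sequence either by at least one $\mathcal{T}_=$-trapezoid or by a \emph{direct transition} in which $\tau_\ell\in\mathcal{T}_<$ and $\tau_{\ell+1}\in\mathcal{T}_>$ are consecutive (or vice versa); hence $R\le n_=+T+1$, where $T$ is the number of direct transitions. For each direct transition, Lemma~\ref{lem:consecutive} gives $d_\ell+d_{\ell+1}\ge \dist(\sigma_{\ell-1},\sigma_{\ell+1})\ge \Omega(h(P)/\alpha)$. Summing this inequality over all $T$ direct transitions, and noting that each $d_\ell$ appears in at most two such sums, yields $2L\ge T\cdot\Omega(h(P)/\alpha)$, so $T=O(\alpha L/h(P))$. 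Combining bounds, $R=O(\alpha L/h(P))$, the total number of $\mathcal{T}_<\cup\mathcal{T}_>$ trapezoids is at most $O(\alpha)\cdot R=O(\alpha^2 L/h(P))$, and adding $n_=$ gives $k=O(\alpha^2 L/h(P))$.

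The main obstacle is the tilt-angle monotonicity claim inside a same-type run: this is not built into Lemma~\ref{lem:angle} and requires a direct geometric verification, but it is precisely what turns the local per-pair angle bound into a run-length bound by telescoping and thereby produces the $\alpha^2$ factor. Once that lemma is in hand, the remaining argument is a careful charging in the same spirit as Lemma~\ref{lem:sequencewidth}, with the minor bookkeeping that each $d_\ell$ is double-counted at most twice in the direct-transition sum.
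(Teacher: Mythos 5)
Your proof is correct and takes essentially the same approach as the paper: classify the crossed trapezoids into $\mathcal{T}_=$, $\mathcal{T}_<$, $\mathcal{T}_>$, charge length $\Omega(h(P)/\alpha)$ to each $\mathcal{T}_=$-crossing and to each direct $\mathcal{T}_<$/$\mathcal{T}_>$ transition (via Lemmata~\ref{lem:angle} and~\ref{lem:consecutive}), bound each homogeneous run by $O(\alpha)$ using the $\Omega(1/\alpha)$ angle increment from Lemma~\ref{lem:angle}, and combine. The tilt-angle monotonicity within a run, which you flag as the main thing to verify, is correct (it follows from comparing the supporting lines' $x$-intercepts at $b_P$ and $t_P$) and is exactly what the paper leaves implicit in its assertion that ``by Lemma~\ref{lem:angle} there are at most $O(\alpha)$ consecutive trapezoids in $\mathcal{T}_<$''; your explicit run/event bookkeeping is a cleaner account of the same charging.
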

\begin{proof}
Suppose that $s$ intersects a sequence of trapezoids $\tau_1,\ldots , \tau_{m}$. Then it crosses the trapezoids $\tau_2,\ldots , \tau_{m-1}$.
If $s$ crosses $\tau_j\in \mathcal{T}_=$, then $\mathrm{length}(s\cap \tau_j)\geq \Omega(h(P)/\alpha)$ by Lemma~\ref{lem:angle}. If it crosses two consecutive trapezoids $\tau_j\in \mathcal{T}_<$ and $\tau_{j+1}\in \mathcal{T}_>$ or vice versa, then $\mathrm{length}(s\cap (\tau_j\cup \tau_{j+1}))\geq \Omega(h(P)/\alpha)$ by Lemma~\ref{lem:consecutive}. We are left with sequences of consecutive trapezoids in either $\mathcal{T}_<$ or $\mathcal{T}_>$. However, by Lemma~\ref{lem:angle}, each such sequence contains at most $O(\alpha)$ trapezoids. Overall, $s$ intersects at most 
\[
2+O\left(\frac{\mathrm{length}(s)}{h(P)/\alpha}\right)\cdot O(\alpha)
\leq O\left(\frac{\alpha^2\,\mathrm{length}(s)}{h(P)}\right)
\] trapezoids.
\end{proof}

\begin{lemma}\label{lem:fat-inside}
Every $\alpha$-fat convex body in $\inside(P)$ intersects $O(\alpha^3)$ bodies in $\across(P)$ that are disjoint w.r.t.\ $\slab(P)$. 
\end{lemma}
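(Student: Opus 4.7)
The plan is to reduce the claim to Lemma~\ref{lem:segment} by extracting a single chord of the fat body that witnesses all of its $\slab(P)$-disjoint intersections with $\across(P)$. Let $q\in\inside(P)$ be $\alpha$-fat, and let $a_1,\ldots,a_m\in\across(P)$ be pairwise $\slab(P)$-disjoint bodies each meeting $q$, indexed by rank. Let $\tau_1,\ldots,\tau_m\subset\slab(P)$ be the interior-disjoint trapezoids enclosing $a_1,\ldots,a_m$, and let $s_1,\ldots,s_{m-1}$ be the full-height separators between consecutive $\tau_i$, as constructed just before Lemma~\ref{lem:angle}. The goal is to show $m=O(\alpha^3)$.

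I would proceed in three steps. First, bound $\diam(q)$ in terms of $h(P)$: since $q\subset\mathrm{int}(\slab(P))$, any inscribed disk of $q$ has radius at most $h(P)/2$, so $\alpha$-fatness yields $\diam(q)\leq 2\varrho_{\rm out}(q)\leq 2\alpha\,\varrho_{\rm in}(q)\leq \alpha\,h(P)$. Second, pick $p_1\in q\cap a_1$ and $p_m\in q\cap a_m$, and let $\sigma=p_1p_m$; convexity of $q$ gives $\sigma\subset q$, so $\mathrm{length}(\sigma)\leq\alpha\,h(P)$, and since each $s_i$ is a full-height chord of $\slab(P)$ while $\sigma\subset\slab(P)$, the segment $\sigma$ must cross every $s_i$ on its way from $\tau_1$ to $\tau_m$, and hence intersect every $\tau_\ell$. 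Third, Lemma~\ref{lem:segment} applied to $\sigma$ yields $m\leq O(\alpha^2\cdot\mathrm{length}(\sigma)/h(P))\leq O(\alpha^3)$.

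The crux is the middle step: convexity of $q$ together with the full-height property of the separators lets me collapse $m$ separate intersection events into a single chord argument, avoiding the need to sum segment bounds at many heights. Once this is in place, the diameter bound and Lemma~\ref{lem:segment} finish the proof almost mechanically. The only thing to watch is that fatness is used in both directions: slab containment gives the bound $\varrho_{\rm in}\leq h(P)/2$, and fatness then upgrades this to a bound on $\varrho_{\rm out}$ and hence on $\diam(q)$.
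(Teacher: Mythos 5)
Your proof is correct and follows essentially the same route as the paper: bound $\diam(q)\leq\alpha\,h(P)$ via slab containment plus $\alpha$-fatness, then select a chord of $q$ joining the two extreme intersected bodies and invoke Lemma~\ref{lem:segment}. Your explicit remark that the chord must cross every full-height separator $s_i$ (and hence every intermediate trapezoid) makes the reduction slightly more transparent than the paper's phrasing, but the argument is the same.
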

\begin{proof}
Let $s\in \inside(P)$; and let $d_{\rm in}$ and $d_{\rm out}$, resp., be a maximal inscribed and a minimum enclosing disk of $s$. Then $s\subset \slab(P)$ and $\alpha$-fatness imply
\[\diam(s)
\leq \diam(d_{\rm out})
\leq \alpha\, \diam(d_{\rm in})
\leq \alpha\, \mathrm{height}(s)
\leq \alpha \, h(P).
\]

Suppose that $s$ intersects a sequence of $m$ convex bodies $a_1,\ldots , a_m\in \across(P)$,  that are disjoint w.r.t.\ $\slab(P)$, and lie in trapezoids $\tau_1,\ldots ,\tau_m$. Let $s_0\subset s$ be a line segment between a point in $a_1\cap s$ and $a_m\cap s$. Then $s$ and $s_0$ intersect the same sequence of trapezoids; and  $\mathrm{length}(s_0)\leq \diam(s)\leq \alpha\, h(P)$. By Lemma~\ref{lem:segment}, 
$s_0$ intersects $O(\alpha^3)$ trapezoids.
\end{proof}

\begin{lemma}\label{lem:fat-boundary}
Every $\alpha$-fat convex body $b\in \bottomset(P)$ (resp., $\topset(P)$) intersects $O(\alpha^3)$  bodies $a \in \across(P)$ that are pairwise disjoint w.r.t.\ $\slab(P)$ and do not intersect $b$ along $b_P$ (resp., $t_P$).
\end{lemma}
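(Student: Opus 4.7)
I follow the strategy of Lemma~\ref{lem:fat-inside}. Order the bodies $a_1,\ldots,a_m\in\across(P)$ in question by rank, with enclosing trapezoids $\tau_1,\ldots,\tau_m$. For each $i$, pick $p_i\in b\cap a_i$ strictly above $b_P$; such a point exists because $a_i\cap b\neq\emptyset$ in $\slab(P)$ while $a_i\cap b\cap b_P=\emptyset$. By convexity of $b$, the segment $s^\star=\overline{p_1p_m}\subset b$, and since both endpoints lie strictly above the horizontal line $b_P$, the whole segment lies in $\slab(P)$ above $b_P$ and crosses all $m$ trapezoids in order. By Lemma~\ref{lem:segment}, $m=O(\alpha^2\,\mathrm{length}(s^\star)/h(P))$, so it suffices to show $\mathrm{length}(s^\star)=O(\alpha\,h(P))$.

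The obstacle, absent from Lemma~\ref{lem:fat-inside}, is that $b$ need not lie inside $\slab(P)$: $\diam(b)$ can be arbitrary, so the naive bound $\mathrm{length}(s^\star)\le\diam(b)$ is too weak. The extra hypothesis to exploit is $a_i\cap b\cap b_P=\emptyset$: the chord $b\cap b_P=[u_b,v_b]$ is disjoint from each chord $a_i\cap b_P=[u_i,v_i]$. Split the $a_i$'s into a left group (with $v_i<u_b$) and a right group (with $u_i>v_b$); by symmetry it suffices to bound the horizontal spread of the $p_i$'s in each group by $O(\alpha\,h(P))$, since combined with $|p_{1,y}-p_{m,y}|\le h(P)$ this yields $\mathrm{length}(s^\star)=O(\alpha\,h(P))$.

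For the right group, $a_i\cap\slab(P)$ has vertical extent $h(P)$ and is $\alpha$-fat, so its horizontal extent is at most $\alpha\,h(P)$; since $u_i$ is the leftmost $x$-coordinate of $a_i$ on $b_P$, this forces $|p_{i,x}-u_i|\le\alpha\,h(P)$, and in particular $p_{i,x}>v_b-\alpha\,h(P)$. For the matching upper bound one needs $r_b(y)\le v_b+O(\alpha\,h(P))$ for $y$ in the slab above $b_P$, where $r_b$ denotes the right boundary of $b$. This is immediate if $\diam(b)=O(\alpha\,h(P))$ or if $b$'s widest horizontal slice lies at or below $b_P$ (by concavity of $r_b$ together with $r_b(0)=v_b$); the remaining regime, in which $b$ has a very large inscribed disk whose widest horizontal slice lies above $b_P$, requires a convexity-plus-fatness argument for $b$ restricted to the horizontal band of height $\le h(P)$ above $b_P$, analogous to the argument used for $a_i$. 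This last sub-case is the most delicate step in the proof. Combining these bounds and their left-group analogues yields $\mathrm{length}(s^\star)=O(\alpha\,h(P))$, hence $m=O(\alpha^3)$ by Lemma~\ref{lem:segment}.
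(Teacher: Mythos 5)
There is a genuine gap, and in fact two separate ones.

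\textbf{The single-segment reduction fails.} You take one segment $s^\star=\overline{p_1p_m}$ and apply Lemma~\ref{lem:segment}, claiming $\mathrm{length}(s^\star)=O(\alpha\,h(P))$. But if the left group and the right group are both nonempty, then $p_1$ lies near the left end of $b\cap b_P$ and $p_m$ near the right end, so $\mathrm{length}(s^\star)\geq |v_b-u_b| - O(\alpha\,h(P))$, and the chord $b\cap b_P$ can be arbitrarily long compared to $h(P)$ (take $b$ to be a huge $\alpha$-fat body). So $\mathrm{length}(s^\star)$ is unbounded, Lemma~\ref{lem:segment} returns a useless bound, and the sentence ``since combined with $|p_{1,y}-p_{m,y}|\le h(P)$ this yields $\mathrm{length}(s^\star)=O(\alpha\,h(P))$'' is simply wrong --- bounding the spread \emph{within} each group does not bound the spread \emph{across} the gap between them. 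A correct version of your strategy must run the segment argument separately on the left group and the right group (plus possibly the one or two trapezoids straddling the chord), which is in effect what the paper does.

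\textbf{The per-group argument is also broken and, at the crux, conceded rather than proved.} Your claim that ``$a_i\cap\slab(P)$ has vertical extent $h(P)$ and is $\alpha$-fat, so its horizontal extent is at most $\alpha\,h(P)$'' is false: $a_i\in\across(P)$ may be a disk of radius $R\gg h(P)$ centered well below $b_P$, in which case $a_i\cap\slab(P)$ is a lens of height $h(P)$ and width $\Theta(\sqrt{Rh(P)})$, which is unbounded; fatness of $a_i$ does not transfer to fatness of $a_i\cap\slab(P)$. More to the point, the quantity that must be bounded is the horizontal extent of $b$ (not of $a_i$) just above $b_P$ on either side of the chord $b\cap b_P$ --- that is what constrains where the $p_i$ can sit. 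You correctly sense this (``the remaining regime, in which $b$ has a very large inscribed disk whose widest horizontal slice lies above $b_P$, $\ldots$ requires a convexity-plus-fatness argument $\ldots$ This last sub-case is the most delicate step''), but you leave exactly that step unproved. That step is the heart of the lemma. The paper handles it by cutting $b$ with the two vertical lines $l,r$ through the endpoints of $b\cap b_P$ into $b_{\mathrm{left}},b_{\mathrm{center}},b_{\mathrm{right}}$, and proving $\diam(b_{\mathrm{left}}),\diam(b_{\mathrm{right}})=O(\alpha\,h(P))$: if the maximum inscribed disk $d_{\rm in}$ of $b$ has small radius this is immediate from fatness, and if $d_{\rm in}$ has radius $>h(P)$ then (using that $b$ lies below $t_P$, forcing the center of $d_{\rm in}$ below $b_P$ and hence $d_{\rm in}\cap\slab(P)$ between $l$ and $r$) the two tangents from an extreme point $p$ of $b_{\mathrm{left}}$ to $d_{\rm in}$ subtend an $\Omega(1/\alpha)$ angle, cross $l$ within the slab, and thus pin $p$ at distance $O(\alpha\,h(P))$ from $l$. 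It then covers all relevant trapezoids by the four objects $b_{\mathrm{left}}$, $b_{\mathrm{right}}$, $l\cap\slab(P)$, $r\cap\slab(P)$ (each of extent $O(\alpha\,h(P))$) and applies Lemma~\ref{lem:segment} to each --- avoiding your single-segment pitfall entirely.
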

\begin{proof}
Let $b\in \bottomset(P)$. Let $l$ and $r$, resp., be the vertical lines passing through the left and right endpoints of $b\cap b_P$. The lines $l$ and $r$ decompose $b$ into three sets: $b_{\rm left}$, $b_{\rm center}$, and $b_{\rm right}$; see Fig.~\ref{fig:sides}.

\begin{figure}[htbp]
 \centering
 \includegraphics[width=\textwidth]{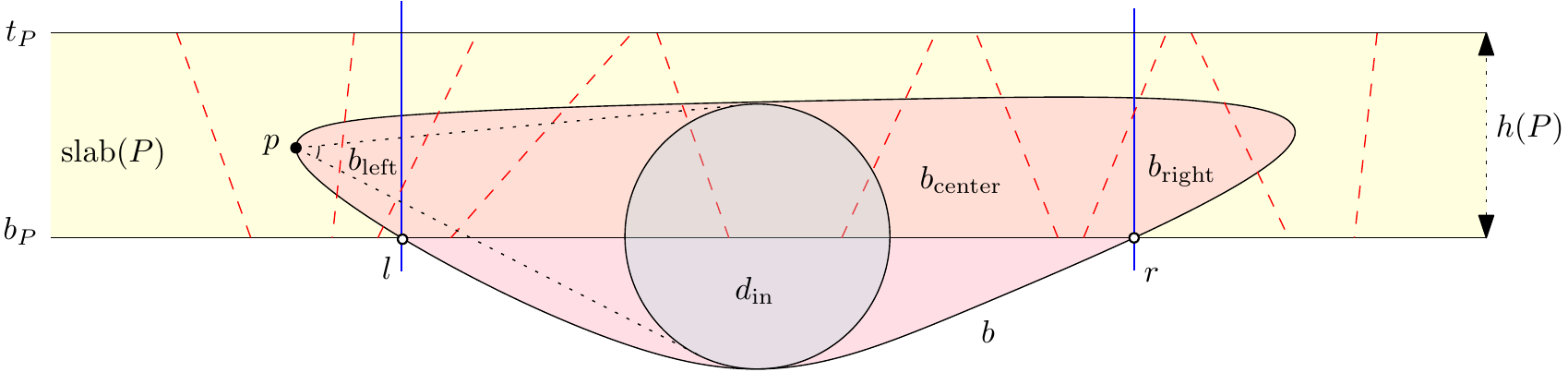}
 \caption{A convex body $b\in \bottomset(P)$ is divided into three parts: 
 $b_\mathrm{left}$, $b_\mathrm{center}$, and $b_\mathrm{right}$. Dashed lines indicate a trapezoid decomposition of $\slab(P)$.}
    \label{fig:sides}
\end{figure}

We claim that $\diam(b_{\rm left})\leq O(\alpha\, h(P))$ and $\diam(b_{\rm right})\leq O(\alpha \, h(P))$. We prove the claim for $b_{\rm left}$; the argument is analogous for $b_{\rm right}$. Consider a maximum inscribed disk $d_{\rm in}$ of $b$. If $\diam(d_{\rm in})\leq 2\, h(P)$, then
$
\diam(b_{\rm left})
\leq \diam(b)
\leq \alpha\, \diam(d_{\rm in}) 
\leq \alpha \cdot 2h(P).
$

Otherwise the radius of $d_{\rm in}$ exceeds $h(P)$. Since $d_{\rm in}\subset b$ and $b\in \bottomset(P)$, then $d_{\rm in}$ lies below $t_P$ and the center of $d_{\rm in}$ is below $b_P$. This, in turn, implies that $d_{\rm in}\cap \slab(P)$ lies between $l$ and $r$; hence it is disjoint from $b_{\rm left}$. Let $p$ be a leftmost point in $b_{\rm left}$. The angle between the two tangents from $p$ to $d_{\rm in}$ is 
$\beta\geq 2\, \sin^{-1}\left(\frac{1}{2\alpha}\right)\geq \Omega(1/\alpha)$. 
By convexity, both tangent lines cross $l$ within $b$, hence within $\slab(P)$;
and so $p$ is at distance at most $(h(P)/2)/\tan(\beta/2)\leq O(\alpha\, h(P))$ from $l$. Now the axis-aligned bounding box of $b_{\rm}$ has height $O(h(P))$ and width $O(\alpha\, h(P))$, which implies that $\diam(b_{\rm left})\leq O(\alpha\, h(P))$, as claimed. 

Suppose that $b$ intersects a sequence of convex bodies $a_1,\ldots , a_m\in \across(P)$ that are pairwise disjoint w.r.t.\ $\slab(P)$ and are also disjoint from $b\cap b_P$. 
Then $a_1,\ldots , a_m$ lie in trapezoids $\tau_1,\ldots ,\tau_m$. Every trapezoid intersects $b_{\rm left}$ or $b_{\rm right}$, or the segments $l\cap \slab(P)$ or $r\cap \slab(P)$. By Lemma~\ref{lem:segment},  they jointly intersect at most $O(\alpha^3)$ trapezoids.
\end{proof}

\begin{lemma}
For every node $P$ of the partition tree, $H(P)$ has at most $O(\alpha^3\cdot |S(P)|)$ edges. 
\end{lemma}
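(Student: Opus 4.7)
The plan is to bound each of $H_{\bottomset}(P)$, $H_{\topset}(P)$, and $H_{\inside}(P)$ separately and sum. Theorem~\ref{thm:intervals} immediately gives $|E(H_{\bottomset}(P))| \le 2\,|\bottomset(P)|$ and $|E(H_{\topset}(P))| \le 2\,|\topset(P)|$. For $H_{\inside}(P)$, Lemma~\ref{lem:Hin-fat} gives $|E(H_{\inside}'(P))| = O(|\across(P)|)$, so the remaining task is to bound the number of edges added by the per-body loop. I will show that each $s \in \inside(P) \cup \bottomset(P) \cup \topset(P)$ contributes only $O(\alpha^3)$ such edges.

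Fix $s \in \inside(P)$. At most one edge involving $s$ is added per star center $c \in \mathcal{C} \cup \mathcal{C}'$: either $s$--$c$ when $s \cap c \neq \emptyset$ (Case~1), or $s$--$n$ for some $n \in \mathcal{N}(c)$ with $s \cap n \neq \emptyset$ when $s \cap c = \emptyset$ (Case~2). I apply Lemma~\ref{lem:disjoint} to split $\mathcal{C} \cup \mathcal{C}'$ into four subclasses of bodies pairwise disjoint w.r.t.\ $\slab(P)$, and work one subclass at a time. Each subclass induces an interior-disjoint trapezoidal decomposition of $\slab(P)$ exactly as in the proof of Lemma~\ref{lem:fat-inside}. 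Because $\diam(s) \le \alpha\,h(P)$ by $\alpha$-fatness, Lemma~\ref{lem:segment} shows that $s$ meets only a contiguous block of $O(\alpha^3)$ consecutive trapezoids; hence the Case~1 centers in the subclass number at most $O(\alpha^3)$.

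For Case~2 centers in the subclass whose trapezoid lies outside this block, any witness $n \in \mathcal{N}(c) \cap T_s$ (where $T_s = \{a \in \across(P) : s \cap a \neq \emptyset\}$) is a convex body meeting both $c$ and $s$, so $n \cap \slab(P)$ contains a line segment bridging their respective trapezoids. By Lemma~\ref{lem:order-intersect}(2) the rank of $n$ is confined to a bounded window around that of $c$, so any single $n$ can witness Case~2 for only $O(1)$ star centers in the subclass. Combining this with Lemma~\ref{lem:fat-inside}, which caps the pairwise-disjoint-in-slab bodies of $\across(P)$ that $s$ meets at $O(\alpha^3)$, and with the trapezoid-crossing bound of Lemma~\ref{lem:segment} applied to the bridging segments, limits the Case~2 centers per subclass to $O(\alpha^3)$. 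Summing over the four subclasses yields $O(\alpha^3)$ extra edges per $s$. For $s \in \bottomset(P) \cup \topset(P)$, the same argument applies with Lemma~\ref{lem:fat-boundary} in place of Lemma~\ref{lem:fat-inside}.

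Putting everything together,
\[
|E(H(P))| \le 2\,(|\bottomset(P)|+|\topset(P)|) + O(|\across(P)|) + O(\alpha^3)\bigl(|\inside(P)|+|\bottomset(P)|+|\topset(P)|\bigr),
\]
which is $O(\alpha^3\,|S(P)|)$ since $S(P) = \inside(P) \cup (\bottomset(P)\setminus\across(P)) \cup (\topset(P)\setminus\across(P)) \cup \across(P)$. The principal obstacle is the Case~2 count: $|T_s|$ may be arbitrarily large in general, so bounding witnesses directly does not work; the four-class decomposition of Lemma~\ref{lem:disjoint} is what reduces the analysis to pairwise-disjoint settings where the trapezoidal/segment-crossing machinery of Lemmas~\ref{lem:fat-inside}, \ref{lem:segment}, and \ref{lem:order-intersect} delivers a uniform $O(\alpha^3)$ cap per subclass.
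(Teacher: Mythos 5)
Your proposal follows the same route as the paper's proof: bound $H_{\bottomset}(P)$ and $H_{\topset}(P)$ by Theorem~\ref{thm:intervals}, bound $H_{\inside}'(P)$ via Lemma~\ref{lem:Hin-fat}, use Lemma~\ref{lem:disjoint} to split $\mathcal{C}\cup\mathcal{C}'$ into four pairwise disjoint (w.r.t.\ $\slab(P)$) subfamilies, and apply Lemmas~\ref{lem:fat-inside} and~\ref{lem:fat-boundary} to cap the number of star centers that each $s$ meets. That all matches, and you are right to single out the ``Case~2'' edges (the edges $(s,n)$ with $n\in\mathcal{N}(c)$, $s\cap c=\emptyset$) as the step that actually needs an argument; the paper's own one-line assertion ``each edge of $H_{\inside}(P)$ corresponds to such an intersection'' only visibly accounts for Case~1.

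The gap is that your Case~2 reasoning does not close. You write that ``any single $n$ can witness Case~2 for only $O(1)$ star centers'' (correct, by Lemma~\ref{lem:order-intersect}\eqref{p:order-ii}), and then invoke Lemma~\ref{lem:fat-inside} to cap the number of distinct witnesses. But Lemma~\ref{lem:fat-inside} only bounds the size of a \emph{pairwise disjoint} family of bodies of $\across(P)$ that $s$ meets; the witnesses $n$ need not be pairwise disjoint, and indeed $s$ may meet arbitrarily many bodies of $\across(P)$ (nested or heavily overlapping), so this does not directly bound their number. The further appeal to ``Lemma~\ref{lem:segment} applied to the bridging segments'' does not rescue it: the bridging segment inside $n\cap\slab(P)$ can be arbitrarily long, so Lemma~\ref{lem:segment} yields no useful bound on how many trapezoids it crosses. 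What actually fills the gap is a \emph{rank-window} argument via Lemma~\ref{lem:order-intersect}\eqref{p:order-ii}: two bodies of $\across(P)$ whose rank windows (indices $k$ with $\rank(c_{k-1})<\rank(\cdot)\le\rank(c_k)$) differ by at least $2$ are disjoint w.r.t.\ $\slab(P)$. Hence, if $W_s$ is the set of rank windows occupied by bodies $n\in\across(P)$ meeting $s$, choosing one representative per window and then every other window yields $\lceil |W_s|/2\rceil$ bodies of $\across(P)$ pairwise disjoint in $\slab(P)$ that all meet $s$; Lemma~\ref{lem:fat-inside} forces $|W_s|=O(\alpha^3)$. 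Since each witness $n$ determines its partner $c$ up to a constant number of neighboring rank windows (Lemma~\ref{lem:order-intersect}\eqref{p:order-ii}), and each window holds $O(1)$ members of $\mathcal{C}\cup\mathcal{C}'$ (Lemma~\ref{lem:order-intersect}\eqref{p:order-i}), this bounds the Case~2 star centers per $s$ by $O(\alpha^3)$; the same chain, with Lemma~\ref{lem:fat-boundary} replacing Lemma~\ref{lem:fat-inside}, handles $s\in\bottomset(P)\cup\topset(P)$. With that substitution your proof is complete.
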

\begin{proof}
Recall that $H(P)=H_\inside(P)\cup H_\bottomset(P)\cup H_\topset(P)$. By construction, $H_{\bottomset}(P)$ and $H_{\topset}(P)$ are 2-hop spanners for the intersection graphs of the intervals $\{s\cap b_P: s\in \bottomset(P)\}$ and $\{s\cap t_P: s\in \topset(P)\}$, respectively. By Theorem~\ref{thm:intervals}, they have at most $2\, |\bottomset(P)|$ and $2\, |\topset(P)|$ edges, respectively. 

By Lemma~\ref{lem:disjoint}, $\mathcal{C}\cup \mathcal{C}'$ comprises four subsets of bodies in $\across(P)$ that are disjoint w.r.t. $\slab(P)$. By Lemma~\ref{lem:fat-inside}, each $s\in \inside(P)$ intersects $O(\alpha^3)$ bodies in $\mathcal{C}\cup \mathcal{C}'$. By Lemma~\ref{lem:fat-boundary}, each $s\in \bottomset(P)\cup \topset(P)$ intersects $O(\alpha^3)$ bodies in $\mathcal{C}\cup \mathcal{C}'$ that $s$ intersects along neither $b_P$ nor $t_P$. Each edge of $H_\inside(P)$ corresponds to such an intersection, hence $H_\inside(P)$ has $O(\alpha^3\, |\inside(P)\cup\bottomset(P)\cup\topset(P)|) \leq O(\alpha^3\, |S(P)|)$ edges. 
Overall, $H(P)$ has $O(\alpha^3\, |S(P)|)$ edges.
\end{proof}
\begin{corollary}\label{cor:number_of_edges}
For every set $S$ of $n$ $\alpha$-fat convex bodies in the plane, $H(S)$ has $O(\alpha^3\, n\log n)$ edges. 
\end{corollary}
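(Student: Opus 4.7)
The plan is to mimic the charging argument used in Lemma~\ref{lem:efficiency_axis} for axis-aligned fat rectangles, with the per-node bound $O(\alpha^3\,|S(P)|)$ supplied by the previous lemma in place of the $O(\alpha^2\,|S(P)|)$ bound used there. The key observation is that the space-partition tree $\mathcal{P}$ of Section~\ref{ssec:fat-construction} is built from the axis-aligned bounding boxes $\mathcal{R}=\{r(s):s\in S\}$, and the splitting rule (halving $|\rep(P)|$ at each step) is identical to the one in Section~\ref{ssec:squares}. Since $|\rep(S)|\le\binom{n}{2}$, the tree still has $O(\log n)$ levels.

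Next, I would observe that Lemma~\ref{lem:square_subproblem_counts} and Corollary~\ref{cor:size} carry over verbatim: their proofs use only (i) that the slabs at a fixed level have pairwise disjoint interiors and (ii) that each object is connected. Both properties hold here, since convex bodies are connected and the partition is the same as before. Consequently, for every $s\in S$, the number of nodes $P\in\mathcal{P}$ with $s\in\inside(P)$, $s\in\bottomset(P)\setminus\across(P)$, $s\in\topset(P)\setminus\across(P)$, or $s\in\across(P)$ is $O(\log n)$ per category, and hence $s$ appears in $S(P)=\inside(P)\cup(\bottomset(P)\setminus\across(P))\cup(\topset(P)\setminus\across(P))\cup\across(P)$ for at most $O(\log n)$ nodes $P$.

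Finally, I would charge the edges: by the previous lemma, each $H(P)$ has at most $O(\alpha^3\,|S(P)|)$ edges, so assign $O(\alpha^3)$ charges to every body in $S(P)$. Summing,
\[
|E(H(S))|\ \le\ \sum_{P\in\mathcal{P}}|E(H(P))|\ \le\ O(\alpha^3)\sum_{P\in\mathcal{P}}|S(P)|\ \le\ O(\alpha^3)\cdot n\cdot O(\log n),
\]
where the last inequality uses that each body is charged in $O(\log n)$ subproblems. This yields the claimed $O(\alpha^3\,n\log n)$ bound.

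The argument is almost entirely bookkeeping: the main work has already been done in the previous lemma (the per-node bound) and in Section~\ref{ssec:squares} (the $O(\log n)$-subproblem count). The only thing to double-check is that the arguments of Lemma~\ref{lem:square_subproblem_counts} truly depend only on connectedness of the bodies and slab-disjointness within each level, not on the rectangular shape; since those proofs never invoke axis-alignment beyond slab geometry, no obstacle arises here.
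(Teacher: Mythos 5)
Your argument is correct and is exactly the route the paper takes: the paper states the corollary as following "analogously to Lemmata~\ref{lem:square_subproblem_counts}--\ref{lem:efficiency_axis}," and your proof simply unpacks that analogy by observing that the subproblem-count lemma and its corollary use only connectedness and disjointness of slab interiors, then applies the per-node $O(\alpha^3|S(P)|)$ bound with the same charging scheme. Nothing further is needed.
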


The main result of this section is the combination of Corollaries~\ref{cor:correctness_axis} and \ref{cor:number_of_edges}. 

\begin{theorem}\label{thm:fat}
The intersection graph of a set of $n$ $\alpha$-fat convex bodies in the plane admits a 3-hop spanner with $O(\alpha^3\, n\log n)$ edges.
\end{theorem}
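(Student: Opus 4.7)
The plan is to assemble the result directly from the two corollaries just established: Corollary~\ref{cor:correctness_axis} guarantees that $H(S)=\bigcup_{P\in\mathcal{P}} H(P)$ is a 3-hop spanner of $G(S)$, and Corollary~\ref{cor:number_of_edges} bounds its size by $O(\alpha^3\, n\log n)$. So the proof is almost a one-liner; what is worth sketching is how the ingredients fit together and where the real work sits.

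First I would recall the global setup from Section~\ref{ssec:fat-construction}: replace each body $s\in S$ by its axis-parallel bounding box $r(s)$ (still fat, up to a constant factor in $\alpha$), build the slab partition tree $\mathcal{P}$ on $\mathcal{R}$ exactly as in Section~\ref{ssec:squares} using horizontal splitting lines that halve the current set of intersection-representative points, and classify $S(P)$ into $\inside(P)$, $\across(P)$, $\bottomset(P)$, and $\topset(P)$. At each node $P$, assemble $H(P)=H_{\bottomset}(P)\cup H_{\topset}(P)\cup H_{\inside}(P)$, where $H_{\bottomset}$ and $H_{\topset}$ are the 1D interval 2-hop spanners from Theorem~\ref{thm:intervals} applied to $\{s\cap b_P\}$ and $\{s\cap t_P\}$, and $H_{\inside}(P)$ is the $\mathcal{C},\mathcal{C}'$-star construction of Section~\ref{ssec:fat-construction}.

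For the stretch argument, every intersection of two bodies has a representative point; at the deepest tree node $P$ where both bodies are still present, one of them must lie in $\across(P)$, and Lemma~\ref{lem:fat-3hop} then furnishes a 3-hop path in $H(P)\subseteq H(S)$. For the size bound, the per-node estimate $|E(H(P))|\leq O(\alpha^3\,|S(P)|)$ from the final lemma of Section~\ref{ssec:fat-size}, combined with the fact that each body lies in $S(P)$ for only $O(\log n)$ nodes (an immediate analogue of Lemma~\ref{lem:square_subproblem_counts} and Corollary~\ref{cor:size}, whose proofs depend only on the slab partition and not on fatness), yields $|E(H(S))|\leq O(\alpha^3\, n\log n)$ by charging $O(\alpha^3)$ per appearance and summing over the tree.

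The hard part of the whole section, already absorbed into Corollary~\ref{cor:number_of_edges}, is the fatness-driven bound on how many pairwise-$\slab(P)$-disjoint bodies of $\across(P)$ a single body of $\inside(P)$ or $\bottomset(P)\cup\topset(P)$ can meet; this replaces the clean width argument available for axis-aligned fat rectangles, and is delivered by Lemmata~\ref{lem:fat-inside}--\ref{lem:fat-boundary} through the trapezoidal decomposition of $\slab(P)$ and the ``far apart or far from parallel'' dichotomy of Lemma~\ref{lem:angle}. With those geometric inputs and the greedy rank construction of $\mathcal{C},\mathcal{C}'$ in hand, the final proof is just the invocation of the two corollaries.
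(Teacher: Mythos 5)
Your proposal is correct and follows the paper's approach exactly: the paper states Theorem~\ref{thm:fat} as the immediate combination of Corollary~\ref{cor:correctness_axis} (stretch) and Corollary~\ref{cor:number_of_edges} (size), precisely as you describe. Your recap of the stretch argument slightly over-simplifies the case where both bodies survive to the relevant node and neither is in $\across(P)$ --- there the intersection point lies on a boundary line $b_P$ or $t_P$ and the 1D interval spanner $H_\bottomset(P)$ or $H_\topset(P)$ supplies the 2-hop path --- but since you explicitly defer to Corollary~\ref{cor:correctness_axis} for the actual argument, this is a harmless imprecision in the sketch rather than a gap.
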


\begin{figure}[htbp]
 \centering
 \includegraphics[width=.85\textwidth]{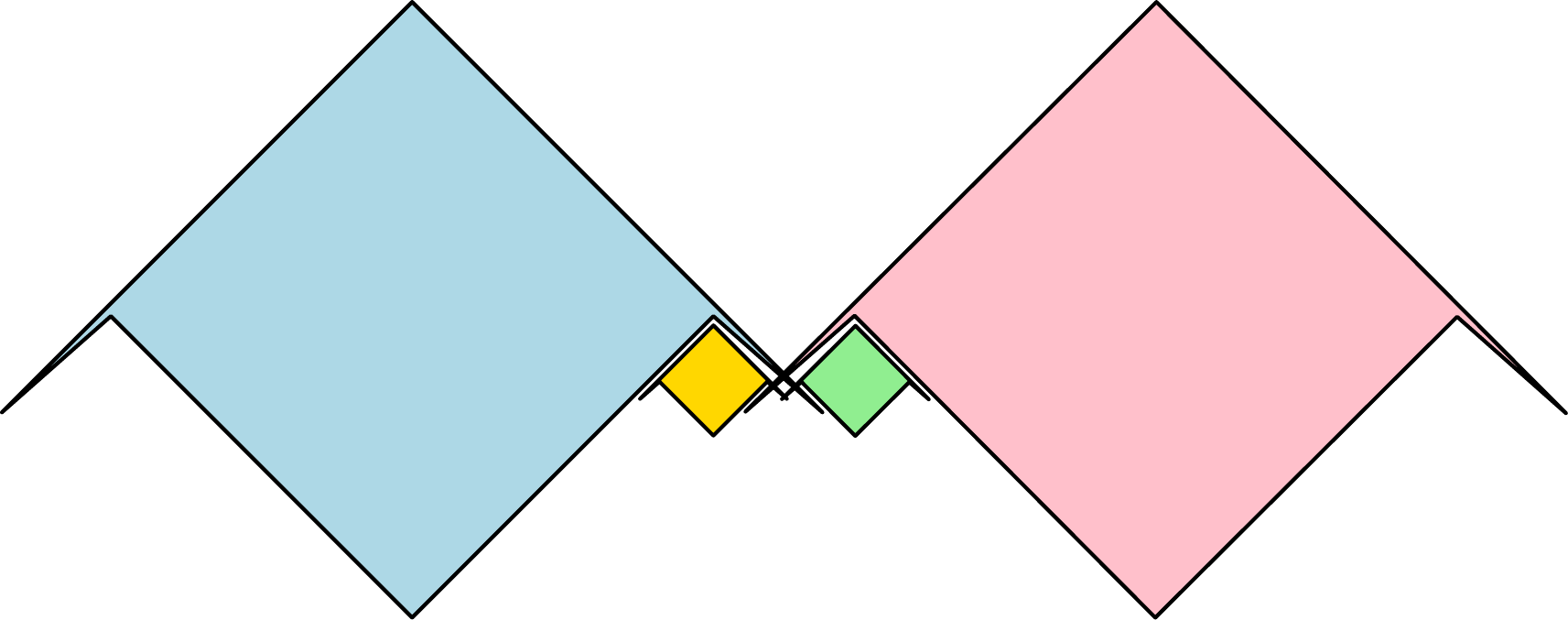}
 \caption{Homothetic copies of a nonconvex fat body $F$, with a biclique intersection graph.}
    \label{fig:nonconvex}
\end{figure}

\begin{remark}
It is unclear whether the hop-stretch factor can be improved to 2, or whether the convexity condition can be dropped. However, the convexity condition cannot be dropped if we wish to find a 2-hop spanner: It is not difficult to construct a fat body $F$ such that the intersection graph of $2n$ homothetic copies of $F$ is $K_{n,n}$; see Fig.~\ref{fig:nonconvex}.
\end{remark}

\section{Three-Hop Spanners for Axis-Aligned Rectangles}
\label{sec:rectangles}

The intersection graph of $n$ arbitrary axis-aligned rectangles may be a complete bipartite graph $K_{\lfloor n/2\rfloor,\lceil n/2\rceil}$, so a 2-hop spanner with subquadratically many edges is impossible. However, we show how to construct a 3-hop spanner with $O(n\log^2 n)$ edges.

Let $\mathcal{R}$ be set of $n$ axis-aligned rectangles in the plane, and let $G$ be their intersection graph. We distinguish between two types of intersections (see Fig.~\ref{fig:boxes}): Two rectangles have a \emph{corner intersection} if one of them contains some corner of the other; and a \emph{crossing intersection} if neither contains any corner of the other. 

\begin{figure}[htbp]
 \centering
 \includegraphics[width=.85\textwidth]{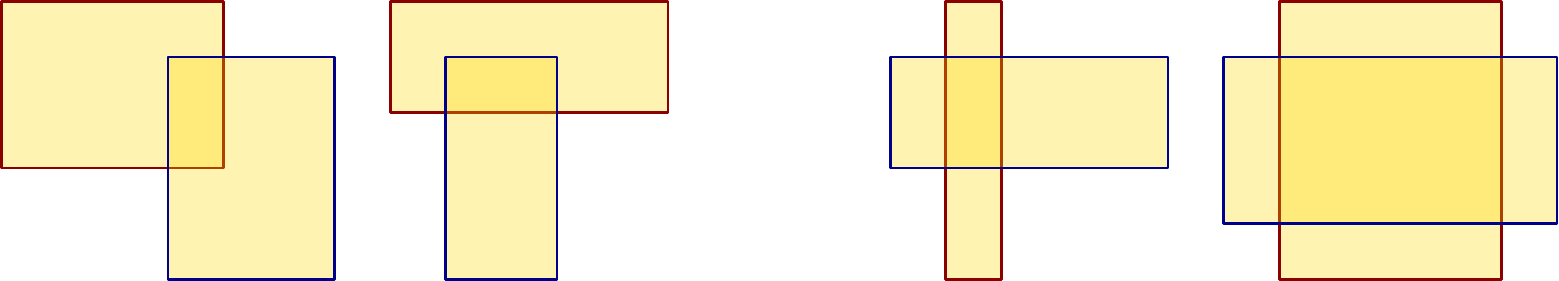}
 \caption{Corner intersections (left)
       and crossing intersections (right)}
    \label{fig:boxes}
\end{figure}

We can partition the intersection graph $G(\mathcal{R})$ into two edge-disjoint subgraphs, $G(\mathcal{R})=G_C(\mathcal{R})\cup G_X(\mathcal{R})$, where two rectangles are adjacent in $G_C(\mathcal{R})$ (resp., $G_X(\mathcal{R})$) iff they have a corner (resp., crossing) intersection. It is clear that if $H_C$ and $H_X$ are $k$-hop spanners for $G_C(\mathcal{R})$ and $G_X(\mathcal{R})$, resp., then $H=H_C\cup H_X$ is a $k$-hop spanner for $G(\mathcal{R})$.

\subsection{Corner Intersections}
\label{ssec:corner}

The divide-and-conquer strategy in Section~\ref{ssec:squares} used the fatness of the rectangles in only one step. We modify the construction to obtain a 2-hop spanner $\widehat{H}$ for corner intersections.

For a set $\mathcal{R}$ of $n$ axis-aligned rectangles, we recursively partition  $\R^2$ into slabs exactly as in Section~\ref{ssec:squares}. For a leaf node $P$, we use the same graph $\widehat{H}(P)=H(P)$. For an internal node $P$, we define the same graphs $H_\bottomset(P),H_\topset(P), H'_\centerset(P)\subset G(\mathcal{R})$, which are 2-hop spanners for $G(\bottomset(P))$, $G(\topset(P))$, and $G(\across(P))$, respectively. 
However, instead of $H_\centerset(P)$, we define $\widehat{H}_\centerset(P)$:
Recall that we subdivided $\slab(P)\cap (\bigcup \across(P))$ into disjoint rectangles, where each rectangle $I_k$ is covered by a rectangle $c_k\in \across(P)$.
We augment $H'_\centerset(P)$ as follows: for every $s \in \inside(P)$, add an edge between $s$ and $c_k$ if a corner of $s$
is in $I_k$; and let $\widehat{H}_{\centerset}(P)$ be the resulting graph. 
Finally let $\widehat{H}(P) = H_{\bottomset}(P) \cup H_{\topset}(P) \cup \widehat{H}_{\centerset}(P)$.

We can now adjust Lemma~\ref{lem:subproblemspanner} for corner intersections. 

\begin{lemma}\label{lem:subproblem-spanner+}
For every subproblem $P$, the following holds:
\begin{enumerate}
    \item For each edge $ab \in G_C(P)$ with $a \in \across(P)$, 
          $\widehat{H}(P)$ contains an $ab$-path of length at most 2.
    \item $\widehat{H}(P)$ contains $O(|S(P)|)$ edges.
\end{enumerate}
\end{lemma}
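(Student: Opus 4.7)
The plan is to handle the two assertions separately, reducing both to the interval 2-hop spanner of Theorem~\ref{thm:intervals} plus a case analysis on where $b$ lies relative to $\slab(P)$.

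For part (2), I would count edges contributed by each summand of $\widehat{H}(P) = H_\bottomset(P) \cup H_\topset(P) \cup \widehat{H}_\centerset(P)$. By Theorem~\ref{thm:intervals} (applied via Corollary~\ref{cor_2d_line}), $H_\bottomset(P)$, $H_\topset(P)$, and $H'_\centerset(P)$ contribute $O(|\bottomset(P)|)$, $O(|\topset(P)|)$, and $O(|\across(P)|)$ edges, respectively. The extra edges added when passing from $H'_\centerset(P)$ to $\widehat{H}_\centerset(P)$ come from attaching corners of rectangles in $\inside(P)$: each rectangle has four corners, and the interiors of the $\widehat{I}_k$ are pairwise disjoint, so at most $O(|\inside(P)|)$ new edges arise. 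The total is $O(|S(P)|)$.

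For part (1), I would split on which of $\across(P)$, $\bottomset(P)\setminus\across(P)$, $\topset(P)\setminus\across(P)$, $\inside(P)$ contains $b$. If $b\in\across(P)$, then $ab$ is an edge of $G(\across(P))$, and $H'_\centerset(P)\subset\widehat{H}(P)$ is a 2-hop spanner of this graph. If $b\in\bottomset(P)$ (the $\topset(P)$ case is symmetric), I would use the observation that two axis-aligned rectangles $a,b$ both crossing $b_P$ intersect if and only if their $x$-projections overlap, equivalently iff $a\cap b_P$ and $b\cap b_P$ share a point; hence every intersection between such $a$ and $b$, corner or not, is detected by the 1D spanner $H_\bottomset(P)$ built from $b_P$-restrictions.

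The essentially new case is $b\in\inside(P)$, where the corner hypothesis does real work. Since $a\in\across(P)$ spans $\slab(P)$ vertically, all four corners of $a$ lie outside $\slab(P)$, so $b\subset\slab(P)$ cannot contain any corner of $a$; consequently $a$ must contain some corner $p=(x_c,y_c)$ of $b$. Because $a\in\across(P)$, the coordinate $x_c$ lies in $a\cap b_P\subset\bigcup_{s\in\across(P)}(s\cap b_P)$, so $x_c$ falls in some interval $I_k$; since $y_c$ lies between the $y$-coordinates of $b_P$ and $t_P$, it follows that $p\in\widehat{I}_k\subset c_k$. The construction of $\widehat{H}_\centerset(P)$ therefore inserts the edge $b$-$c_k$. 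Inspecting the proof of Theorem~\ref{thm:intervals}, the 1D spanner $H'_\centerset(P)$ is a union of stars centered at the covering segments $c_k$, each containing every member of $\across(P)$ whose $b_P$-intersection meets $I_k$; since $x_c\in(a\cap b_P)\cap I_k$, the direct edge $a$-$c_k$ is already present in $H'_\centerset(P)$. The desired 2-hop path is $a\to c_k\to b$.

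The main obstacle is this last structural observation: we need a single edge $a$-$c_k$, not merely a 2-hop path between them, or the $\inside(P)$ case would only give a 3-hop path and the lemma would fail. This forces us to open the black box of Theorem~\ref{thm:intervals} and exploit that its 2-hop spanner is specifically a union of stars rooted at the covering segments. Fatness plays no role anywhere in the argument, which is what allows the $|S(P)|$ bound to avoid the $\alpha^2$ factor of Lemma~\ref{lem:subproblemspanner}.
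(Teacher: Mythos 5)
Your proof is correct and follows essentially the same approach as the paper: split on which of $\across(P)$, $\bottomset(P)$, $\topset(P)$, $\inside(P)$ contains $b$; handle the first three cases via the 1D spanners of Corollary~\ref{cor_2d_line}; and for $b\in\inside(P)$ use the corner hypothesis to locate a corner $p$ of $b$ inside some $\widehat{I}_k$, giving the 2-hop path $a\to c_k\to b$. The one place where you are more explicit than the paper is in justifying that the edge $a c_k$ (not merely a 2-hop $a$-to-$c_k$ path) already lies in $H'_\centerset(P)$: you correctly note that this requires opening the construction of Theorem~\ref{thm:intervals} and observing that the interval spanner is specifically a union of stars rooted at the covering segments, so $a\cap b_P$ meeting $I_k$ forces $ac_k$ to be present; the paper states this fact more tersely (``By construction of $H_{\centerset}(P)$, there is an edge between $a$ and $c_k$'') but relies on the same structure. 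Your part (2) count also matches.
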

\begin{proof}
\textbf{(1)} If $b \in \bottomset(P)\cup \topset(P)$, the proof is identical to that of Lemma~\ref{lem:subproblemspanner}. Assume that $b \in \inside(P)$.
For a rectangle $a\in \across$, none of the corners lie in the interior of $\slab(P)$;
while all corners of $b\in \inside(P)$ lie in the interior of $\slab(P)$. 
If $ab$ is an edge in $G_C(P)$, then $a$ and $b$ have a corner intersection,
then $a$ contains some corner $p\in b$. For each corner $p\in b$,  $\widehat{H}_\centerset(P)\subset \widehat{H}(P)$ contains an edge $bc_k$ between $b$ and a rectangle $c_k\in \across(P)$ with $p\in c_k$. It also contains an edge $ac_k$ if $a\cap c_k\neq \emptyset$. Consequently, it contains the path $(a,c_k,b)$, as required.

\smallskip\noindent \textbf{(2)} Recall that  $\widehat{H}(P) = H_{\bottomset}(P) \cup H_{\topset}(P) \cup \widehat{H}_{\centerset}(P)$. By Corollary~\ref{cor_2d_line}, $H_{\bottomset}(P)$, $H_{\topset}(P)$, and $H_{\inside}'(P)$ each have at most $2\,|S(P)|$ edges. The graph $H_{\inside}$ was constructed by augmenting $H_{\inside}'(P)$ with one edge for each of the 4 corners of every rectangle in $\inside(P)$, and so $H_{\inside}$ has at most $4\,|\inside(P)|$ more edges than $H_{\inside}'(P)$. In total, $\widehat{H}(P)$ has $O(|S(P)|)$ edges.
\end{proof}

Lemmata~\ref{lem:square_subproblem_counts}--\ref{lem:correctness_axis} carry over verbatim (they do not use fatness); and yield the following.

\begin{lemma}\label{lem:corner}
Let $\mathcal{R}$ be a set of $n$ axis-aligned rectangles in the plane. There is a graph $\widehat{H}\subset G(\mathcal{R})$ with $O(n\log n)$ edges such that if rectangles $a,b\in \mathcal{R}$ have a corner intersection, then $\widehat{H}$ contains a $ab$-path of length at most 2.
\end{lemma}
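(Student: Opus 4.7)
The plan is to mirror the proof of Theorem~\ref{thm:sq}, using the divide-and-conquer recursion tree $\mathcal{P}$ already built in Section~\ref{ssec:squares} on the set $\mathcal{R}$ of axis-aligned rectangles, and defining $\widehat{H}=\bigcup_{P\in\mathcal{P}}\widehat{H}(P)$. Since Lemma~\ref{lem:subproblem-spanner+} already packages everything we need at each node, the proof reduces to combining it with the subtree-counting lemmata from Section~\ref{ssec:squares}, which never used fatness.

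First, for the stretch, let $a,b\in\mathcal{R}$ have a corner intersection, so $ab$ is an edge of $G_C(\mathcal{R})$. Pick a representative point $p\in a\cap b$ from $\rep(\mathcal{R})$ and trace down the recursion tree as in the proof of Lemma~\ref{lem:correctness_axis}: either some partition line through $p$ coincides with $b_P$ for a node $P$ where both $a,b\in S(P)$ lie in $\bottomset(P)$, in which case $H_{\bottomset}(P)\subseteq\widehat{H}(P)$ already contains a 2-hop $ab$-path; or we follow the path down until we reach the last node $P'$ where $a,b\in S(P')$ and at least one of them, say $a$, lies in $\across(P')$ (this must happen because $a$ or $b$ is eventually removed). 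Then Lemma~\ref{lem:subproblem-spanner+}(1) applied to the edge $ab\in G_C(P')$ yields a path of length at most $2$ in $\widehat{H}(P')\subseteq\widehat{H}$.

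Second, for the size, the counting arguments in Lemma~\ref{lem:square_subproblem_counts} and Corollary~\ref{cor:size} depend only on the geometry of the slab partition and on connectivity of axis-aligned rectangles, not on fatness, so they apply verbatim: each $s\in\mathcal{R}$ belongs to $S(P)$ for $O(\log n)$ nodes $P$. Combined with Lemma~\ref{lem:subproblem-spanner+}(2), which gives $|\widehat{H}(P)|=O(|S(P)|)$, we get
\[
|\widehat{H}|\;\le\;\sum_{P\in\mathcal{P}}|\widehat{H}(P)|\;\le\;\sum_{P\in\mathcal{P}}O(|S(P)|)\;=\;O(n\log n),
\]
as desired.

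The only delicate point, and hence the one to execute carefully, is the descent argument in the stretch analysis for corner intersections: we must confirm that if $a$ and $b$ have a corner intersection, some node in the tree witnesses this via a rectangle in $\across(P')$. This follows exactly as in Lemma~\ref{lem:correctness_axis}: the recursion strictly shrinks each $S(P)$ along both branches except for rectangles in $\across(P)$ (which are removed), and the rectangles removed at $P'$ are precisely those that are 2-hop-spanned against all their neighbors in $S(P')$ by Lemma~\ref{lem:subproblem-spanner+}(1) — which here we apply specifically to the corner-intersection edge $ab$. No further machinery is required.
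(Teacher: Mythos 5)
Your proposal is correct and mirrors the paper's own argument, which simply notes that Lemmata~\ref{lem:square_subproblem_counts}--\ref{lem:correctness_axis} carry over verbatim (as they never use fatness) and applies Lemma~\ref{lem:subproblem-spanner+} per node. Your only minor imprecision is the parenthetical that ``$a$ or $b$ is eventually removed''---this is only forced in the branch where $p$ never lands on a slab boundary, which is exactly the alternative of your stated dichotomy, so the overall argument stands.
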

Note, however, that $\widehat{H}$ need not be a 2-hop spanner \emph{for} $G_C(\mathcal{R})$, as some of the edges in $\widehat{H}$ may be in $G_X(\mathcal{R})$. In any case, $\widehat{H}$ is a subgraph of $G(\mathcal{R})$.

\subsection{Crossing Intersections}
\label{ssec:crossing}

An axis-aligned rectangle $r$ is the Cartesian product of two closed intervals, $r=r_x\times r_y$. We say that a rectangle $r'$ \emph{traverses $r$ horizontally} if $r_x\subset r'_x$ and $r'_y\cap r_y\neq \emptyset$; and \emph{vertically} if $r'_x\cap r_x\neq \emptyset$ and $r_y\subset r'_y$. If $\mathcal{A}(r)$ and $\mathcal{B}(r)$ are sets of rectangles that traverse $r$ horizontally and vertically, resp., then the intersections between  $\mathcal{A}$ and $\mathcal{B}$ form a biclique in $G(\mathcal{R})$. 

\begin{figure}[htbp]
 \centering
 \includegraphics[width=.85\textwidth]{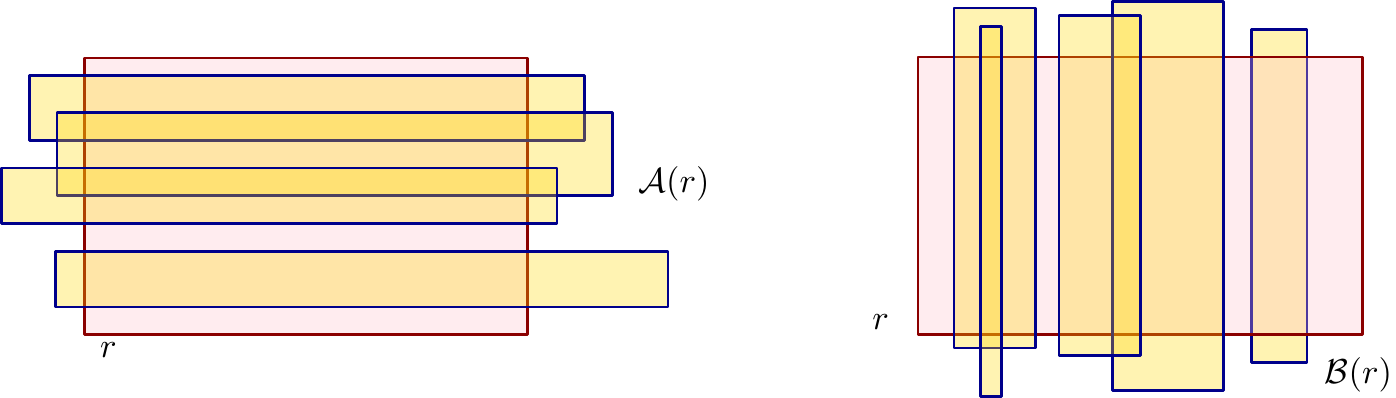}
 \caption{Rectangles that traverse $r$ horizontally (left); and vertically (right)}
    \label{fig:crossings}
\end{figure}

Recall that bicliques admit 3-hop spanners with linearly many edges.

\begin{observation}\label{obs:bistar}
For all $s,t\in \N$, $K_{s,t}$ has a 3-hop spanner with $s+t-1$ edges.  
\end{observation}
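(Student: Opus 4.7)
The plan is to construct an explicit $3$-hop spanner as a union of two stars, matching the ``union of two stars'' construction alluded to in the introduction when the paper discusses edge biclique covers. Write the bipartition of $K_{s,t}$ as $A \cup B$ with $|A|=s$ and $|B|=t$, and fix arbitrary vertices $a_0 \in A$ and $b_0 \in B$. I will take $H$ to be the union of two stars: the star centered at $a_0$ containing the edges $\{a_0 b : b \in B\}$, and the star centered at $b_0$ containing the edges $\{a b_0 : a \in A\}$. The only edge shared by the two stars is $a_0 b_0$, so the edge count is exactly $t + (s-1) = s+t-1$, as required.

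For the stretch verification, consider any edge $ab$ of $K_{s,t}$ with $a \in A$ and $b \in B$. If $a = a_0$, the edge $a_0 b$ already lies in $H$; symmetrically if $b = b_0$. Otherwise, $H$ contains the $3$-hop path $a \to b_0 \to a_0 \to b$, whose three edges $ab_0$, $b_0 a_0$, $a_0 b$ all belong to $H$ by construction. Since both the construction and the verification are immediate, there is no real obstacle, and the observation follows. (Tightness, if desired, is a separate matter: even a connected subgraph on $s+t$ vertices needs at least $s+t-1$ edges, so the bound cannot be improved without disconnecting some pair.)
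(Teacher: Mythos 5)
Your proof is correct and takes essentially the same approach as the paper: both construct the bistar $H$ as the union of the two maximal stars centered at an arbitrary $a_0 \in A$ and $b_0 \in B$, count $s+t-1$ edges, and exhibit the $3$-hop path $(a, b_0, a_0, b)$ for any bichromatic edge $ab$.
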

\begin{proof}
Let $\mathcal{A}$ and $\mathcal{B}$ be the two partite sets of $K_{s,t}$. For two arbitrary vertices $a_0\in \mathcal{A}$ and $b_0\in \mathcal{B}$, let $H$ be the \emph{bistar} comprising two maximal stars centered at $a_0$ and $b_0$, respectively.
Then $H$ has $s+t-1$ edges, and for any $a\in \mathcal{A}$ and $b\in \mathcal{B}$, it contains a 3-hop path $(a,b_0,a_0,b)$.
\end{proof}

\begin{lemma}\label{lem:crossing}
For a set $\mathcal{R}$ of $n$ axis-aligned rectangles in the plane, the graph $G_X(\mathcal{R})$ of crossing intersections can be covered by edge bicliques in $G(\mathcal{R})$ 
of total weight $W=O(n\log^2 n)$, with weight function $w(K_{s,t})=s+t$.
\end{lemma}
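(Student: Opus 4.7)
My plan is to reduce the problem of covering rectangle \emph{crossings} to a bipartite 2D point-in-interval stabbing problem by associating each rectangle with two ``markers,'' after which a standard two-level segment tree gives the desired bound. For each $r\in\mathcal R$ with center $(x_r,y_r)$, define the \emph{horizontal marker} $H_r=r_x\times\{y_r\}$ and the \emph{vertical marker} $V_r=\{x_r\}\times r_y$.

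The key reduction is that every crossing pair $(a,b)\in G_X(\mathcal R)$ makes at least one of $H_a\cap V_b$ or $H_b\cap V_a$ nonempty: WLOG $a$ is short-wide, so $a_y\subsetneq b_y$ and $a_x\supsetneq b_x$; then $x_b\in b_x\subset a_x$ and $y_a\in a_y\subset b_y$, so $H_a$ meets $V_b$ at $(x_b,y_a)$. Conversely, whenever $H_a\cap V_b\neq\emptyset$, the meeting point $(x_b,y_a)$ belongs to both $a$ (as $x_b\in a_x$ from $H_a$ and $y_a\in a_y$ by definition) and $b$, so $(a,b)\in G(\mathcal R)$. Hence it suffices to cover the bipartite graph $\{(a,b):H_a\cap V_b\neq\emptyset\}$ by bicliques inside $G(\mathcal R)$; and a pair is in this graph iff $x_b\in a_x$ and $y_a\in b_y$, a conjunction of two independent 1D point-in-interval stabbings.

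For the biclique cover I would use a two-level segment tree. The primary tree $T_x$ on $x$-coordinates stores each $H_a$ at the $O(\log n)$ canonical $x$-nodes of $a_x$ and each $V_b$ at the $O(\log n)$ nodes on the root-to-leaf path of $x_b$; let $\mathcal H(w),\mathcal V(w)$ denote these sets at node $w$. At every primary node $w$, the secondary tree $T_y^w$ on $y$-coordinates stores each $V_b\in\mathcal V(w)$ at the canonical $y$-nodes of $b_y$, and each $H_a\in\mathcal H(w)$ at the path nodes of $y_a$. At each canonical 2D node $(w,v)$, the rectangles whose horizontal markers are stored at $(w,v)$ and those whose vertical markers are stored at $(w,v)$ form a biclique in $G(\mathcal R)$: by construction every pair satisfies $x_b\in I_w\subset a_x$ and $y_a\in I_v\subset b_y$, so $H_a\cap V_b\neq\emptyset$ and the pair is an edge of $G(\mathcal R)$. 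Furthermore, every marker intersection is captured at some $(w,v)$.

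Each rectangle contributes only $O(\log^2 n)$ canonical 2D entries ($O(\log n)$ primary times $O(\log n)$ secondary), so the total biclique weight is $O(n\log^2 n)$. The crucial step is the marker reduction: it converts what looks like a 4D dominance relation (interval containment in both $x$ and $y$) into the conjunction of two independent 1D stabbings, after which the two-level segment tree analysis is entirely standard.
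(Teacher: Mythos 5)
Your proof is correct, and structurally it is the same construction as the paper's: a two-level canonical decomposition of the $x$- and $y$-coordinates producing a biclique cover in which each rectangle participates $O(\log^2 n)$ times. The difference is in how you phrase the cover condition, and it buys you something real. The paper chooses a \emph{pair-dependent} integer witness $s\in a\cap b$, builds grids $\ell=I_x\times I_y$ from canonical streaks of a single recursion tree, and defines $\mathcal{A}(\ell)$ by the mixed condition ``$\ell_x$ is canonical for $r_x$ and $r_y\cap\ell_y\neq\emptyset$.'' You instead attach to each rectangle a \emph{fixed} reference point (the center), observe that a crossing $a\cap b$ is witnessed by $H_a\cap V_b\neq\emptyset$ or $H_b\cap V_a\neq\emptyset$, and note that $H_a\cap V_b\neq\emptyset$ is exactly the conjunction of two independent 1D point-in-interval stabbings $x_b\in a_x$ and $y_a\in b_y$, which a nested segment tree handles with each marker living at $O(\log n)\times O(\log n)$ canonical 2D nodes. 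Fixing the reference point per rectangle (rather than per pair) makes the $O(\log^2 n)$ participation count immediate, because your secondary-level membership rule is a point-on-path condition, whereas the paper's rule ``$r_y\cap\ell_y\neq\emptyset$'' is an interval-overlap condition that, read literally, allows a single rectangle to appear in far more than $O(\log^2 n)$ sets $\mathcal{A}(\ell)$ when $r_y$ is long; the intended bound really does require the point-stabbing version your formulation makes explicit. Both routes lead to the same $O(n\log^2 n)$ bound and essentially the same family of bicliques; yours is the tighter and more self-contained presentation.
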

\begin{proof}
We may assume w.l.o.g.\ that $n=2^k$ for some $k\in \N$. Note that the intersections between rectangles depend only on the order of the $x$- and $y$-coordinates of their vertices. By applying an order-preserving homeomorphism on the $x$-coordinates (resp., $y$-coordinates), we may assume w.l.o.g.\ that the vertices of all rectangles in $\mathcal{R}$ have integer coordinates in $\{0,1,\ldots , 2n-1\}$. Let $\overline{i,j}:=\{i,i+1,\ldots , j\}$ denote a set consecutive integers, called a \emph{streak} of length $j-i+1$. 

We construct an edge-cover of the graph $G_X(\mathcal{R})$ of crossing intersections with bicliques in $G(\mathcal{R})$ such that each rectangle in $\mathcal{R}$ is part of $O(k^2)=O(\log^2 n)$ bicliques. Partition the streak $\overline{0,2n-1}$ recursively into streaks of length $2^i$ for $i=k+1,k,k-1,\ldots ,0$. Let $T$ denote the recursion tree (i.e., the root of $T$ corresponds to the streak $\overline{0,2n-1}$, its children to $\overline{0,n-1}$ and $\overline{n,2n-1}$, and the leaves to singletons); see Fig.~\ref{fig:hierarchy}. For a nonroot streak $S\in T$, let $p(S)$ be the parent streak of $S$. 
Note that the Cartesian product of two streaks of length $2^i$ and $2^j$, resp., is a $2^i\times 2^j$ section of the integer lattice (\emph{grid}, for short). For all $i,j\in \{1,\ldots , k+1\}$, let $\mathcal{L}_{i,j}$ be the set of $2^i\times 2^j$ grids formed by the Cartesian product of two streaks in $T$ of length $2^i$ and $2^j$, respectively. 

\begin{figure}[htbp]
 \centering
 \includegraphics[width=.5\textwidth]{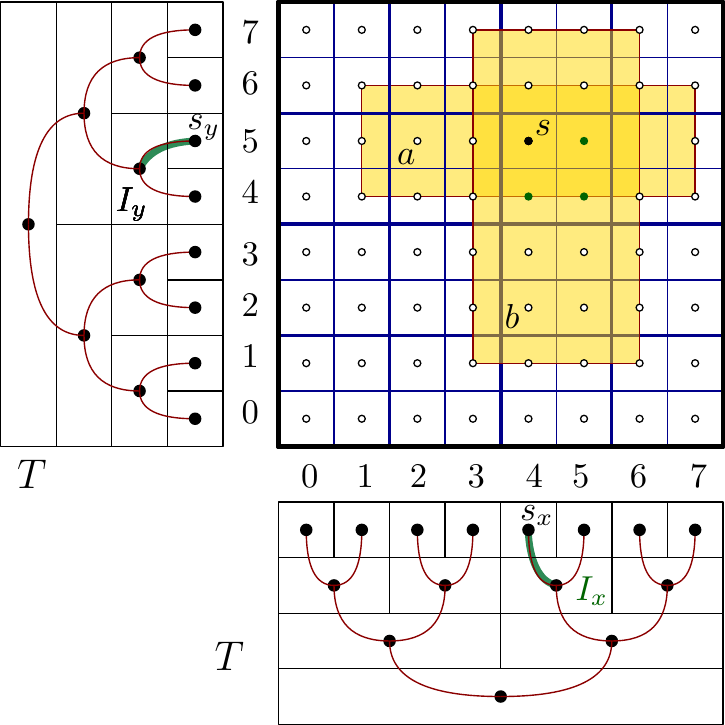}
 \caption{The recursion tree $T$ on the $x$- and $y$-coordinates for $n=4$. Rectangles $a$ and $b$, with integer coordinates, have a crossing intersection. An integer point $s\in I_x\times I_y\subset a\cap b$.}
    \label{fig:hierarchy}
\end{figure}

For each $2^i\times 2^j$ grid $\ell\in \mathcal{L}_{i,j}$, let $\mathcal{A}(\ell)$ be the set of rectangles $r\in \mathcal{R}$ such that $\ell_x\subset r_x$ but $p(\ell_x)\not\subset r_x$; and $r_y\cap \ell_y\neq \emptyset$.
Similarly, let $\mathcal{B}(\ell)$ be the set of rectangles $r\in \mathcal{R}$ 
such that $\ell_y\subset r_y$ but $p(\ell_y)\not\subset r_y$; and $r_x\cap\ell_x\neq\emptyset$.
Then every rectangle in $\mathcal{A}(\ell)$ (resp., $\mathcal{B}(\ell)$) traverse the convex hull of the grid $\ell$ horizontally (resp., vertically), and so the bipartite intersection graph between $\mathcal{A}(\ell)$ and $\mathcal{B}(\ell)$ is a biclique $G(\ell)\subset G(\mathcal{R}$).  Note, however, that the intersections between rectangles in $\mathcal{A}(\ell)$ and $\mathcal{B}(\ell)$ need not be crosssing intersection, and so $G(\ell)$ is not necessarily a subgraph of $G_X(\mathcal{R})$.

\smallskip\noindent\textbf{Correctness.} We show that the bicliques $G(\ell)$, for all $\ell\in \mathcal{L}_{i,j}$ and $i,j\in \{1,\ldots ,k+1\}$, cover all edges of  $G_X(\mathcal{R})$. Assume that $a,b\in \mathcal{R}$ have a crossing intersection.
We may assume w.l.o.g.\ that $b_x\subset a_x$ and $a_y\subset b_y$.
Let $s\in a\cap b$ be an intersection point with integer coordinates. 
Then $s_x$ and $s_y$ correspond to two leaves in the recursion tree $T$. 
Following a leaf-to-root path in $T$, we find the maximal streak $I_x$ 
such that $s_x\in I_x\subset a_x$. 
Similarly, let $I_y\in T$ be the maximal streak in $T$ such that $s_y\in I_y\subset b_y$. 
Then for the grid $\ell =I_x\times I_y$, we have $a\in \mathcal{A}(\ell)$ 
as $I_x\subset a_x$ but $a_x\not\subset p(I_x)$; and $b\in \mathcal{B}(\ell)$ as $I_y\subset b_y$ but $b_y\not\subset p(I_y)$. Now $ab$ is an edge in the biclique $G(\ell)$. 

\smallskip\noindent\textbf{Weight analysis.} 
For every $r\in \mathcal{R}$, $r_x$ and $r_y$ are each partitioned into at most $2k\leq O(\log n)$ maximal intervals in $T$. Consequently, each $r\in \mathcal{R}$ participates in $O(\log^2 n)$ bicliques. The sum of orders of all bicliques is $O(n \log^2 n)$. 
\end{proof}

\begin{corollary}\label{cor:crossing}
For a set $\mathcal{R}$ of $n$ axis-aligned rectangles in the plane, the graph $G_X(\mathcal{R})$ admits a 3-hop spanner with $O(n\log^2 n)$ edges.
\end{corollary}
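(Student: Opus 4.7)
The plan is straightforward: combine the edge biclique cover from Lemma~\ref{lem:crossing} with the bistar construction from Observation~\ref{obs:bistar}. Specifically, Lemma~\ref{lem:crossing} produces a collection $\mathcal{F}$ of bicliques $\{K_{s_i,t_i}\}_i$, each of which is a subgraph of $G(\mathcal{R})$, whose edge sets cover $G_X(\mathcal{R})$ and whose total weight satisfies $W=\sum_i (s_i+t_i)=O(n\log^2 n)$. For each $K_{s_i,t_i}\in \mathcal{F}$, I would replace it by a bistar $H_i$ with $s_i+t_i-1$ edges, as guaranteed by Observation~\ref{obs:bistar}. The spanner would then be $H_X=\bigcup_i H_i$.

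For the stretch analysis, suppose $ab$ is an edge of $G_X(\mathcal{R})$, i.e., $a,b\in \mathcal{R}$ have a crossing intersection. By Lemma~\ref{lem:crossing}, the edge $ab$ is covered by some biclique $K_{s_i,t_i}\in \mathcal{F}$, meaning that $a$ and $b$ lie in opposite partite classes of $K_{s_i,t_i}$. By Observation~\ref{obs:bistar}, the bistar $H_i$ contains an $ab$-path of length at most 3. Since $H_i\subset H_X$, this path lies in $H_X$, and it uses only edges of $G(\mathcal{R})$, so $H_X$ is a subgraph of $G(\mathcal{R})$ that 3-hop-spans every edge of $G_X(\mathcal{R})$.

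For the size, each bistar $H_i$ has $s_i+t_i-1\leq s_i+t_i$ edges, so
\[
|E(H_X)| \leq \sum_i (s_i+t_i) \;=\; W \;=\; O(n\log^2 n).
\]
This gives the desired bound. There is no real obstacle here: the corollary is essentially a direct composition of the two preceding results, with Lemma~\ref{lem:crossing} doing the geometric heavy lifting (building the weighted biclique cover of total weight $O(n\log^2 n)$) and Observation~\ref{obs:bistar} doing the graph-theoretic replacement of each biclique by a linear-size 3-hop structure. The only minor point worth flagging explicitly in the writeup is that the resulting spanner sits inside $G(\mathcal{R})$ rather than inside $G_X(\mathcal{R})$, which is consistent with the usage in Section~\ref{ssec:corner} and with the intended final step of taking $H=\widehat{H}\cup H_X$ as a 3-hop spanner for the full intersection graph $G(\mathcal{R})$.
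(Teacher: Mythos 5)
Your proof is correct and takes essentially the same route as the paper: apply Lemma~\ref{lem:crossing} to obtain a weighted edge biclique cover and replace each biclique by a bistar via Observation~\ref{obs:bistar}. The remark that the resulting spanner lies in $G(\mathcal{R})$ rather than $G_X(\mathcal{R})$ is a sound and worthwhile clarification consistent with the paper's usage.
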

\begin{proof}
Let $\mathcal{K}$ be an edge biclique cover of $G_X(\mathcal{R})$ of total weight $O(n\log^2 n)$. For every biclique $K_{s,t}\in \mathcal{K}$, Observation~\ref{obs:bistar} yields a 3-hop spanner $H(K_{s,t})$ with $s+t-1$ edges. Consequently, $H=\bigcup_{K\in \mathcal{K}} H(K)$ is a 3-hop spanner for $G_X(\mathcal{R})$ with at most $\sum_{K\in \mathcal{K}} |E(K)|\leq O(n\log^2 n)$ edges.
\end{proof}

The combination of Lemma~\ref{lem:corner} and Corollary~\ref{cor:crossing} immediately implies the following. 

\begin{theorem}\label{thm:rectangles}
The intersection graph of a set of $n$ axis-aligned rectangles in the plane admits a 3-hop spanner with $O(n\log^2 n)$ edges.
\end{theorem}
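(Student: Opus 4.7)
The plan is to decompose the intersection graph $G(\mathcal{R})$ into two edge-disjoint subgraphs, $G(\mathcal{R})=G_C(\mathcal{R})\cup G_X(\mathcal{R})$, according to whether an intersection is a corner intersection (one rectangle contains a corner of the other) or a crossing intersection (neither contains a corner of the other). Every edge of $G(\mathcal{R})$ falls into exactly one class. It suffices to build a 3-hop spanner separately for each class and take their union: if $H_C$ is a 3-hop spanner for $G_C(\mathcal{R})$ and $H_X$ is a 3-hop spanner for $G_X(\mathcal{R})$, then $H_C\cup H_X$ is a 3-hop spanner for $G(\mathcal{R})$.

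For the corner intersections, I would repeat the horizontal-slab divide-and-conquer of Section~\ref{ssec:squares} on $\mathcal{R}$, recursing on representative points in $\rep(\mathcal{R})$ so that the tree has $O(\log n)$ levels. In each slab $P$, I use Corollary~\ref{cor_2d_line} to build 2-hop spanners $H_\bottomset(P), H_\topset(P), H'_\centerset(P)$ of linear size, partition $\slab(P)\cap \bigcup\across(P)$ into intervals $I_k$ covered by elements $c_k\in \across(P)$ exactly as before, and then capture any corner intersection between $s\in \inside(P)$ and $c_k$ by connecting $s$ to $c_k$ whenever a corner of $s$ lies in $I_k$. Crucially, every rectangle has only four corners, so this adds at most $4|\inside(P)|$ edges per slab, and fatness is not needed here. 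Combining this with the slab-counting lemmata (Lemma~\ref{lem:square_subproblem_counts} and Corollary~\ref{cor:size}), which carry over verbatim, yields Lemma~\ref{lem:corner}: a subgraph $\widehat{H}\subset G(\mathcal{R})$ with $O(n\log n)$ edges that provides a 2-hop path for every corner intersection.

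For the crossing intersections, I would exploit the biclique structure of $G_X(\mathcal{R})$. After normalizing the coordinates so that all rectangle vertices lie in $\{0,1,\ldots,2n-1\}$, I build a dyadic hierarchy $T$ of streaks on each axis and consider all grids $\ell=I_x\times I_y$ where $I_x,I_y$ are streaks of $T$. For each such $\ell$, let $\mathcal{A}(\ell)$ be the set of rectangles whose $x$-extent contains $I_x$ but not its parent and whose $y$-extent meets $I_y$, and symmetrically $\mathcal{B}(\ell)$; the induced bipartite subgraph is a biclique in $G(\mathcal{R})$. For any crossing intersection $ab$ with, say, $b_x\subset a_x$ and $a_y\subset b_y$, I pick any integer point $s\in a\cap b$, then walk up the tree to find the maximal $T$-streaks $I_x\ni s_x$ contained in $a_x$ and $I_y\ni s_y$ contained in $b_y$; these are unique by maximality, and the edge $ab$ is captured by the biclique on $\ell=I_x\times I_y$. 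Since each rectangle's extent decomposes into $O(\log n)$ maximal dyadic streaks on each axis, each rectangle participates in $O(\log^2 n)$ bicliques, so the total biclique weight is $O(n\log^2 n)$. Applying Observation~\ref{obs:bistar} to each biclique gives a 3-hop bistar spanner with linear size, yielding Corollary~\ref{cor:crossing}.

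The main obstacle is verifying that the dyadic biclique cover actually catches every crossing intersection: the definitions of $\mathcal{A}(\ell)$ and $\mathcal{B}(\ell)$ require that the rectangle's extent strictly contains the streak but not its parent, so one must argue that the canonical pair $(I_x,I_y)$ obtained from the intersection point $s$ simultaneously satisfies both maximality conditions. Once this witness-lemma is established, combining Lemma~\ref{lem:corner} and Corollary~\ref{cor:crossing} yields a 3-hop spanner of total size $O(n\log n)+O(n\log^2 n)=O(n\log^2 n)$, which is the statement of Theorem~\ref{thm:rectangles}.
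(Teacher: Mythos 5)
Your proposal follows exactly the same two-part decomposition and the same constructions as the paper: split $G(\mathcal{R})$ into corner intersections $G_C$ and crossing intersections $G_X$; handle $G_C$ by rerunning the slab divide-and-conquer of Section~\ref{ssec:squares} and, for each $s\in\inside(P)$, adding one edge per corner of $s$ (four total, so no fatness needed, giving $O(n\log n)$ edges via Lemmata~\ref{lem:square_subproblem_counts}--\ref{lem:correctness_axis}); handle $G_X$ by a dyadic biclique cover on normalized integer coordinates plus Observation~\ref{obs:bistar}, giving $O(n\log^2 n)$ edges. The one ``obstacle'' you flag is in fact immediate: the maximality conditions for $I_x$ and $I_y$ are verified independently, since $I_x$ is chosen as the highest ancestor of $s_x$ in $T$ contained in $a_x$ (so $p(I_x)\not\subset a_x$), and $I_y$ is chosen analogously along the ancestor path of $s_y$ inside $b_y$; the remaining membership conditions $a_y\cap I_y\neq\emptyset$ and $b_x\cap I_x\neq\emptyset$ hold because $s\in a\cap b$ lies in $I_x\times I_y$, so the grid $\ell=I_x\times I_y$ witnesses $a\in\mathcal{A}(\ell)$ and $b\in\mathcal{B}(\ell)$ with no coupling between the two coordinates. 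This is precisely the argument in the paper's proof of Lemma~\ref{lem:crossing}, so your proposal is correct and takes essentially the same route.
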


\section{Lower Bound Constructions}
\label{sec:lb}

In this section, we define a class of $n$-vertex graphs for which any 2-hop spanner has at least $\Omega(n \log n / \log \log n)$ edges, then show that these graphs can be realized as the intersection graphs of $n$ homothets of any convex body in the plane.

\paragraph{Intuition for the Graph Construction.}
Before a formal description, we give a rough sketch of the construction. We construct a graph $F$ on $n$ vertices where every vertex is associated with an integer in $\{0, \ldots, m\}$, where $m\approx n/\log n$. We show that every 2-hop spanner for $F$ must contain a near-linear number of edges between the vertices labeled $\{0, \ldots, \frac{m}{2}-1\}$ with the vertices labelled $\{\frac{m}{2}, \ldots, m\}$.
Similarly, we show that every 2-hop spanner must contain a large number of edges between the vertices $\{0, \ldots, \frac{m}{4}-1\}$ with the vertices $\{\frac{m}{4}, \ldots, \frac{m}{2}-1\}$, or to connect the vertices $\{\frac{m}{2}, \ldots, \frac{3m}{4}-1\}$ with the vertices $\{\frac{3m}{4}, \ldots, m\}$.
In general, at level $i$, we equipartition $\{0, \ldots, m\}$ into $2^i$ parts, and show that every 2-hop spanner for $F$ must contain $\Omega(i n/\log n)$ edges between consecutive parts.
If the sets of edges required on each level were pairwise disjoint, then summation over all $\log m=\Theta(\log n)$ levels would yield an $\Omega(n \log n)$ lower bound for the spanner size. Unfortunately, the edges required on different levels need not be disjoint. However, we can show that the edges required on any two levels \emph{which are separated by at least $\log \log n$ levels} are disjoint. This gives us an $\Omega(n \log n / \log \log n)$ lower bound on the spanner size.

\paragraph{Construction of $F(h)$.}
For every $h\in \N$, we construct a graph $F(h)$, which contains $2^h (h+1)$ vertices. The vertex set is $V=\{0, \hdots, 2^h - 1\}\times \{0, \hdots, h\}$. For each vertex $v=(x,i)$, we call $i$ the \emph{level} of $v$.
For each level $i \in \{0, \hdots, h\}$, partition the vertices with level less than or equal to $i$ into $2^i$ groups of $2^{h-i} (i+1)$ consecutive vertices based on their $x$-coordinates. In particular, for every level $i \in \{0, \hdots, h\}$, let 
$\{0,\ldots ,2^h-1\}=\bigcup_{k=1}^{2^i-1} X_{k,i}$, where $X_{k, i} = \{2^{h-i} k, 2^{h-i} k + 1, \hdots, 2^{h-i}(k+1) - 1\}$. This defines groups 
$V_{k,i} = X_{k,i}\times \{0,\ldots , i\}$ for $k \in \{0,\ldots , 2^i-1\}$.
Notice that $(x, \ell)\in V_{k, i}$ for $k=\lfloor x/2^i \rfloor$ and $i \geq \ell$.
Finally, add edges to the graph $F(h)$ such that every group $V_{k, i}$ is a clique; see Fig.~\ref{fig:cliques}.

\begin{figure}[htbp]
 \centering
 \includegraphics[width=.5\textwidth]{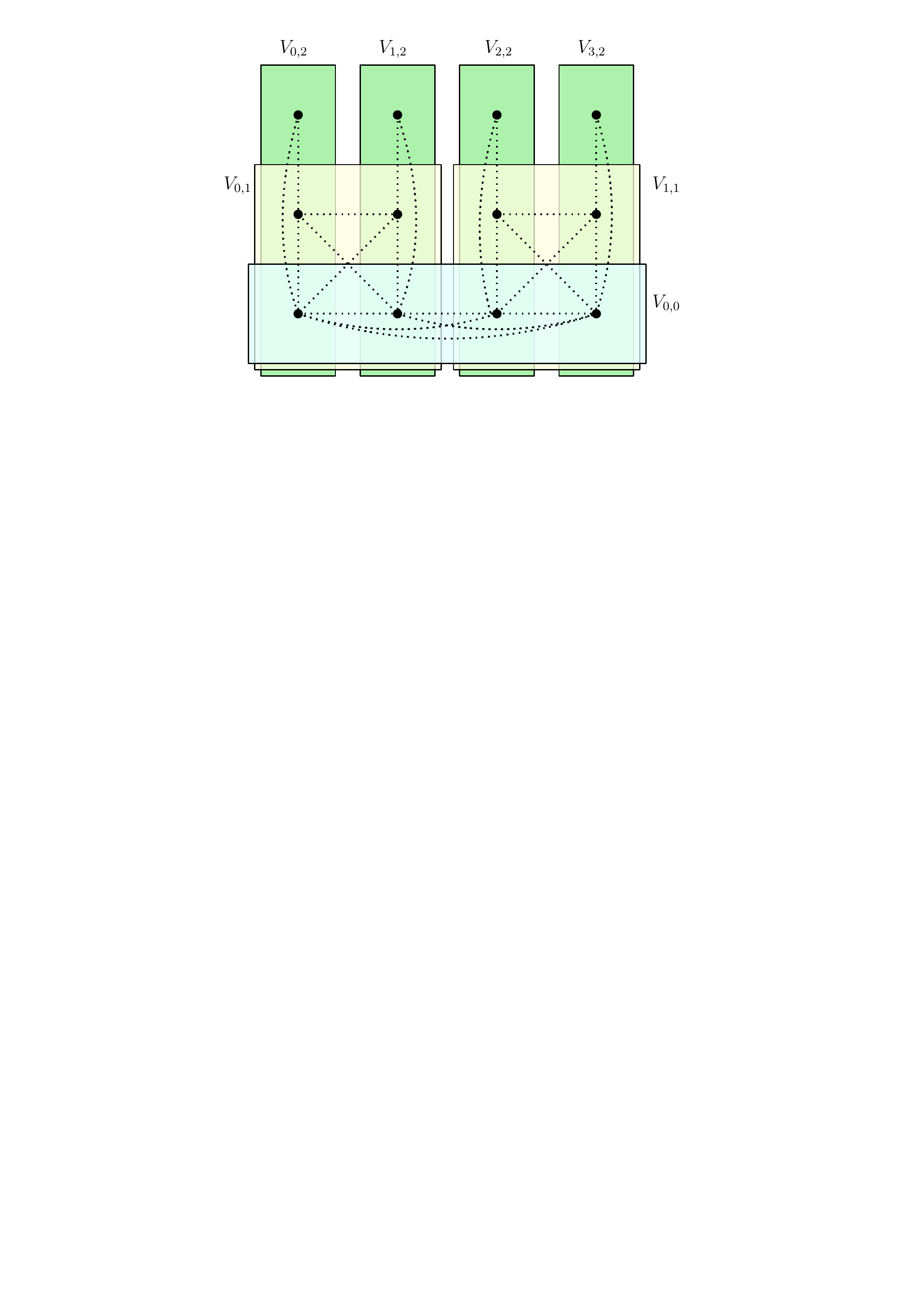}
 \caption{Graph $F(2)$ grouped by cliques $V_{k, i}$}
    \label{fig:cliques}
\end{figure}
We show that any 2-hop spanner for $F(h)$ with $n=2^h (h+1)$ vertices has $\Omega(2^h h^2 / \log h) = \Omega(n \log n / \log \log n)$ edges. We do this by finding a set of cliques $V_{k,i}$ in $F(h)$ such that each $V_{k,i}$ forces at least $\Omega(|V_{k,i}|)$ distinct edges to appear in the spanner. The proof is inspired by the following lemma:


\begin{lemma}\label{lem:bipartite_spanner}
Suppose that the vertex set of the complete graph $K_{2n}$ is partitioned into two sets $A$ and $B$ each of size $n$, and call edges between $A$ and $B$ are bichromatic. Then every 2-hop spanner of $K_{2n}$ contains at least $n$ bichromatic edges.
\end{lemma}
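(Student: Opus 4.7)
My approach is a short case analysis on whether some vertex in $A$ is \emph{bichromatically isolated} in the spanner $H$, i.e., whether some $a\in A$ has no bichromatic edge incident to it. Let $A_0 = \{a\in A : a \text{ has no bichromatic edge in } H\}$. The two cases are complementary and each case will directly produce $n$ distinct bichromatic edges.

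\textbf{Case 1: $A_0 = \emptyset$.} Then every vertex $a\in A$ is incident to at least one bichromatic edge of $H$. Since any bichromatic edge has exactly one endpoint in $A$, distinct vertices of $A$ contribute distinct edges, hence the number of bichromatic edges is at least $|A|=n$.

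\textbf{Case 2: $A_0 \neq \emptyset$.} Fix any $a\in A_0$. For every $b\in B$, the edge $ab$ belongs to $K_{2n}$, so $H$ must contain an $ab$-path of length at most $2$. The edge $ab$ itself cannot lie in $H$ (it would be a bichromatic edge at $a$, contradicting $a\in A_0$), so the path has the form $a\!-\!x_b\!-\!b$ for some intermediate vertex $x_b$. The edge $ax_b$ is in $H$, so it cannot be bichromatic, forcing $x_b\in A$; and then $x_b b$ is bichromatic. The edges $\{x_b b : b\in B\}$ are pairwise distinct because their $B$-endpoints are distinct, yielding at least $|B|=n$ bichromatic edges in $H$.

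\textbf{Main difficulty.} There is essentially no obstacle: the lemma is the ``base case'' intuition that makes the more elaborate construction of $F(h)$ work, and the whole proof is a one-line dichotomy. The only thing to get right is to observe that when $a\in A_0$, the connecting vertex on any witnessing $2$-hop path from $a$ to $b$ is forced to lie in $A$ (not $B$), which is what prevents us from collapsing the two cases.
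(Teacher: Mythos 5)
Your proof is correct and follows exactly the same dichotomy as the paper: either every vertex of $A$ has an incident bichromatic edge (giving $n$ edges directly), or some $a\in A$ has none, forcing the middle vertices of the witnessing $2$-hop paths to lie in $A$ and producing $n$ distinct bichromatic edges to $B$. Nothing to change.
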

\begin{proof}
Let $S$ be a 2-hop spanner for $K_{2n}$. If  every vertex in $A$ is incident to a bichromatic edge in $S$, then clearly $S$ contains at least $|A| = n$ bichromatic edges. Otherwise, there is some $a \in A$ that has no neighbor in $B$ in $S$. For every $b \in B$, $S$ contains a 2-hop path between $a$ and $b$, that is, a path $(a, a_b, b)$ for some $a_b \in A$. The edges $a_b b$ are bichromatic and distinct for all $b \in B$, so $S$ contains at least $|B| = n$ bichromatic edges.
\end{proof}

Notice that every  $X_{k, i}$, for $i<h$, can be written as $X_{2k, i + 1} \cup X_{2k+1, i + 1}$. Accordingly, we can partition $V_{k,i}$ into two sets of equal size, $V_{k,i} = A \cup B$, where
\[
A = \Big(X_{2k, i + 1}\times \{0, \ldots,  i\}\Big) \quad \text{and} \quad B = \Big(X_{2k+1, i + 1}\times \{0, \ldots, i\}\Big). 
\]
Define a \emph{$V_{k, i}$-bichromatic} edge to be an edge that crosses between $A$ and $B$. 

\begin{lemma}
\label{lem:bichromatic_disjoint}
The set of all $V_{k, i}$-bichromatic edges and the set of all $V_{k', i'}$-bichromatic edges are disjoint unless $k = k'$ and $i' = i$.
\end{lemma}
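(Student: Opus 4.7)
The plan is to prove this by a direct binary-encoding argument on the $x$-coordinates. Identify each $x\in\{0,\ldots,2^h-1\}$ with its $h$-bit binary representation. Then $X_{k,i}=\{2^{h-i}k,\ldots,2^{h-i}(k+1)-1\}$ is precisely the set of integers whose $i$ most significant bits encode the number $k$. Consequently the split $X_{k,i}=X_{2k,i+1}\cup X_{2k+1,i+1}$ corresponds to reading the $(i{+}1)$-st most significant bit: $X_{2k,i+1}$ are those for which this bit is $0$, and $X_{2k+1,i+1}$ are those for which it is $1$. Hence a $V_{k,i}$-bichromatic edge is exactly a pair $\{(x_1,\ell_1),(x_2,\ell_2)\}$ whose first $i$ bits all agree and equal $k$, whose $(i{+}1)$-st bits differ, and with $\ell_1,\ell_2\le i$.

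Next I would take any edge $e=\{(x_1,\ell_1),(x_2,\ell_2)\}$ that is simultaneously $V_{k,i}$- and $V_{k',i'}$-bichromatic. Because $X_{2k,i+1}$ and $X_{2k+1,i+1}$ are disjoint, $x_1\neq x_2$, so there is a well-defined most significant bit position $j\in\{1,\ldots,h\}$ at which the binary representations of $x_1$ and $x_2$ first disagree. Applying the characterization above to $(k,i)$ forces $j=i+1$, and $k$ must equal the integer encoded by the common top $j-1$ bits of $x_1$ and $x_2$. Applying the same characterization to $(k',i')$ forces $j=i'+1$ and $k'$ to equal the very same integer determined by those top $j-1$ bits. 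Therefore $i=i'$ and $k=k'$, which proves the lemma.

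I do not expect a real obstacle here: the proof reduces to one canonical quantity, namely the most significant disagreeing bit of $x_1$ and $x_2$, which simultaneously pins down both $i$ and $k$. The only care needed is bookkeeping the index convention (bits counted from the most significant position, $k$ ranging in $\{0,\ldots,2^i-1\}$) and the trivial check that $x_1=x_2$ cannot occur, which is immediate since the two sets $X_{2k,i+1}$ and $X_{2k+1,i+1}$ are disjoint. The levels $\ell_1,\ell_2$ play no role in determining $(k,i)$ beyond the automatic constraint $\max(\ell_1,\ell_2)\le i=j-1$.
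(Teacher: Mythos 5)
Your proof is correct and rests on the same underlying dyadic-interval structure as the paper's, just phrased via binary encoding of the $x$-coordinates. The paper's proof splits into cases ($i=i'$ versus $i<i'$, using that $X_{k',i'}$ is nested in one half of $X_{k,i}$ or disjoint), whereas your observation that the most significant disagreeing bit of $x_1,x_2$ simultaneously determines $i$ (as $j-1$) and $k$ (as the common prefix) collapses the argument into a single, case-free step — a cleaner packaging of the same idea.
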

\begin{proof}
If $i = i'$, then the claim follows from the fact that $V_{k, i}$ and $V_{k', i}$ are disjoint. Otherwise, assume  w.l.o.g.\ that $i < i'$. Notice that either $X_{k', i'}$ is contained within $X_{2k, i + 1}$ or $X_{2k+1, i + 1}$, or it is disjoint from both. The $V_{k, i}$-bichromatic edges cross from $X_{2k, i + 1}$ to $X_{2k+1, i + 1}$ while $V_{k', i'}$-bichromatic edges stay within $X_{k', i'}$, so the edge sets are disjoint.
\end{proof}

Notice that for every clique $V_{i,k}$, Lemma~\ref{lem:bipartite_spanner} implies that a 2-hop spanner on the complete graph induced by $V_{i,k}$ contains at least $|A|=|B|=|V_{i,k}|/2$ edges that are $V_{i,k}$-bichromatic.
Naively, we might hope that for every clique $V_{i,k}$, a 2-hop spanner \emph{on $F(h)$} contains at least $|V_{i,k}|/2$ edges that are $V_{i, k}$-bichromatic. Then, using Lemma~\ref{lem:bichromatic_disjoint}, we could conclude that the 2-hop spanner contains many edges. 
However, this hope is not true:
Lemma~\ref{lem:bipartite_spanner} does not apply when we allow the spanner to be a subgraph of $F(h)$ rather than requiring the spanner to be a subgraph of the complete graph induced by $V_{i,k}$. This is because a 2-hop spanner on $F(h)$ may connect two vertices in $V_{i,k}$ via a vertex outside of $V_{i,k}$; see Figure~\ref{fig:bichromatic_problem}.
To deal with this, we introduce a more technical lemma.

\begin{figure}[htbp]
    \centering
    \includegraphics[width=.5\textwidth]{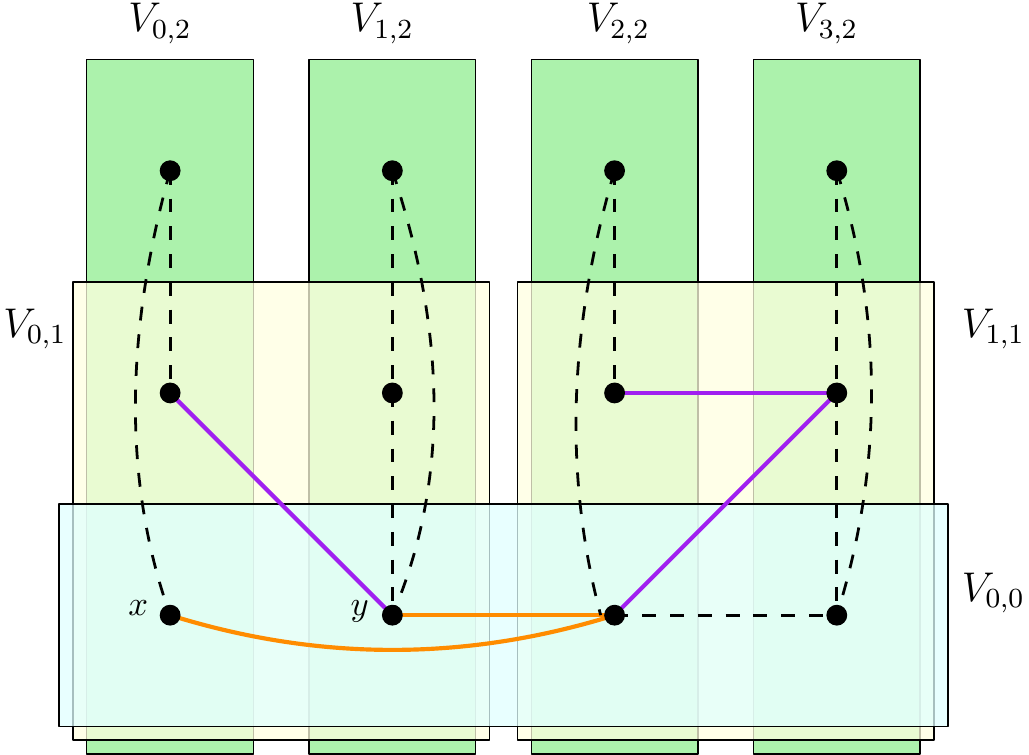}
    \caption{A 2-hop spanner for $F(2)$. Orange edges are $V_{0,0}$-bichromatic. Purple edges are $V_{0,1}$-bichromatic or $V_{1,1}$-bichromatic. There is only 1 edge that is $V_{0,1}$-bichromatic (rather than the 2 bichromatic edges one might expect from Lemma~\ref{lem:bipartite_spanner}), because vertices $x$ and $y$ in $V_{0,1}$ are connected via a 2-hop path using a vertex outside of $V_{0,1}$.}
    \label{fig:bichromatic_problem}
\end{figure}

\begin{lemma}
\label{lem:consecutive_bichromatic}
Let $S$ be a 2-hop spanner of $F(h)$. For all levels $i \in \{0, \ldots, h - \lceil\log h\rceil\}$ and all $k$, the spanner $S$ contains at least $|V_{k, i}|/8$ edges that are $V_{k', i'}$-bichromatic for some $i' \in \{i, \ldots, i + \lceil \log h \rceil - 1\}$. We say that these edges are \emph{forced} by $V_{k, i}$.
\end{lemma}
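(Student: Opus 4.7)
The plan is to classify the 2-hop paths $S$ provides between pairs inside $V_{k,i}$, extract bichromatic edges from them, and sum contributions across sub-cliques to reach the $|V_{k,i}|/8$ threshold. Set $A = X_{2k,i+1}\times\{0,\ldots,i\}$ and $B = X_{2k+1,i+1}\times\{0,\ldots,i\}$, each of size $|V_{k,i}|/2$. For every pair $(a,b) \in A\times B$, $S$ contains a 2-hop path $(a,w,b)$. The first step exploits the ultrametric structure of the nested partition: since the groups $X_{k',i'}$ form a laminar family, the quantity $i_0(x,y)$ (the largest $i'$ such that $x,y$ share a level-$i'$ group) obeys the ultrametric inequality. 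Applied to the triple $(x_a,x_w,x_b)$ with $i_0(x_a,x_b)=i$, this forces $\ell_w \le i$ and yields a dichotomy on $w$: either $w \in V_{k,i}$, so at least one of the two path edges is $V_{k,i}$-bichromatic at level exactly $i$; or $w \notin V_{k,i}$, in which case necessarily $\ell_a,\ell_b < i$ and both path edges are bichromatic at some common lower level $\beta < i$. In particular, any pair with $\ell_a = i$ or $\ell_b = i$ is in the first alternative.

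Next I would emulate the proof of Lemma~\ref{lem:bipartite_spanner} on the top slice $A_i = X_{2k,i+1}\times\{i\}$: if some $a_0 \in A_i$ has no $B$-neighbor in $S$, then for every $b \in B$ the middle $w$ of the path $(a_0,w,b)$ must lie in $A$, producing $|B|=|V_{k,i}|/2$ distinct edges $wb$ all $V_{k,i}$-bichromatic at level $i$, already well above $|V_{k,i}|/8$. The symmetric case for $B_i$ is identical. Otherwise every top-level vertex in $A_i \cup B_i$ has an opposite-side neighbor, giving a baseline of at least $\max(|A_i|,|B_i|) = 2^{h-i-1}$ level-$i$ $V_{k,i}$-bichromatic edges, which alone finishes the proof for $i \le 3$.

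For $i \ge 4$ I would supplement the baseline by running the same argument on every sub-clique $V_{k',i'}$ with $X_{k',i'} \subset X_{k,i}$ and $i' \in \{i+1,\ldots,i+\lceil\log h\rceil-1\}$. By Lemma~\ref{lem:bichromatic_disjoint}, $V_{k',i'}$-bichromatic edges for distinct pairs $(k',i')$ are disjoint, so the contributions simply add. If any such sub-clique satisfies the strong alternative, it alone supplies $|V_{k',i'}|/2$ edges, combining with the level-$i$ baseline to finish. Otherwise every sub-clique is in the weak alternative, contributing at least $2^{h-i'-1}$ edges at level $i'$; summing over the $2^{i'-i}$ sub-cliques at each level and the $\lceil\log h\rceil$ levels in the range gives roughly $\lceil\log h\rceil \cdot 2^{h-i-1}$ edges. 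The main obstacle is the tight accounting in this all-weak regime for large $i$: one must push the per-sub-clique count beyond $2^{h-i'-1}$ by also using the $B'_{i'}$-side neighbors and carefully tracking overlap with the $A'_{i'}$-side. Verifying that the resulting bound meets $|V_{k,i}|/8$ for every $i \le h - \lceil\log h\rceil$ -- which is precisely what the range length $\lceil\log h\rceil$ is calibrated to ensure -- is the technical heart of the proof.
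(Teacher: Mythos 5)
Your setup matches the paper's: the split $V_{k,i}=A\cup B$ with $A=X_{2k,i+1}\times\{0,\ldots,i\}$, the focus on the top slice $\widehat A=X_{2k,i+1}\times\{i\}$, and the crucial observation that any 2-hop path from a level-$i$ vertex of $A$ to a vertex of $B$ has its middle vertex inside $V_{k,i}$ (your laminar/ultrametric derivation is a valid way to see this). But from there the proof diverges and has a genuine gap that the sketched repair cannot close.

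The problem is that aggregating ``baselines'' over sub-cliques cannot reach the target. In the all-weak regime you correctly get $2^{h-i'-1}$ edges per sub-clique and $2^{h-i-1}$ per level, for a grand total of roughly $\lceil\log h\rceil\cdot 2^{h-i-1}$. But the target is $|V_{k,i}|/8=2^{h-i-3}(i+1)$, so you would need $4\lceil\log h\rceil\geq i+1$, which fails for all $i\gg\log h$ (and $i$ ranges up to $h-\lceil\log h\rceil$). Your suggested fix --- ``also using the $B'_{i'}$-side neighbors'' --- at most doubles the per-level count to $2^{h-i}$, still giving only $O(\log h\cdot 2^{h-i})$, a factor of roughly $i/\log h$ short. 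There is no room to improve further with this counting, because the baselines are edges incident to level-$i'$ vertices and there are only $2^{h-i}$ such vertices per level inside $X_{k,i}$. Additionally, your ``strong alternative'' claim is also wrong in isolation: a sub-clique $V_{k',i'}$ that triggers the strong alternative contributes only $|V_{k',i'}|/2=2^{h-i'-1}(i'+1)$ edges, and $2^{h-i'-1}(i'+1)+2^{h-i-1}\geq 2^{h-i-3}(i+1)$ fails for $i'\geq i+3$ once $i$ is moderately large, so ``combining with the level-$i$ baseline'' does not finish either.

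The paper avoids this by never recursing into sub-cliques. It replaces your zero/nonzero split with a quantitative one: either $S$ already has $\geq|B|/4$ many $V_{k,i}$-bichromatic edges (done), or some $\hat a\in\widehat A$ has fewer than $(i+1)/4$ $B$-neighbors. In the latter case, all 2-hop paths from $\hat a$ to the $\geq 3|B|/4$ vertices of $B$ not already incident to a $V_{k,i}$-bichromatic edge must route through one of those $<(i+1)/4$ neighbors in $B$. These few stars, all centered in $B$, therefore contain at least $3|B|/4-(i+1)/4$ edges \emph{entirely within $B$}. It then bounds how many of those edges can lie inside a single level-$j$ clique piece of $B$ (where $j=i+\lceil\log h\rceil$): each star can contribute at most $|B|/2^{\lceil\log h\rceil-1}-1$ such edges, so across the $<(i+1)/4$ stars the intra-piece total is less than $|B|/2$, leaving at least $|B|/4$ edges that must cross a level-$i'$ boundary for some $i'<j$. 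This counting of \emph{intra-$B$ star edges} is the missing ingredient in your sketch --- it is what makes the $\lceil\log h\rceil$ range work for all $i$ up to $h-\lceil\log h\rceil$, not just for $i=O(\log h)$.
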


\begin{proof}
Let $V_{k, i} = A \cup B$ as above. Let $\widehat{A} = X_{2k, i+1} \times \{i\}$ denote the set of vertices in $A$ with the highest level; see Fig.~\ref{fig:bichromatic}. Notice that $|V_{k, i}| = 2^{h-i}(i+1)$ and $|\widehat{A}| = 2^{h-i-1}$.

For each vertex $\hat{a} \in \widehat{A}$, the spanner $S$ contains a path of length at most 2 between $\hat{a}$ and each vertex $b \in B$. We claim that all such paths are in the clique induced by $V_{k, i}$: Indeed, consider a path $(\hat{a}, v, b)$. The fact that $v$ is adjacent to $\hat{a}$ implies that the $x$-coordinate of $v$ is in $X_{k, i}$; this, together with the fact that $v$ is adjacent to $b$, implies that $v$ has level at most $i$.

If $S$ contains at least $|V_{k, i}|/8 = |B|/4$ edges that are $V_{k, i}$-bichromatic, then the proof is complete. Otherwise, let $\hat{a}$ be a vertex in $\widehat{A}$ incident to the fewest $V_{k, i}$-bichromatic edges. In this case, $\hat{a}$ is incident to fewer than $|V_{k, i}|/(8 |\widehat{A}|) = (i+1)/4$ edges that are $V_{k, i}$-bichromatic. Consider the paths of length 1 or 2 from $\hat{a}$ to all vertices $b \in B$ in the spanner $S$. If $b$ is not incident to any $V_{k, i}$-bichromatic edges, which holds for at least $3|B|/4$ vertices in $B$, then $S$ contains a path $(\hat{a}, v, b)$ for some $v \in B$ adjacent to $\hat{a}$. Since $\hat{a}$ has fewer than $(i+1)/4$ neighbors in $B$, then at least $3|B|/4$ vertices in $B$ are covered by at most $(i+1)/4$ stars centered in $B$. In total, the stars contain at least $3|B|/4 - (i+1)/4$ edges.

We claim that at least half of the edges in these stars are $V_{k', i'}$-bichromatic for some $i' \in \{i, \ldots, i + \lceil \log h \rceil - 1\}$. To show this, let $j = i + \lceil \log h \rceil$. The cliques $V_{k', j}$ partition $B$ into $2^{\lceil \log h \rceil - 1}$ subsets, each of size $|B|/2^{\lceil \log h \rceil - 1}$. If one of the stars is centered at $b\in B \cap V_{k', j}$, then at most $|B|/2^{\lceil \log h \rceil - 1} - 1$ of its edges are within $B \cap V_{k', j}$. Over $(i + 1)/4$ stars, the number of edges within the same clique is at most
\[
\left(\frac{|B|}{2^{\lceil \log h \rceil - 1} } - 1\right) \frac{(i + 1)}{4} 
= \frac{|B|(i+1)}{2^{\lceil \log h \rceil + 1} }- \frac{i+1}{4} 
< \frac{|B|}{2} - \frac{i+1}{4}.
\]
Consequently, the remaining $|B|/4$ or more edges of the stars are $V_{k', i'}$-bichromatic for some $i' \in \{i, \ldots, i + \lceil \log h \rceil - 1\}$.

\begin{figure}[htbp]
 \centering
 \includegraphics[width=.6\textwidth]{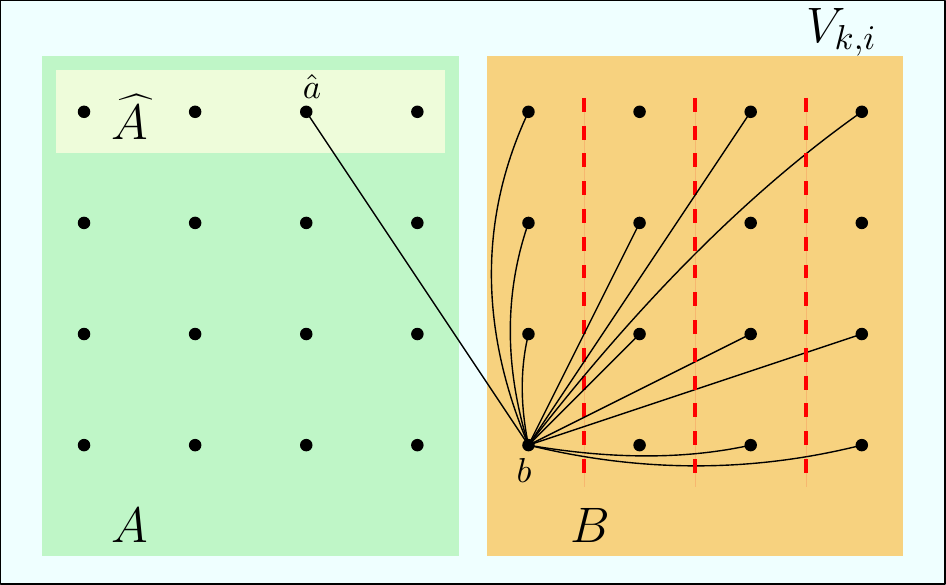}
 \caption{A clique $V_{k, i}$ and the associated sets $A$, $B$, and $\widehat{A}$. Dashed red lines indicate the cliques at level $j = i + \lceil \log h \rceil$. The vertex $\hat{a} \in \widehat{A}$ has at most $(i+1)/4$ bichromatic edges (here, 1 bichromatic edge). The star centered at $b$ is adjacent to at least $3|B|/4$ vertices in $B$, and it has at least $|B|/4$ edges that do not stay within the same clique at level $j$.}
    \label{fig:bichromatic}
\end{figure}

\end{proof}

\begin{lemma}
\label{lem:abstract_lb}
Let $S$ be a 2-hop spanner of $F(h)$. Then $S$ has $n = 2^h(h+1)$ vertices and at least $\Omega(n \log n / \log \log n)$ edges.
\end{lemma}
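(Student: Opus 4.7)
The plan is to combine Lemmata~\ref{lem:consecutive_bichromatic} and~\ref{lem:bichromatic_disjoint} by summing forced-edge lower bounds over a carefully spaced family of levels, and then to convert the final count back to $n$.

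First, I would set $L := \lceil \log h \rceil$ and choose levels $i_j := jL$ for $j = 0, 1, \ldots, J$ with $J := \lfloor h/L \rfloor - 1$, so that $i_J \le h - L$ and Lemma~\ref{lem:consecutive_bichromatic} is applicable at every chosen level. For each such $i_j$ and each $k \in \{0, \ldots, 2^{i_j} - 1\}$, the lemma supplies a set $E_{k,j} \subseteq E(S)$ of at least $|V_{k,i_j}|/8 = 2^{h-i_j}(i_j+1)/8$ edges of the spanner, each of which is $V_{k',i'}$-bichromatic for some $i' \in \{i_j, \ldots, i_j + L - 1\}$.

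The main step will be to argue that the sets $E_{k,j}$ are pairwise disjoint. Within a fixed level $j$, all edges of $E_{k,j}$ lie inside the clique $V_{k,i_j}$ (the proof of Lemma~\ref{lem:consecutive_bichromatic} already shows that the relevant length-$2$ paths stay inside $V_{k,i_j}$), and for different $k$ the cliques $V_{k,i_j}$ are vertex-disjoint, so the corresponding $E_{k,j}$'s are disjoint. Across levels $j < j'$, the spacing $i_{j'} - i_j \ge L$ makes the windows $\{i_j,\ldots,i_j+L-1\}$ and $\{i_{j'},\ldots,i_{j'}+L-1\}$ disjoint, and Lemma~\ref{lem:bichromatic_disjoint} says that each edge is $V_{k'',i''}$-bichromatic for at most one pair $(k'',i'')$. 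Hence no edge can lie in both $E_{k,j}$ and $E_{k',j'}$.

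Adding up the disjoint contributions will then give
\[
|E(S)| \;\ge\; \sum_{j=0}^{J}\sum_{k=0}^{2^{i_j}-1} \frac{|V_{k,i_j}|}{8}
\;=\; \frac{2^h}{8}\sum_{j=0}^{J}(jL+1)
\;=\; \Omega\!\left(\frac{2^h h^2}{L}\right).
\]
From $n = 2^h(h+1)$ I read off $h = \Theta(\log n)$, hence $L = \Theta(\log\log n)$ and $2^h = \Theta(n/\log n)$; substituting turns the right-hand side into $\Omega(n\log n/\log\log n)$, as required.

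The hard part will be the cross-level disjointness: it is the only reason for spacing the chosen levels exactly $L = \lceil \log h \rceil$ apart, and it is the single source of the $\log\log n$ loss in the final bound. The within-level disjointness reuses the locality observation already present in the proof of Lemma~\ref{lem:consecutive_bichromatic}, and the remaining arithmetic is routine.
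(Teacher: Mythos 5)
Your proof is correct and takes essentially the same approach as the paper: apply Lemma~\ref{lem:consecutive_bichromatic} at levels spaced $\lceil \log h\rceil$ apart, use vertex-disjointness of same-level cliques for within-level disjointness of the forced edge sets, use Lemma~\ref{lem:bichromatic_disjoint} and the disjointness of the length-$\lceil\log h\rceil$ windows for cross-level disjointness, and sum. The arithmetic $\Omega(2^h h^2/\lceil\log h\rceil) = \Omega(n\log n/\log\log n)$ matches as well.
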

\begin{proof}

On every $\lceil \log h\rceil$-th level, consider the set $C_i$ of all cliques $V_{k, i}$ at level $i$. By Lemma~\ref{lem:consecutive_bichromatic}, each $V_{k, i}$ forces at least $|V_{k, i}|/8$ edges to appear in the spanner $S$, where each edge is $V_{k', i'}$-bichromatic for some $k'$ and $i'$ with $i' \in \{i, \ldots, i + \lceil \log h \rceil - 1\}$. For a fixed $C_i$, the cliques $V_{k,i}$ in $C_i$ are pairwise disjoint, so the sets of edges that they force in $S$ are pairwise disjoint. Thus, $C_i$ forces $\sum_k |V_{k, i}|/8$ edges in the spanner $S$. At level $i$, the graph $F(h)$ contains $2^i$ cliques $V_{k,i}$, each of size $2^{h-1}(i+1)$. Thus, $C_i$ forces $2^{h-3}(i+1)$ edges in the spanner $S$. Each of these edges is $V_{k', i'}$-bichromatic for some level $i' \in \{i, \ldots, i + \lceil \log h \rceil - 1\}$.

By Lemma~\ref{lem:bichromatic_disjoint}, the edges forced by $C_i$ and $C_{i'}$ are disjoint so long as $|i -i' | \ge \lceil \log h \rceil$. 
%
%
%
%
%
Summation over every $\lceil \log h \rceil$-th level $i$ shows that $S$ contains at least $\Omega \left(\sum_{i = 0}^{h/\log h}  2^{h-3} (i \log h +1)\right) = \Omega(2^{h} h^2 /\log h)= \Omega(n \log n / \log \log n)$ edges. 
\end{proof}

\paragraph{Geometric Realization of $F(h)$.} We realize $F(h)$ as the intersection graph of a set $S(h)$ of homothets of any convex body for all  $h\in \N$. The construction is recursive. To construct $S(h+1)$, we form two copies of $S(h)$ to realize vertices in the first $h$ levels, then add homothets to realize the vertices in level $h+1$.

\begin{lemma}\label{lem:realized_lb}
For every convex body $C \subset \mathbb{R}^2$ and every $h \in \mathbb{N}$, the $n$-vertex graph $F(h)$ can be realized as the intersection graph of a set $S(h)$ of $n$ homothets of $C$.
\end{lemma}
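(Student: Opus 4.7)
The plan is to proceed by induction on $h$. The base case $h=0$ is immediate: $F(0)$ is a single vertex, realized by one copy of $C$.

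For the inductive step, I will exploit the following structural decomposition of $F(h+1)$. Writing $L=\{0,\ldots,2^h-1\}\times\{1,\ldots,h+1\}$ and $R=\{2^h,\ldots,2^{h+1}-1\}\times\{1,\ldots,h+1\}$, a short computation gives $F(h+1)|_L\cong F(h+1)|_R\cong F(h)$ via the relabeling $(x,\ell)\mapsto(x\bmod 2^h,\ell-1)$, and $F(h+1)$ has no edges between $L$ and $R$ (since any $x<2^h\le x'$ differ in the most significant bit of the $(h+1)$-bit representation, so the number of matching leading bits is $0<\max(\ell,\ell')$). Applying the inductive hypothesis twice, I would place two scaled and translated copies $S_L,S_R$ of $S(h)$ in disjoint regions of the plane; these realize $F(h+1)|_{L\cup R}$.

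Next I would add $2^{h+1}$ new homothets $\{u_x\}_{x=0}^{2^{h+1}-1}$ representing the level-$0$ vertices of $F(h+1)$. A direct calculation shows that for $x<2^h$, the neighbors in $L$ of the new vertex $(x,0)$ correspond exactly---under the $L\cong F(h)$ relabeling---to the \emph{closed} neighborhood $N_{F(h)}[(x,0)]$ in $F(h)$ (the extra matching leading bit from the shared $0$ in the $(h+1)$-bit representation exactly cancels the $-1$ shift on levels). Thus the $u_x$'s must satisfy: (a) any two $u_x,u_{x'}$ intersect, forming the top-level clique $V_{0,0}$; (b) for $x<2^h$, $u_x$ meets the homothets of $S_L$ representing $N_{F(h)}[(x,0)]$ and no others (symmetrically for $x\ge 2^h$ with $S_R$); (c) $u_x$ does not meet the opposite half.

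To make (a)--(c) work, I would strengthen the inductive hypothesis by maintaining, alongside $S(h)$, a family of \emph{attachment points} $\{a_x\}_{x=0}^{2^h-1}$, where $a_x$ lies in the interior of exactly those homothets of $S(h)$ that represent vertices of $N_{F(h)}[(x,0)]$. Each new $u_x$ would then be a homothet of $C$ containing both the attachment point $a_x$ (in $S_L$ or $S_R$) and a common meeting point $m$ shared by all $u_x$'s. Property (a) is then immediate since $m\in u_x$ for every $x$, and (b) follows because $a_x\in C_v$ iff $v\in N_{F(h)}[(x,0)]$.

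The main obstacle is controlling unintended intersections: since $u_x$ stretches from the tiny $S_L$-region all the way out to the distant point $m$, it could a priori meet homothets of $S_L$ outside $N_{F(h)}[(x,0)]$, or even reach into $S_R$. I plan to handle this by rescaling $S_L,S_R$ to be arbitrarily small (the inductive statement permits any scaling of $S(h)$), so that $u_x\cap(\text{region of }S_L)$ lies in an arbitrarily small neighborhood of $a_x$, where by the invariant only the designated homothets appear; and by placing $S_L,S_R$ on opposite sides of $m$, so that any homothet of $C$ through $a_x\in S_L$ and $m$ is too short to reach the $S_R$-region. Finally, to propagate the invariant, I would define new attachment points for the level-$0$ vertices of $F(h+1)$ as suitable points near each $a_x$ lying in the intersection of the corresponding closed-neighborhood homothets (including all the $u_{x'}$'s, which is automatic via $m$), completing the induction.
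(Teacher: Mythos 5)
Your graph-theoretic decomposition is correct: $F(h+1)$ restricted to levels $\{1,\ldots,h+1\}$ does split into two disjoint copies of $F(h)$ (left and right halves of the $x$-coordinates), and the new level-$0$ vertex $(x,0)$ is adjacent precisely to the clique $V_{0,0}$ and, within the appropriate copy, to the images of $N_{F(h)}[(x,0)]$. The trouble is that this is the \emph{hard} way to split $F(h+1)$ geometrically. You are adding a clique of $2^{h+1}$ new homothets, each of which must hit a large, complicated subset of the old realization ($N_{F(h)}[(x,0)]$ has $2^{h+1}-1$ vertices, spanning all levels). The paper instead peels off the \emph{top} level $h$: those vertices form an independent set, and each new homothet needs to meet only the $h$ bodies sharing its $x$-coordinate. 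That lets the paper make the new homothets tiny, placed at distinct tangent points $p_a$ on a common supporting line, with no further interaction to control. Your choice forces the new homothets to be large (they must all contain a common point $m$), which is exactly the source of trouble.

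The concrete gap is the rescaling claim: ``rescaling $S_L$ to be arbitrarily small\dots so that $u_x\cap(\text{region of }S_L)$ lies in an arbitrarily small neighborhood of $a_x$.'' This is scale-invariant, so shrinking $S_L$ does not help. Since $u_x$ must contain both $a_x$ and the far-away point $m$, it is a homothet of $C$ whose boundary passes through (or near) $a_x$; locally near $a_x$ it therefore looks like a half-plane, and $u_x\cap B_L$ occupies a fixed \emph{fraction} of the $S_L$-region no matter how small $B_L$ is. Concretely, with $C$ a disk, $m$ at the origin, and $a_x=(1,0)$, the disk through $m$ and $a_x$ has vertical tangent at $a_x$, so it contains essentially the left half of any small ball around $a_x$; any non-designated homothet sitting in that half of $S_L$ will be hit. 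To rule this out you would need a half-plane through $a_x$, facing away from $m$, that contains every non-designated homothet of $S_L$—simultaneously for all $x$—and your invariant does not provide that. Without it, properties (b) and the second half of (c) in your plan are unproven, and I see no easy fix while keeping the level-$0$/clique decomposition. The paper's level-$h$/independent-set decomposition with tangent-point bookkeeping is the way to go.
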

\begin{proof}
Let $C$ be a convex body (i.e., a compact convex set with nonempty interior) in the plane. Let $o\in \partial C$ be an extremal point of $C$. Then there exists a (tangent) line $L$ such that $C\cap L=\{o\}$. Assume w.l.o.g.\ that $o$ is the origin, $L$ is the $x$-axis, and $C$ lies in the upper halfplane. We construct $S(h)$ recursively from $S(h-1)$. Let $s(a, i) \in S(h)$ denote the homothet that represents the vertex $(a, i) \in F(h)$. We maintain two invariants: (I1) for every $a \in \{0, \ldots, 2^h - 1\}$, there is some point $p_a$ on the $x$-axis such that every $s(a,i) \in S(h)$ is tangent to the $x$-axis and intersects the $x$-axis exactly at $p_a$; and (I2) whenever $s_1, s_2 \in S(h)$ intersect, $s_1 \cap s_2$ has nonempty interior.

\smallskip\noindent\textbf{Construction.} 
$F(0)$ has a single vertex $(0,0)$ and no edges, so it can be represented as the single convex body $C$ with the extremal point $o$ on the $x$-axis.

We now construct $S(h)$ from $S(h-1)$; see Fig.~\ref{fig:realization}. By invariant (I2), there is some $\varepsilon > 0$ such that for every $s \in S(h-1)$, translating $s$ by $\varepsilon$ in any direction does not change the intersection graph. Duplicate $S(h-1)$ to form the sets $S_1(h-1)$ and $S_2(h-1)$, and translate every homothet in $S_2(h-1)$ by $\varepsilon$ in the positive $x$ direction. Let $S'(h) = S_1(h-1) \cup S_2(h-1)$. Notice that for every clique $V_{k, i}$ in $F(h-1)$, there is a corresponding clique in the intersection graph of $S'(h)$ that contains both the vertices in the clique $V_{k, i}$ realized by $S_1(h-1)$ \emph{and} the vertices in the clique $V_{k, i}$ realized by $S_2(h-1)$.

The $x$-axis is still tangent to all $s \in S'(h)$, and there are $2^{h}$ distinct points on the $x$-axis that intersect some $s \in S'(h)$. Each point $p_a$ has a neighborhood that intersects only the homothets in $S'(h)$ that contain $p_a$, since every convex body that does not contain $p_a$ has a positive distance from $p_a$ by compactness. For each $p_a$, add a homothetic copy $C_a$ of $C$ completely contained within that neighborhood, tangent to the $x$-axis and containing $p_a$. Let $S(h)$ be the union of $S'(h)$ and these $C_a$.

\begin{figure}[htbp]
 \centering
 \includegraphics[width=.7\textwidth]{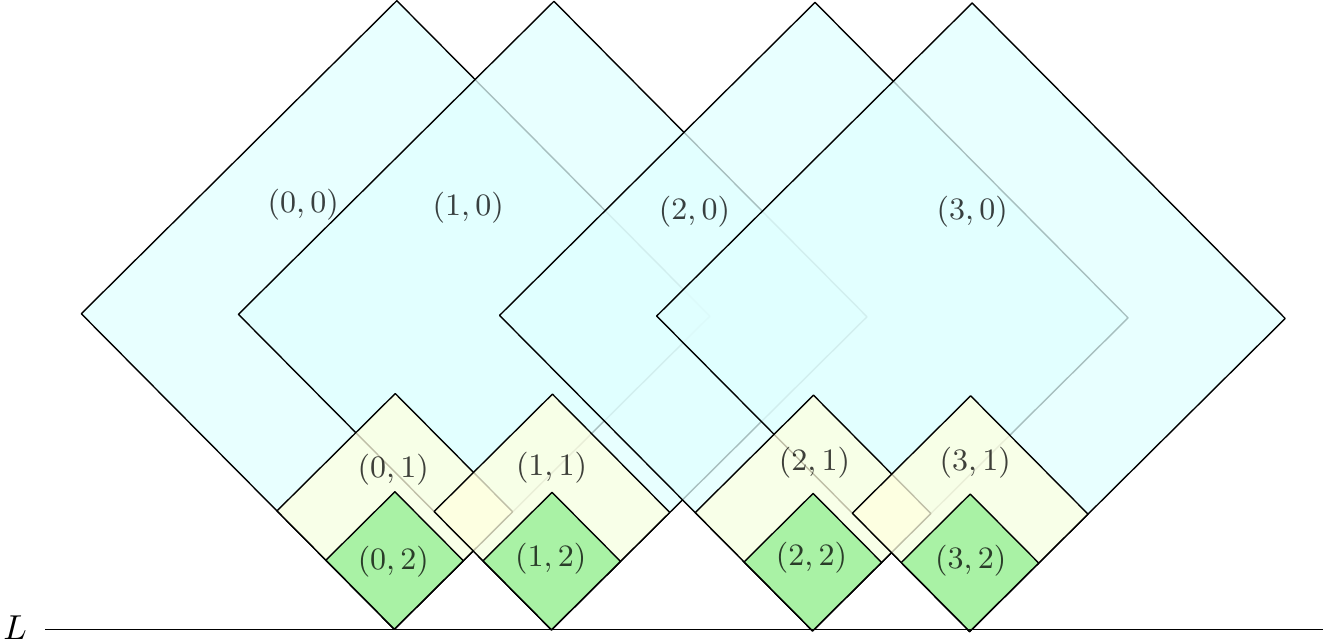}
 \caption{Realization of $F(2)$ with homothets of squares, all tangent to $L$. Each homothet is labeled with the vertex of $F(2)$ that it represents.}
    \label{fig:realization}
\end{figure}

\smallskip\noindent\textbf{Correctness.} 
For $0\leq i < h$, let the homothet $s_{1}(a, i) \in S_1(h-1)$ represent $(2a, i)$ in $F(h)$, and let the homothet $s_{2}(a, i) \in S_2(h-1)$ represent $(2a + 1, i)$ in $F(h)$. Let the homothets $C_a$ represent $(a, h) \in F(h)$.

This correspondence implies that, for all $0\leq i < h$, vertices in the clique $V_{k, i}$ in $F(h)$ have been realized by homothets corresponding to a clique $V_{\lfloor k/2 \rfloor, i}$ in $S_1(h-1)$ or $S_2(h-1)$. 
By construction of $S(h)$, any two such homothets intersect. Similar reasoning applies in the opposite direction: any intersection between two homothets in $S'$ corresponds to an edge in some clique in $F(h)$. When $i = h$, notice that every $C_a$ intersects exactly the homothets in $S'(h)$ that intersect $p_a$, which by assumption were the homothets representing points with the same $x$-coordinate. Thus, any clique $V_{k, h}$ in $F(h)$ is represented in $S(h)$, and there are no edges involving $C_a$ that do not correspond to such a clique in $F(h)$.
\end{proof}

The previous two lemmata imply the following theorem.
\begin{theorem}\label{thm:lb}
For every convex body $C\subset \R^2$, there exists a set $S$ of $n$ homothets of $C$ such that every 2-hop spanner for the intersection graph of $S$ has $\Omega(n\log n / \log \log n)$ edges.
\end{theorem}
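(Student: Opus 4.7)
The plan is to simply combine the two preceding lemmas. By Lemma~\ref{lem:realized_lb}, for every $h \in \N$ and every convex body $C \subset \R^2$, there is a set $S(h)$ of $n_h = 2^h(h+1)$ homothets of $C$ whose intersection graph equals $F(h)$. By Lemma~\ref{lem:abstract_lb}, every 2-hop spanner of $F(h)$ has $\Omega(n_h \log n_h / \log \log n_h)$ edges. Taking $S = S(h)$ therefore proves the theorem whenever the parameter $n$ happens to equal $n_h$ for some $h$.

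To obtain the claim for every $n$, I would first pick $h$ to be the largest integer with $n_h = 2^h(h+1) \leq n$; note that $n_h \geq n/4$ for $n$ sufficiently large, since increasing $h$ by one at most roughly doubles $n_h$. Then $\log n_h = \Theta(\log n)$ and $\log \log n_h = \Theta(\log \log n)$, so any lower bound of $\Omega(n_h \log n_h / \log\log n_h)$ on the spanner of $S(h)$ immediately transfers to $\Omega(n \log n / \log \log n)$. I can then pad $S(h)$ with $n - n_h$ additional tiny homothets of $C$ placed in pairwise-disjoint empty regions (which exist by the compactness of each body in $S(h)$); these contribute isolated vertices to the intersection graph, so any 2-hop spanner for the padded set still contains, as a subgraph, a 2-hop spanner for $F(h)$ and hence inherits the lower bound.

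The only non-routine part of the argument is the lower bound in Lemma~\ref{lem:abstract_lb}, which is already established in the excerpt via the bichromatic-edge counting argument combined with Lemma~\ref{lem:bichromatic_disjoint}. The geometric realization in Lemma~\ref{lem:realized_lb} uses only the existence of a tangent line at an extremal point of $C$, so the construction is completely general. Consequently, no new ideas are needed beyond a careful statement of the combination and the small padding argument to handle arbitrary $n$.
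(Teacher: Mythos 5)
Your proof is correct and takes the same route as the paper: the theorem is stated in the paper as an immediate consequence of Lemma~\ref{lem:abstract_lb} and Lemma~\ref{lem:realized_lb}. Your additional padding argument to handle arbitrary $n$ (not just $n = 2^h(h+1)$) is a routine but welcome piece of rigor that the paper leaves implicit.
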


We note that it is not difficult to construct a 2-hop spanner for $F(h)$ with $O(n \log n/\log \log n)$ edges, so the lower bound is tight for this construction. It remains an open problem to close the gap between this lower bound and the $O(n \log n)$ upper bound (Theorem~\ref{thm:sq}) for the size of 2-hop spanners of axis-aligned squares in the plane. 

\section{Outlook}
\label{sec:con}

We have shown that every $n$-vertex UDG admits a 2-hop spanner with $O(n)$ edges; and this bound generalizes to the intersection graphs of translates of any convex body in the plane. The proof crucially relies on new results on the $\alpha$-hull of a planar point set. It remains an open problem whether these results generalize to higher dimensions, and whether unit ball graphs admit 2-hop spanners with $O_d(n)$ edges in $\R^d$ for any $d\geq 3$.

We proved that the intersection graph of $n$ axis-aligned squares in $\R^2$ admits a 2-hop spanner with $O(n\log n)$ edges, and this bound is tight up to a factor of $\log \log n$. 
Very recently, Chan and Huang~\cite{ChanH23} established the same upper bound for disks of arbitrary radii in $\R^2$, using shallow cuttings.
However, it is unclear whether the upper bound generalizes to 
other fat convex bodies in the plane
or to balls of arbitrary dimensions in higher dimensions.
For fat convex bodies and for axis-aligned rectangles in the plane, we obtained 3-hop spanners with $O(n\log n)$ and $O(n\log^2 n)$ edges, respectively. 
The latter bound was subsequently improved to $O(n\log n)$ by Chan and Huang~\cite{ChanH23}.
However, it is unclear whether the logarithmic factors are necessary. Do these intersection graphs admit weighted edge biclique covers of weight $O(n)$? In general, we do not even know whether a linear bound can be established for any constant stretch: Is there a constant $t\in \N$ for which every intersection graph of $n$ disks or rectangles admits $t$-hop spanner with $O(n)$ edges? 

Finally, it would be interesting to see other classes of intersection graphs (e.g., for strings or convex sets in $\R^2$, set systems with bounded VC-dimension or semi-algebraic sets in $\R^d$) for which the general bound of  $O(n^{1+1/\lceil t/2\rceil})$ edges for $t$-hop spanners can be improved.

\paragraph*{Acknowledgements.}
We thank Sujoy Bhore for helpful discussions on geometric intersections graphs. We are grateful to the reviewers of earlier versions of this paper for many insightful comments and suggestions. 

\bibliographystyle{plain}
\bibliography{main}

\end{document}